\documentclass[11pt,a4paper]{article}
\usepackage[a4paper]{geometry}
\usepackage{amssymb,latexsym,amsmath,amsfonts,amsthm}
\usepackage{amsmath}
\usepackage{amsthm}
\usepackage{amssymb}
\usepackage{amsfonts}
\usepackage{epsf}
\usepackage{graphicx}
\usepackage{epstopdf}
\usepackage{hyperref}
\usepackage{color}
\usepackage{cancel}
\usepackage{comment}
\usepackage{fullpage}
\topmargin       0.00in \oddsidemargin   0.00in \evensidemargin
0.00in \marginparwidth  0.00in \marginparsep    0.00in

\def\Im {\mathop{\rm Im}\nolimits}
\def\arg {\mathop{\rm arg}\nolimits}
\def\Re {\mathop{\rm Re}\nolimits}
\def\Ai {{\rm Ai}}

\newtheorem{thm}{Theorem}[section]

\newtheorem{lem}[thm]{Lemma}
\newtheorem{prop}[thm]{Proposition}

\newtheorem{asm}[thm]{Assumption}

\numberwithin{equation}{section}

\theoremstyle{remark}
\newtheorem{rem}[thm]{Remark}
\numberwithin{equation}{section}

\newcounter{comment}
\setcounter{comment}{1}

\textwidth        6.27in \textheight       8.99in

\pagestyle{plain}

\numberwithin{equation}{section}

\setcounter{section} {0}

\title{{Gaussian unitary ensembles with pole singularities near the soft edge and a system of coupled Painlev\'{e} XXXIV equations}}

\author{Dan Dai\footnotemark[1], ~Shuai-Xia Xu\footnotemark[2] ~and Lun Zhang\footnotemark[3]}

\footnotetext[1]{Department of Mathematics, City University of Hong Kong, Tat Chee
Avenue, Kowloon, Hong Kong. E-mail: \texttt{dandai@cityu.edu.hk}}
\footnotetext[2]{Institut Franco-Chinois de l'Energie Nucl\'{e}aire, Sun Yat-sen University,
Guangzhou 510275, China. E-mail: \texttt{xushx3@mail.sysu.edu.cn} (Corresponding author)}
\footnotetext[3] {School of Mathematical Sciences and Shanghai Key Laboratory for Contemporary Applied Mathematics, Fudan University, Shanghai 200433, China. E-mail: \texttt{lunzhang@fudan.edu.cn }}

\date{}

\begin{document}

\maketitle

\noindent \hrule width 6.27in\vskip .3cm

\noindent {\bf{Abstract }}
In this paper, we study the singularly perturbed Gaussian unitary ensembles defined by the measure
\begin{equation*}
\frac{1}{C_n}  e^{- n\textrm{tr}\, V(M;\lambda,\vec{t}\;)}dM,
\end{equation*}
over the space of $n \times n$ Hermitian matrices $M$, where $V(x;\lambda,\vec{t}\;):= 2x^2 + \sum_{k=1}^{2m}t_k(x-\lambda)^{-k}$ with
$\vec{t}= (t_1, t_2, \ldots, t_{2m})\in \mathbb{R}^{2m-1} \times (0,\infty)$,
in the multiple scaling limit where $\lambda\to 1$ 
together with $\vec{t} \to \vec{0}$ as $n\to \infty$ at appropriate related rates.
We obtain the asymptotics of the partition function, which is described explicitly in terms of an integral involving a smooth solution to a new coupled Painlev\'e system generalizing the Painlev\'e XXXIV equation. The large $n$ limit of the correlation kernel is also derived, which leads to a new universal class built out of the $\Psi$-function associated with the coupled Painlev\'e system.

\vskip .5cm
\noindent {\it{2010 Mathematics subject classification:}} 33E17; 34M55; 41A60

\vspace{.2in} \noindent {\it {Keywords: }}random matrices; singularly perturbed Gaussian unitary ensembles;
Riemann-Hilbert approach; asymptotics of the partition function; limiting correlation kernel; Painlev\'{e} type equations
\vskip .3cm

\noindent \hrule width 6.27in\vskip 1.3cm

\newpage

\tableofcontents

\section{Introduction}
In this paper, we are concerned with the following singularly perturbed Gaussian unitary ensembles (GUEs)
\begin{equation}\label{pGUE}
\frac{1}{C_n}  e^{- n\textrm{tr}\, V(M;\lambda,\vec{t}\;)}dM,
\end{equation}
defined on the space $\mathcal{H}_n$ of $n \times n$ Hermitian matrices $M=(M_{ij})_{1\leq i,j \leq n}$, where
\begin{align}
dM & =\prod_{i=1}^{n}dM_{ii} \prod_{1 \leq i < j \leq n} d\Re M_{ij}d\Im M_{ij},
\\
C_n & =C_n(\lambda; \vec{t}\;)  =\int_{\mathcal{H}_n}  e^{- n\textrm{tr} V(M;\lambda,\vec{t}\;)}dM
\label{partition function}
\end{align}
is the normalization constant,  and the potential
\begin{equation} \label{potential:V}
  V(x;\lambda,\vec{t}\;):= 2x^2 + \sum_{k=1}^{2m}t_k(x-\lambda)^{-k}, \quad \vec{t}\;= (t_1, t_2, \ldots, t_{2m})\in \mathbb{R}^{2m-1}\times (0,\infty),
\end{equation}
with $\lambda\in \mathbb{R}$ and $m\in \mathbb{N}$.

Since the ensembles are unitary invariant, we have (cf. \cite{deift,mehta}) that the $n$ eigenvalues $x_1,\ldots,x_n$ of $M$ from \eqref{pGUE} induce the following
probability density function
\begin{equation}\label{eq:jpdf}
\frac{1}{Z_n(\lambda)}\prod_{1\leq i < j \leq n}(x_j-x_i)^2\prod_{j=1}^n w(x_j),
\end{equation}
where
\begin{equation}\label{eq:weight}
w(x)=w(x;\lambda,\vec{t}\;)=e^{-n V(x;\lambda,\vec{t}\;)}
\end{equation}
and
\begin{equation}\label{def:partiationfunction}
Z_n(\lambda) = Z_n(\lambda;\vec{t}\;)=\int_{\mathbb{R} ^n}\prod_{1\leq i < j \leq n}(x_j-x_i)^2\prod_{j=1}^n w(x_j)dx_j
\end{equation}
is the partition function. It is also easily seen that the distribution \eqref{eq:jpdf} is determinantal with respect to a correlation kernel $K_{n}(x,y;\lambda,\vec{t}\;)$ which can be constructed out of the orthogonal polynomials associated with the weight function \eqref{eq:weight} over $\mathbb{R}$. Indeed, let $\pi_j(x)=\pi_{j}(x;\lambda,\vec{t}\;)$, $j=0,1,\ldots,$ be the family of monic polynomials of degree $j$ satisfying
\begin{equation}\label{eq:orthogonal}
\int_{\mathbb{R}}\pi_j(x)\pi_m(x)w(x)dx=\gamma_j(\lambda;\vec{t}\;)^{-2}\delta_{j,m}.
\end{equation}
Then, the correlation kernel can be written as
\begin{equation} \label{eq:correlation kernel}
K_{n}(x,y;\lambda,\vec{t}\;)=\gamma_{n-1}^{2} \sqrt{w(x) w(y)}
\frac {\pi_n(x)\pi_{n-1}(y)-\pi_{n-1}(x)\pi_n(y)}{x-y}.
\end{equation}

Obviously, if $\vec{t} = \vec{0}$ or $\lambda \to \infty$, the model \eqref{pGUE} reduces to the classical GUE.
A well-known fact is that the limiting eigenvalue distribution of GUE, or equivalently, the macroscopic limit of the correlation kernel is described by the Wigner's semicircle law whose density function is given by
\begin{equation} \label{semicircle-law}
  \rho_{\textrm{sc}}(x)= \frac{2}{\pi} \sqrt{1-x^2}, \qquad x \in [-1,1].
\end{equation}
The local statistics of the eigenvalues obeys the principle of universality. This means that, after proper centering and scaling, the large $n$ limit of the correlation kernel tends to the sine kernel for $x\in(-1,1)$ (bulk universality), and to the Airy kernel for $x=\pm 1$ (soft edge universality). If the vector 
$\vec{t}\;= (t_1, t_2, \ldots, t_{2m})\in \mathbb{R}^{2m-1}\times (0,\infty)$ is fixed, however, due to the presence of the pole singularity located at $x=\lambda$, the eigenvalues are pushed away from $\lambda$ and it is unlikely to find the eigenvalues near the pole as the matrix size $n$ becomes large. It would then be interesting to consider the case that $\vec{t} \to \vec{0}$ and $n \to \infty$ simultaneously at appropriate related rates. Although the limiting mean distribution of the eigenvalues remains unchanged in this regime, which is still given by the semicircle law \eqref{semicircle-law}, it is expected that some new phenomena
will occur near $x=\lambda$, which can be interpreted as a description of the phase transition between different edge behaviors.

Apart from the theoretical interest stated above, the studies of invariant random matrix models with singular potentials are also justified due to their frequent occurrences in mathematical physics, and significant progresses have been achieved over the past few years. Partially motivated by the distribution of zeros of the Riemann zeta function on the critical line (cf. Berry and Shukla \cite{BS08}), Mezzadri and Mo \cite{Mez:Mo}, Brightmore et al. \cite{Bri:Mezz:Mo} considered the following perturbed GUE, defined by the measure
\begin{equation}\label{model-Bri}
\frac{1}{\widehat{C}_n}e^{-n \textrm{tr}\, \left(\frac{1}{2}M^2-\frac{t}{M}+\frac{z^2}{2M^2}\right)}dM,
\end{equation}
over $\mathcal{H}_n$. Clearly, this corresponds to $\lambda =0$ and $m=1$ in \eqref{pGUE}. In the double scaling regime that both $z$ and $t$ are of order $O(n^{-1/2})$, a phase transition in the $(z,t)$-plane characterized by the Painlev\'e III equation was discovered in \cite{Bri:Mezz:Mo}. In the context related to an integrable quantum field theory at finite temperature, Chen and Its \cite{ci} considered a perturbed Laguerre unitary ensemble over the space $\mathcal{H}^+_n$ of $n \times n$ positive definite Hermitian matrices whose potential possesses a simple pole at the origin, which is defined by the measure
\begin{equation} \label{model-ChenIts}
  \frac{1}{\widetilde {C}_n} (\det M)^\alpha e^{- \mathrm{tr}\, \left(M+\frac{t}{M}\right)} dM, \quad \alpha > -1, \quad t>0.
\end{equation}
They studied the moment generating function when the matrix size $n$ is fixed. 
When the parameter $t=0$, the ensemble \eqref{model-Bri} is closely related to  \eqref{model-ChenIts} after a change of variables and the statistics in  \eqref{model-Bri} can be derived by using  the statistics in \eqref{model-ChenIts}  with $\alpha = \pm1/2$. The asymptotic studies of this model were later carried out by Xu et al. in \cite{XDZ2015,XDZ}. It comes out that in the double scaling limit where $t \to 0^+$ as $n \to \infty$, the hard edge scaling limit of the correlation kernel and the asymtotics of the partition function are all related to the Painlev\'e III equation. Particularly, the new limiting kernel provides a description of the transition between the classical Airy kernel and
the Bessel kernel. The results in \cite{XDZ2015,XDZ} were further extended by Atkin et al. in \cite{ACM}, where they considered the case that a fairly general class of potentials perturbed by a pole of order $k\in \mathbb{N}$. A hierarchy of higher order analogues to the Painlev\'e III equation was used to describe the double scaling limits of the partition function and the correlation kernel; see also our recent work \cite{Dai:Xu:Zhang2018} on the properties of the Fredholm determinant associated with this family of limiting kernels (known as the gap probability). Other problems related to the singularly perturbed ensembles include the field of spin glasses \cite{Akemann14},  eigenvalues of Wigner-Smith time-delay matrix in the context of quantum transport and electrical characteristics of chaotic cavities \cite{Bro:Fra:Bee,Mez:Sim,Tex:Maj},
and the bosonic replica field theories \cite{Osip:Kanz2007}.

It is worthwhile to point out the role played by the location of the pole. The eigenvalue distribution in a ``merging" regime and in an ``evaporating" regime have already been reported by Akemann et al. in \cite{Akemann14}, depending on whether the pole is located inside the bulk of the limiting spectrum or not. The known critical behavior of the eigenvalues near the pole, as just reviewed, corresponds to the choice that the pole is located inside the bulk or at the hard edge of the limiting spectrum. In both cases, the Painlev\'e III equation and its hierarchy are essential in describing the critical behaviors. It is then natural to raise the following question:
\begin{itemize}
  \item What is the local behavior of the eigenvalues near the soft edge if the pole approaches the soft edge as well?
\end{itemize}

In the present work, we intend to answer this question by establishing a multiple scaling limit of the correlation kernel for the perturbed GUEs \eqref{pGUE} in the sense that $\lambda$ approaches the soft edge together with $\vec{t} \to \vec{0}$ as $n\to \infty$. Moreover, we also obtain the asymptotics of the partition function \eqref{def:partiationfunction} under the same regime.  Instead of the Painlev\'e III equation (or its hierarchy), our results will be described by a new system of nonlinear ODEs generalizing the Painlev\'e XXXIV equation, as stated in what follows.

\section{Statement of results}
\label{sec:results}

\subsection*{A new coupled Painlev\'{e} XXXIV  system}
The asymptotics of the partition function involves a special solution to a coupled Painlev\'{e} system,
which is defined by  $2m+1$ ODEs indexed by $p=2m+2,2m+3,\ldots,4m+2$,
\begin{equation}\label{eq:b-k}
\sum_{k=p-2m-1}^{2m+1}(b_{p-k}b_k''-\frac {1}{2}b_k'b_{p-k}'-2(2b_1+s)b_{p-k}b_k-2b_{p-k}b_{k+1}) +  2 \tilde{\tau}_{p}=0,
\end{equation}
for $2m+1$ unknown functions $b_1=b_1(s),\ldots,b_{2m+1}=b_{2m+1}(s)$, where $\tilde{\tau}_{p}$ are real constants and
$$b_{k}=0, \qquad k>2m+1.$$

Note that if $m=0$, the system \eqref{eq:b-k} reduces to a single ODE
\begin{equation}\label{eq:p34}
b_1''=4b_1^2+2sb_1+\frac {b_1'^2-4\tilde{\tau}_2}{2b_1},
\end{equation}
which is known as the Painlev\'{e} XXXIV equation; cf. \cite{InceBook}.
The differential system \eqref{eq:b-k} can then be regarded as a generalization of the Painlev\'{e} XXXIV equation, and we call it a coupled Painlev\'{e} XXXIV system.

Our first result concerns the existence of a special solution to the above coupled Painlev\'{e} system.
\begin{thm}\label{thm:painleve solution}
Let $(\tau_1, \tau_2,\ldots,\tau_{2m})\in \mathbb{R}^{2m-1}\times (0, +\infty)$ be any fixed vector. Then, there exists pole-free solutions $b_1(s),\ldots,b_{2m+1}(s)$ to the coupled Painlev\'{e} XXXIV system \eqref{eq:b-k} for real values of $s$ with the parameters $\tilde \tau_p$ given by
\begin{equation}\label{def:tauptilde}
\tilde{\tau}_p=\sum_{k=p-2m-1}^{2m+1}(k-1)(p-k-1)\tau_{k-1}\tau_{p-k-1}, \qquad 2m+2\leq p \leq 4m+2.
\end{equation}
Moreover, as $s \to +\infty$, we have that
\begin{equation}\label{eq:b1asy}
b_1(s)=-\frac{\tau_1}{2s^{\frac32}}+O(s^{-5/2}).
\end{equation}
\end{thm}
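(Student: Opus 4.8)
\medskip
\noindent\textbf{Sketch of the proof of Theorem~\ref{thm:painleve solution}.}
The plan is to construct $(b_1,\dots,b_{2m+1})$ from a model Riemann--Hilbert (RH) problem for a $2\times 2$ matrix‑valued function $\Psi(\zeta)=\Psi(\zeta;s,\vec\tau\,)$, and to read off the existence of a pole‑free solution from the solvability of that RH problem for all real $s$. Concretely, I would pose an RH problem on a contour $\Gamma$ consisting of finitely many rays issuing from the origin, with piecewise constant jump matrices of unit determinant, with an essential singularity at $\zeta=0$ encoded by the data $\tau_1,\dots,\tau_{2m}$ (the scaling limit of the singular part $\sum_{k=1}^{2m} t_k(x-\lambda)^{-k}$ of the potential, so that locally $\Psi$ is conjugated by a factor of the form $\exp\!\big(\Theta(\zeta)\sigma_3\big)$ with $\Theta$ a Laurent polynomial in a fractional power of $\zeta$ whose coefficients are the $\tau_k$), and an Airy‑type normalization $\Psi(\zeta)=\big(I+\Psi_1(s)\zeta^{-1}+\Psi_2(s)\zeta^{-2}+\cdots\big)\zeta^{-\sigma_3/4}M_0\,e^{\theta(\zeta)\sigma_3}$ as $\zeta\to\infty$, where $\theta(\zeta)=\tfrac23\zeta^{3/2}+s\,\zeta^{1/2}$ and $M_0$ is a fixed constant matrix (the precise normalizing constants and sign conventions being immaterial for the argument). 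The unknowns $b_1,\dots,b_{2m+1}$ are then defined as explicit polynomial combinations of the entries of $\Psi_1(s),\Psi_2(s),\dots$ together with finitely many coefficients in the expansion of $\Psi$ at $\zeta=0$; the hypothesis $\tau_{2m}>0$ is precisely what keeps the essential singularity of decaying (integrable) type, since $e^{-\tau_{2m}\zeta^{-2m}}$ then has a definite sign on the relevant part of $\Gamma$.

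For solvability, since all jumps have determinant one we get $\det\Psi\equiv1$, so the solution is unique whenever it exists. Existence for every $s\in\mathbb R$ would follow from a vanishing lemma: assuming the homogeneous problem (same jumps, but $\Psi=O(\zeta^{-1})$ at infinity) had a nontrivial solution, I would form an auxiliary function $H(\zeta)=\Psi(\zeta)\,\overline{\Psi(\bar\zeta)}^{\,T}$ across a symmetric pair of reflected contours, use the jump relations and the Schwarz symmetry of the data to check that $H$ continues to an entire function vanishing at infinity, hence $H\equiv0$, and then deduce $\Psi\equiv0$, a contradiction; the positivity needed at this step uses $s\in\mathbb R$ and $\tau_{2m}>0$. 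Standard small‑norm perturbation and Fredholm theory (as in the treatments of the Painlev\'e~II/XXXIV and perturbed‑Laguerre RH problems) then give a solution $\Psi(\cdot\,;s)$ for all real $s$ depending analytically on $s$; since $b_1,\dots,b_{2m+1}$ are polynomial in the (analytic) expansion coefficients of $\Psi$, they are analytic, hence pole‑free, on $\mathbb R$. Finally, because the jumps do not depend on $\zeta$ along each ray and depend on $s$ only through $\theta$, the ratios $\partial_\zeta\Psi\cdot\Psi^{-1}$ and $\partial_s\Psi\cdot\Psi^{-1}$ are rational in $\zeta$ (entire with controlled growth at $\infty$ and a pole of fixed order at $0$, by Liouville), which yields a Lax pair $\partial_\zeta\Psi=A(\zeta;s)\Psi$, $\partial_s\Psi=B(\zeta;s)\Psi$; inserting the expansions of $\Psi$ at $\zeta=\infty$ and $\zeta=0$ into the compatibility relation $\partial_s A-\partial_\zeta B=[B,A]$ and matching coefficients gives exactly the system \eqref{eq:b-k}, with the constants $\tilde\tau_p$ emerging in the form \eqref{def:tauptilde} from the quadratic dependence of the Lax matrix at the origin on $\tau_1,\dots,\tau_{2m}$. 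This last identification is a lengthy but mechanical computation that I would carry out in the section where the $\Psi$ RH problem is set up.

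The asymptotics \eqref{eq:b1asy} is obtained by a Deift--Zhou steepest‑descent analysis of the $\Psi$ RH problem as $s\to+\infty$. Rescaling $\zeta$ so that the phase $\theta(\zeta)=\tfrac23\zeta^{3/2}+s\zeta^{1/2}$ is uniformly large, one opens lenses in the usual way; away from a shrinking neighbourhood of $\zeta=0$ the jumps become exponentially close to the identity, while near the origin the order‑$2m$ essential singularity is matched to an explicit local parametrix (a Bessel/confluent‑hypergeometric‑type model in the variable $\tau_{2m}\zeta^{-2m}$). Comparing the global parametrix with the local one on a small circle around the origin forces $\Psi_1(s)=O(s^{-3/2})$ with leading term proportional to $\tau_1$, whence $b_1(s)=-\tau_1/(2s^{3/2})+O(s^{-5/2})$; tracking the next orders in the same expansion furnishes the analogous leading behaviour of $b_2,\dots,b_{2m+1}$. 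The two delicate points are the vanishing lemma in the presence of the essential singularity --- the reflected contours and the bilinear pairing must be chosen so that the boundary contributions from $\zeta=0$ carry the right sign, and this is exactly where $\tau_{2m}>0$ is used --- and the construction of the local parametrix at the essential singularity in the $s\to+\infty$ analysis; everything else is routine.
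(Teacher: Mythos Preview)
Your overall strategy --- model RH problem for $\Psi$, vanishing lemma for solvability, Lax pair and compatibility for the system \eqref{eq:b-k}, Deift--Zhou analysis for the asymptotics --- is exactly the paper's, and the existence/Lax-pair part is essentially correct. (One small correction: the constants $\tilde\tau_p$ in \eqref{def:tauptilde} are read off not from the zero-curvature relation itself but from expanding $\det A(\zeta;s)$ as $\zeta\to0$, using the local form \eqref{Psi-origin}.)

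The description of the $s\to+\infty$ analysis, however, contains real gaps. First, your claim that ``away from a shrinking neighbourhood of $\zeta=0$ the jumps become exponentially close to the identity'' is false: the jump on $\Sigma_3$ (the negative real axis) is the constant matrix $\left(\begin{smallmatrix}0&1\\-1&0\end{smallmatrix}\right)$ and never tends to $I$, so an outer parametrix is required. Second, the local parametrix is not needed at the essential singularity $\zeta=0$; after the rescaling $\zeta\mapsto s\zeta$ the relevant $g$-function is $g(\zeta)=\tfrac23(\zeta+1)^{3/2}$, the branch point sits at $\zeta=-1$, and the local parametrix there is the standard \emph{Airy} parametrix, not a Bessel/confluent-hypergeometric model. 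Third --- and this is the key trick you are missing --- the essential singularity at the origin is removed by a normalization step: one multiplies by $e^{q(\zeta)\sigma_3}$ with
\[
q(\zeta)=\sqrt{\zeta+1}\sum_{k=1}^{2m}\frac{c_k(s)}{s^{k}\zeta^{k}},
\]
the coefficients $c_k$ being chosen so that $q(\zeta)-\sum_{k=1}^{2m}\tau_k s^{-k}\zeta^{-k}$ is bounded at $0$. After this the transformed unknown is bounded at the origin and no local parametrix is needed there at all. Finally, your assertion that $\Psi_1(s)=O(s^{-3/2})$ is incorrect: in fact $(\Psi_1(s))_{12}=a_1(s)=\tfrac{s^2}{4}+\tau_1 s^{-1/2}+O(s^{-3/2})$, and the asymptotics \eqref{eq:b1asy} follows only after using the relation $b_1(s)=a_1'(s)-\tfrac{s}{2}$, which cancels the leading $\tfrac{s}{2}$.
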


\begin{rem}
In the literature (cf. \cite{cjp}), the Painlev\'{e} XXXIV hierarchy is defined by
\begin{equation}\label{eq:P34 hierarchy}
(2\mathcal{L}_n[U]-s)\frac{d^2}{ds^2}(\mathcal{L}_n[U])-\left(\frac{d}{ds}(\mathcal{L}_n[U])\right)^2+\frac{d}{ds}(\mathcal{L}_n[U])+(2\mathcal{L}_n[U]-s)^2U+\alpha_n=0, 
\end{equation}
where $\alpha_n$ are constants and the operator $\mathcal{L}_n$ is given recursively by the Lenard recursion relation
\begin{equation}\label{eq:Lenard recursion relation}
 \frac{d}{ds}~\mathcal{L}_{n+1}[U] =\left(\frac{d^3}{ds^3}+4U \frac{d}{ds} +2 \frac{dU}{ds} \right)\mathcal{L}_n[U],\quad n\geqslant 1,
\end{equation}
with the initial value $\mathcal{L}_1[U]=U$. It is interesting to note that the system \eqref{eq:b-k} can give us the following Lenard type recursion relation
\begin{equation}\label{eq:recursion relation}
b_{k+1}'=\frac{1}{4}\left(b_k'''-4(2b_1+s)b_k'-2(2b_1+s)'b_k\right), \quad k=1,\ldots, 2m+1,
\end{equation}
with the boundary condition $b_{2m+2}=0$. See also a similar situation where the Painlev\'e III hierarchy is connected to a Lenard type recursion relation in Atkin \cite[Theorem 4.1]{Atkin} and Atkin et al. \cite[Remark 2.1]{ACM}.
\end{rem}

\subsection*{Asymptotics of the partition function}
With the aid of Theorem \ref{thm:painleve solution}, we next state the asymptotics of the partition function $Z_n(\lambda;\vec{t}\;)$ given in \eqref{def:partiationfunction} in a multiple scaling regime. An essential issue here is a proper and related scalings of the parameters $\lambda$ and $\vec{t}$ in the potential $V(x;\lambda, \vec{t}\;)$. To state the precise assumption, we need a $\phi$-function defined by
\begin{equation}\label{phi}
\phi(z)=2\int_{1}^z\sqrt{x^2-1}dx=2z\sqrt{z^2-1}-\log\left(z+\sqrt{z^2-1}\right), \quad z \in \mathbb{C} \setminus (-\infty,1],
\end{equation}
where the square root and the logarithm  all take the principal branches. Clearly, as $z \to 1$
$$\phi(z) \sim \frac{4\sqrt{2}}{3}(z-1)^{\frac 32}.$$
Hence,
\begin{equation}\label{def:f}
f(z):=\left(\frac{3}{2}\phi(z)\right)^{2/3}
\end{equation}
is analytic near $z=1$, and if $\lambda$ is close to $1$, it is readily seen that
 \begin{equation}\label{def:cjk}
 (f(z)-f(\lambda))^{-j}=\sum_{k=0}^{j}c_{jk}(z-\lambda)^{-k}+O(z-\lambda),\qquad j=1,...,2m,
 \end{equation}
where $c_{jj}=f'(\lambda)^{-j}$ and the other coefficients $c_{jk}$ can also be computed explicitly in terms of the higher order derivatives of $f$ at $z=\lambda$. We now make the following scalings on the parameters $\lambda$ and $\vec{t}$.
\begin{asm}\label{asm}
As $n\to \infty$, it is required that
\begin{itemize}
  \item $\lambda \to 1 $ in such a way that
  \begin{equation}\label{eq:lambdascalingparti}
  2n^{2/3}(\lambda-1)\to s \in \mathbb{R};
\end{equation}

  \item  $\vec{t}\to \vec{0}$ in such a way that
  \begin{equation}\label{eq:tkscaling}
t_k=2\sum_{j=k}^{2m}c_{jk}\tau_j n^{-(1+\frac{2j}{3})}, \quad k=1,...,2m,
\end{equation}
where $c_{jk}$, $j,k=1,\ldots,2m$, is given in \eqref{def:cjk} and  $\vec{\tau}=(\tau_1, \tau_2,\ldots,\tau_{2m})$ is any fixed vector  in $\mathbb{R}^{2m-1}\times (0, +\infty)$.
\end{itemize}
\end{asm}
The condition \eqref{eq:tkscaling} actually means $\vec{t}$ tends to $\vec{0}$ from a specific direction. Moreover, by \eqref{def:cjk} and \eqref{eq:lambdascalingparti}, we have
\begin{equation}\label{eq:tkasy}
t_k\sim 2^{1-k}n^{-1-\frac{2}{3}k}\tau_k, \qquad \textrm{as $n\to \infty$}.
\end{equation}
Our second result is then the following theorem.
\begin{thm} \label{thm:partition asymptotics}
Let $Z_n(\lambda)$ be the partition function \eqref{def:partiationfunction} of the perturbed GUEs \eqref{pGUE}. In the multiple scaling limit when $n\to\infty$ and simultaneously $\lambda\to 1$, $\vec{t} \to \vec{0}$ so that Assumption \ref{asm}
is satisfied, we have
\begin{equation}\label{eq:zn-asy}
Z_n(\lambda)= Z_n^{GUE}e^{2n^{1/3}\tau_1}\exp\left(-\int_{s}^{\infty} \left(b_1(t)(t-s) +\frac{\tau_1}{2\sqrt{t-s}}\right) dt\right)(1+o(1)),
\end{equation}
	where
    \begin{equation} \label{zn-gue}
      Z_n^{GUE}= \frac{(2 \pi)^{n/2}}{ (4n)^{n^2/2}} \prod_{j=1}^n j!
    \end{equation}
    is the partition function for the classical GUE (i.e., $\vec{t}=\vec{0}$ in \eqref{pGUE}),
    and $b_1(s)$ is among the special solutions to the coupled Painlev\'{e} XXXIV system \eqref{eq:b-k} as stated in Theorem \ref{thm:painleve solution}.
\end{thm}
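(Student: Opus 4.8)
The plan is to combine a differential identity for $\log Z_n(\lambda;\vec t\,)$ with a Deift--Zhou steepest-descent analysis of the Riemann--Hilbert problem attached to the orthogonal polynomials of \eqref{eq:orthogonal}, and then to integrate the identity in the scaling variable $s$.

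\emph{Step 1: a differential identity.} Using $Z_n=n!\prod_{j=0}^{n-1}\gamma_j^{-2}$ and differentiating the orthogonality relations \eqref{eq:orthogonal} gives the standard formulas
\begin{equation*}
\frac{\partial}{\partial t_k}\log Z_n=-n\int_{\mathbb R}\frac{K_n(x,x;\lambda,\vec t\,)}{(x-\lambda)^k}\,dx,\qquad k=1,\dots,2m
\end{equation*}
(the integrals converging because $t_{2m}>0$ forces $w$ to vanish to infinite order at $x=\lambda$), which, combined via the chain rule with the scalings \eqref{eq:lambdascalingparti}--\eqref{eq:tkscaling}, express $\frac{d}{ds}\log Z_n$. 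Writing $K_n(x,x)$ through the solution $Y$ of the orthogonal-polynomial Riemann--Hilbert problem and deforming contours, these integrals localize to an explicit functional of $Y$ near $z=\lambda$, so that $\frac{d}{ds}\log Z_n$ splits into a macroscopic piece, computable from the equilibrium (semicircle) measure, plus a correction carried by the local behaviour of $Y$ at $z=\lambda$.

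\emph{Step 2: steepest descent.} We perform the usual chain of transformations $Y\mapsto T\mapsto S\mapsto R$: normalization by a $g$-function built from the $\phi$-function \eqref{phi}, opening of lenses around $[-1,1]$, the classical GUE global parametrix (the equilibrium measure is still the semicircle law \eqref{semicircle-law}, the perturbation being subleading), and the Airy parametrix near $z=-1$. Near $z=1$ we construct a new local parametrix out of the $\Psi$-function associated with the coupled Painlev\'e XXXIV system of Theorem~\ref{thm:painleve solution}: in the conformal coordinate $\zeta=n^{2/3}f(z)$ with $f$ as in \eqref{def:f} and $s=2n^{2/3}(\lambda-1)$, the pole sits at $\zeta\approx s$, and the scalings \eqref{eq:tkscaling} are engineered precisely so that, via \eqref{def:cjk}, the singular part $\sum_{k}t_k(z-\lambda)^{-k}$ of $V$ matches $\sum_{j}\tau_j n^{-2j/3}\bigl(f(z)-f(\lambda)\bigr)^{-j}$ up to the relevant order; the pole-freeness of $b_1,\dots,b_{2m+1}$ on $\mathbb R$ from Theorem~\ref{thm:painleve solution} makes this parametrix well defined. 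Matching on the boundary circle then yields $R=I+O(n^{-1/3})$ uniformly, together with control of $R$ and $R'$ in a full neighbourhood of $z=\lambda$.

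\emph{Step 3: asymptotics of the identity, integration, and the constant.} Inserting the steepest-descent result into Step~1 and passing to the variable $\zeta=n^{2/3}f(z)$, the local contribution near $z=\lambda$ collapses to a Hamiltonian-type quantity of the coupled Painlev\'e system; exploiting the Lax-pair/recursion structure behind \eqref{eq:b-k} one identifies the relevant combination of $\Psi$-components with $b_1$ and obtains
\begin{equation*}
\frac{d}{ds}\log Z_n(\lambda)=\int_{s}^{\infty}b_1(t)\,dt+o(1),
\end{equation*}
uniformly for $s$ in compact subsets of $\mathbb R$ and with integrable control as $s\to+\infty$, the latter provided by $b_1(s)=-\tfrac{\tau_1}{2}s^{-3/2}+O(s^{-5/2})$ from Theorem~\ref{thm:painleve solution} (which also guarantees convergence of the integral in \eqref{eq:zn-asy}). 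Since a short regularized computation shows $\int_s^\infty b_1(t)\,dt=\frac{d}{ds}\bigl(-\int_s^\infty(b_1(t)(t-s)+\tfrac{\tau_1}{2\sqrt{t-s}})\,dt\bigr)$, integrating in $s$ yields $\log Z_n(\lambda)=C_n-\int_s^\infty(b_1(t)(t-s)+\tfrac{\tau_1}{2\sqrt{t-s}})\,dt+o(1)$ for an $n$-dependent constant $C_n$. Finally $C_n$ is determined by letting $s\to+\infty$: there the $z=1$ parametrix degenerates to a shifted Airy parametrix, and $\log Z_n(\lambda)$ reduces to the classical GUE value \eqref{zn-gue} corrected by the electrostatic contribution of the pole, $2n^{1/3}\tau_1(\lambda-\sqrt{\lambda^2-1})+\cdots$, which at $\lambda=1$ equals $2n^{1/3}\tau_1$ and thereby forces $C_n=\log Z_n^{GUE}+2n^{1/3}\tau_1$. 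This gives \eqref{eq:zn-asy}.

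\emph{Main obstacle.} The heart of the matter is the local parametrix at $z=1$: it must be built from the $\Psi$-function of the coupled system with the moving pole built in, which is exactly what forces the delicate expansions \eqref{def:cjk}--\eqref{eq:tkscaling} and what makes the uniform estimate of $R$ up to and at $z=\lambda$ nontrivial. One must then show that, although the parametrix involves all of $b_1,\dots,b_{2m+1}$, the local data at the pole collapse in the differential identity to the single clean scalar $b_1(t)(t-s)+\tfrac{\tau_1}{2\sqrt{t-s}}$; the companion task is the $s\to+\infty$ matching, which simultaneously pins down the constant and produces the precise prefactor $e^{2n^{1/3}\tau_1}$.
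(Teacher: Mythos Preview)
Your overall architecture---differential identity, steepest descent with a $\Psi$-based parametrix near $z=1$, integration, and matching to the GUE in the far regime---is the right one, and Step~2 is essentially what the paper does. But two points are substantially different from the paper and, as written, contain gaps.

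\medskip

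\textbf{The differential identity.} The paper does \emph{not} differentiate in the couplings $t_k$; it differentiates in $\lambda$. Concretely, it proves
\[
\frac{d}{d\lambda}\ln Z_n(\lambda)=4n\,(Y_1)_{11},
\qquad
\frac{d^2}{d\lambda^2}\ln Z_n(\lambda)=4n^2(\lambda^2-1)+\det\!\left(\frac{d}{d\lambda}H(\lambda)\cdot H(\lambda)^{-1}\right),
\]
with $H(\lambda)=Y_+(\lambda)e^{-n\lambda^2\sigma_3}$, and it is the \emph{second} of these that is plugged into the steepest-descent output (the $H$-formula localizes cleanly at $z=\lambda$ and collapses to $a_1'(s)-s/2=b_1(s)$ via the Lax pair). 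Your Step~1 instead writes only $\partial_{t_k}\log Z_n$ and then invokes a chain rule to get $d/ds$. Under Assumption~\ref{asm}, however, $s$ is tied to $\lambda$, and the $t_k$ depend on $\lambda$ through the $c_{jk}$; so $\tfrac{d}{ds}$ necessarily involves $\partial_\lambda\log Z_n$, which you never compute. In other words, your identity as stated is incomplete, and the paper's route via $\partial_\lambda$ (indeed $\partial_\lambda^2$) is what actually produces the clean scalar $b_1$ you want. The paper even remarks that the $\vec t$-derivative route is left open for future work.

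\medskip

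\textbf{Determining the constant.} You propose to fix $C_n$ by ``letting $s\to+\infty$''. The problem is that the steepest-descent error bound behind your Step~3 is only uniform for $s$ in compacta (equivalently $\lambda\in(1+cn^{-2/3},1+dn^{-1/3})$), so you cannot simply send $s\to+\infty$ inside that estimate. The paper resolves this by carrying out a \emph{second}, independent steepest-descent analysis valid for $\lambda\ge 1+n^{-2/5}$, in which the local parametrix at $z=1$ is purely Airy and the pole at $\lambda$ is handled by a shrinking disk with an explicit outer/inner splitting; this yields an asymptotic formula for $Z_n(\lambda)$ in that regime directly (not via a differential identity). Because the two $\lambda$-ranges overlap, evaluating both formulas at $\lambda_0=1+n^{-2/5}$ pins down the constant and produces the $e^{2n^{1/3}\tau_1}$ prefactor. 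Your one-line ``degenerates to a shifted Airy parametrix'' gestures at this, but the overlap argument and the separate RH analysis are the genuine content here and need to be carried out.
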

It is readily seen from \eqref{eq:b1asy} that the integral in \eqref{eq:zn-asy} is well-defined, and the asymptotic formula \eqref{eq:zn-asy} depends on the parameters $\tau_k$, $k = 2, \ldots, 2m$ via the function
$b_1(s)=b_1(s;\vec{\tau})$.

\begin{rem}
We obtain the asymptotics of the partition function  $Z_n(\lambda)$ in terms of  an integral of the function $b_1(s)=b_1(s;\vec{\tau})$, which is the special solution to the 
 coupled Painlev\'{e} XXXIV system \eqref{eq:b-k}.  As given in \eqref{eq:lambdascalingparti}, 
 the lower integration limit $s$ is a proper scaling of $\lambda$, which is the position of the pole of the potential \eqref{potential:V}. 
 The theorem is proved by using certain differential identities with respect to the variable $\lambda$. 
 In the asymptotic study of the perturbed GUE \eqref{model-Bri} with second order pole at the origin and the perturbed LUE \eqref{model-ChenIts} with first order or higher order pole at the origin, people considered differential identities with respect to the coefficients of the pole in the potential instead; see \cite{ACM, Bri:Mezz:Mo,XDZ2015}. In our model, one may also derive differential identities with respect to the coefficients  $\vec{t}$ in the potential \eqref{potential:V}. It would be interesting to see whether there exist any new Painlev\'{e} type system of equations in the the variables $\vec{\tau}$, which are related to  the  coefficients $\vec{t}$. 
This together with the differential identities in $\vec{t}$ may lead to other new integral representation of the asymptotics of the partition function.  We will leave this problem to a further investigation.
  \end{rem}

\begin{rem}
Quite recently, the coupled Painlev\'e systems have appeared frequently in the literature of random matrix theory. For example, in the study of Fredholm determinants associated with the Painlev\'e II or III kernels, the Tracy-Widom type formulas are given in terms of explicit integrals involving a solution to the coupled Painlev\'e II \cite{XD2017} or the Painlev\'e III system \cite{Dai:Xu:Zhang2018}. Moreover, the coupled Painlev\'e II and V systems have also been related to the generating function for the Airy point process and the Bessel point process in \cite{Claeys:Doer2018} and \cite{Char:Doer}, respectively.
\end{rem}

\subsection*{Multiple scaling limit of the correlation kernel}
Finally, we present the multiple scaling limit of the eigenvalue correlation kernel. It comes out that we have found a new multi-parameter family of limiting kernels to describe this local behavior as stated in what follows.

\begin{thm}\label{thm:Correlation kernel-Asy}
Let $K_{n}(x,y;\lambda,\vec{t}\;)$ be the eigenvalue  correlation kernel of the singularly perturbed GUEs given in \eqref{eq:correlation kernel}. Under Assumption \ref{asm}, there exists a multi-parameter family of kernels $K_{\Psi}(u,v;s, \vec{\tau})$ such that
\begin{equation}\label{Cor kernel-asy}
\frac{1}{2n^{2/3}}K_n\left(\lambda+\frac{u}{2n^{2/3}},\lambda+\frac{v}{2n^{2/3}}\right)= K_{\Psi}(u,v;s, \vec{\tau})(1+o(1)),
\end{equation}
uniformly for $u, v$ in any compact subset of $\mathbb{R} \setminus \{0\}$ and for $s$ in compact subset of $\mathbb{R}$.
\end{thm}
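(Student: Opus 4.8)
The plan is to run a Deift--Zhou steepest--descent analysis of the $2\times 2$ Riemann--Hilbert (RH) problem for the monic orthogonal polynomials $\pi_j(x;\lambda,\vec{t}\,)$ of \eqref{eq:orthogonal} in the Fokas--Its--Kitaev formulation, and then to extract $K_n$ from the first column of the resulting matrix via the Christoffel--Darboux identity \eqref{eq:correlation kernel}. Let $Y$ be the solution of that RH problem. One first transforms $Y\mapsto T$ with the $g$-function of the semicircle law \eqref{semicircle-law} (equivalently, with $\phi$ from \eqref{phi}), so that $T\to I$ at infinity, with oscillatory jumps on $(-1,1)$ and exponentially small jumps off the support. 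Since the potential $V$ differs from the pure Gaussian potential $2x^2$ only by $\sum_{k=1}^{2m}t_k(x-\lambda)^{-k}$, which by \eqref{eq:tkscaling} is of size $O(n^{-1-2k/3})$ and is concentrated near $x=\lambda\to 1$, the equilibrium measure stays equal to the semicircle law to leading order, and the perturbation enters only through the local analysis at $z=1$. Opening lenses on $(-1,1)$ then yields $S$, whose jumps are uniformly exponentially close to $I$ off $[-1,1]$ and off fixed disks $D_{\pm 1}$ about the soft edges.

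Away from $D_{\pm 1}$ one approximates $S$ by a global parametrix $N$, solved on $\mathbb C\setminus[-1,1]$ through the Szeg\H{o} function of the regular part of the weight \eqref{eq:weight}; the essential singularity of $w$ at $x=\lambda$ is harmless because, for $n$ large, $\lambda\in D_1$, so $N$ reduces to the classical GUE global parametrix. Near $z=-1$ one uses the standard Airy parametrix. The construction of a local parametrix $P^{(1)}$ on $D_1$ is the crux of the matter: it must carry the jumps of $S$ inside $D_1$ and match $N$ on $\partial D_1$. With the conformal map $\zeta=n^{2/3}f(z)$, $f$ as in \eqref{def:f}, one checks $n\phi(z)=\tfrac23\zeta^{3/2}$ exactly, while by the expansion \eqref{def:cjk} together with the scalings \eqref{eq:lambdascalingparti}--\eqref{eq:tkscaling} the perturbation $n\sum_k t_k(z-\lambda)^{-k}$ collapses, to leading order, to $\sum_k 2\tau_k(\zeta-s)^{-k}$, so that the rescaled local RH problem is $n$-independent to leading order. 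This forces $P^{(1)}(z)=E(z)\,\Psi\!\bigl(n^{2/3}f(z);s,\vec\tau\bigr)\,e^{-\frac n2\phi(z)\sigma_3}$, $\sigma_3=\mathrm{diag}(1,-1)$, with an explicit analytic prefactor $E$, where $\Psi=\Psi(\zeta;s,\vec\tau)$ solves a model RH problem on a $\zeta$-plane contour whose jumps involve the $\tau_k(\zeta-s)^{-k}$ terms, with prescribed local behavior at the node $\zeta=s$ and Airy-type normalization at $\zeta=\infty$, so that in particular $\Psi(\zeta)=(\text{Airy-type leading part})\bigl(I+\Psi_1(s,\vec\tau)\,\zeta^{-1}+O(\zeta^{-2})\bigr)$ with $\Psi_1$ built out of $b_1(s),\ldots,b_{2m+1}(s)$. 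This model problem is governed by a Lax pair whose compatibility condition is the coupled Painlev\'e XXXIV system \eqref{eq:b-k}, and the key input is that it is solvable for \emph{all} real $s$ with the stated asymptotics --- i.e.\ $\Psi$ has no spurious poles in $s$ --- which is exactly the content of Theorem \ref{thm:painleve solution}; hence $P^{(1)}$ is well defined uniformly for $s$ in compact subsets of $\mathbb R$ and for $\vec\tau$ fixed.

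Finally one sets $R=SN^{-1}$ outside $D_{\pm 1}$ and $R=S(P^{(\pm 1)})^{-1}$ inside. Then $R$ solves an RH problem whose jumps are $I+O(n^{-2/3})$ on $\partial D_1$ --- the order coming from the $\Psi_1\zeta^{-1}$ term in the large-$\zeta$ expansion of $\Psi$, since $\zeta\sim n^{2/3}$ there --- and $I+O(n^{-1})$ or exponentially close to $I$ on the remaining contours, whence $R=I+O(n^{-2/3})$ uniformly by the standard small-norm theory. Unravelling the chain $Y\to T\to S\to R$ and inserting $x=\lambda+u/(2n^{2/3})$, $y=\lambda+v/(2n^{2/3})$, which for large $n$ lie in $D_1$, the leading contribution to $K_n$ comes entirely from $P^{(1)}$: the weight factors $e^{\mp nV/2}$, the conjugations introduced in $T$ and $S$, the prefactor $E$ and the $R$-corrections all cancel in the Christoffel--Darboux ratio, and the Jacobian of $\zeta=n^{2/3}f(z)$ combines with $1/(x-y)$ to yield the overall $\tfrac1{2n^{2/3}}$. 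What remains is precisely a kernel of the form
\[
K_\Psi(u,v;s,\vec\tau)=\frac{1}{2\pi\mathrm i\,(u-v)}\begin{pmatrix}0 & 1\end{pmatrix}\Psi_+(v;s,\vec\tau)^{-1}\Psi_+(u;s,\vec\tau)\begin{pmatrix}1 \\ 0\end{pmatrix},
\]
built from the first column of $\Psi$ (with the $\Psi$-RH problem normalized so that its jump node sits at the origin, matching $u=v=0$ to $x=y=\lambda$). Uniformity for $u,v$ in compact subsets of $\mathbb R\setminus\{0\}$ and $s$ in compact subsets of $\mathbb R$ follows from the uniformity of all the estimates above; the values $u=0$ and $v=0$ are excluded because they correspond to $x=\lambda$, the location of the pole of $V$ and of the jump node of $\Psi$.

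The principal obstacle is the construction and analysis of the local parametrix $P^{(1)}$ at the soft edge $z=1$: one must set up the model $\Psi$-RH problem so that, after the rescaling $\zeta=n^{2/3}f(z)$, its jumps reproduce those of $S$ \emph{exactly} --- which is what pins down the precise scalings in Assumption \ref{asm} and the role of the coefficients $c_{jk}$ --- and so that it possesses a solution with the required large-$\zeta$ asymptotics uniformly in $s\in\mathbb R$ and in $\vec\tau$; the latter rests decisively on the pole-free solution of the coupled Painlev\'e XXXIV system supplied by Theorem \ref{thm:painleve solution}. A subsidiary technical point --- that the essential singularity of the weight at $x=\lambda$ obstructs neither the global parametrix nor the Szeg\H{o} function --- is settled by noting that $\lambda$ eventually lies inside the fixed disk $D_1$, where the local analysis takes over.
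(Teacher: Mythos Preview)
Your proposal is correct and follows essentially the same route as the paper's proof: the Deift--Zhou chain $Y\to T\to S\to R$, with the Airy parametrix at $z=-1$ and the model $\Psi$ as local parametrix near $z=1$, then unwinding at $x=\lambda+u/(2n^{2/3})$, $y=\lambda+v/(2n^{2/3})$ to extract $K_\Psi$. A few details differ in execution --- the paper strips off the singular factor $e^{-\frac n2\sum_k t_k(z-\lambda)^{-k}\sigma_3}$ already in the $Y\to T$ step (so no Szeg\H{o} function appears and the outer parametrix has its cut on $[-1,\lambda]$), and it uses the \emph{shifted} map $\zeta=n^{2/3}(f(z)-f(\lambda))$ with $s=n^{2/3}f(\lambda)$, so that $n\phi(z)$ does \emph{not} equal $\tfrac23\zeta^{3/2}$ exactly but rather $\theta(\zeta;s)+O(n^{-1/3})$, giving the matching $P^{(1)}N^{-1}=I+O(n^{-1/3})$ and hence $R=I+O(n^{-1/3})$ rather than your $O(n^{-2/3})$ --- but none of this affects the structure of the argument or the final conclusion.
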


The limiting kernels $K_{\Psi}(u,v;s, \vec{\tau})$ are described through the solution of the following special Riemann-Hilbert (RH) problem, which we refer to as the model RH problem for $\Psi$.

\subsubsection*{RH problem for $\Psi$}
\begin{enumerate}
  \item[\rm (a)]  $\Psi(\zeta)=\Psi(\zeta;s, \vec{\tau})$ is a $2 \times 2$ matrix-valued function depending on the parameters $s\in\mathbb{R}$ and $\vec{\tau}=(\tau_1,\tau_2,\ldots,\tau_{2m})\in \mathbb{R}^{2m-1}\times (0,\infty)$, which is analytic for
  $\zeta \in \mathbb{C}\setminus \{\cup^4_{j=1}\Sigma_j\cup\{0\}\}$ with contours $\Sigma_j$, $j=1,2,3,4,$  illustrated in Figure \ref{contour-for-model}.

  \begin{figure}[h]
 \begin{center}
   \includegraphics[width=5.5cm]{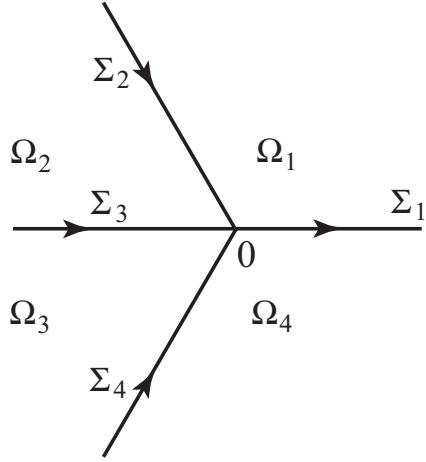} \end{center}
  \caption{The jump contours $\Sigma_j$ and the regions $\Omega_j$, $j=1,2,3,4$, for the RH problem for $\Psi$. Both the sectors $\Omega_2$ and $\Omega_3$ have an opening angle  $\pi/3$.}
 \label{contour-for-model}
\end{figure}

  \item[\rm (b)]  $\Psi$ has limiting values $\Psi_{\pm}(\zeta)$ for $\zeta \in \cup^4_{j=1}\Sigma_j$, where $\Psi_+$ ($\Psi_-$) denotes the limiting values from the left (right) side of $\Sigma_j$, and
 \begin{equation}\label{Psi-jump}
 \Psi_+(\zeta)=\Psi_-(\zeta)
 \left\{
 \begin{array}{ll}
    \begin{pmatrix}
                                 1 & 1 \\
                               0& 1
                                 \end{pmatrix}, &  \zeta \in \Sigma_1, \\[.4cm]
    \begin{pmatrix}
                                0 &1\\
                               -1&0
                                \end{pmatrix},  &  \zeta \in \Sigma_3, \\[.4cm]
    \begin{pmatrix}
                                 1 &0 \\
                                 1 &1
                                 \end{pmatrix}, &   \zeta \in \Sigma_2\cup \Sigma_4.
 \end{array}  \right .  \end{equation}

\item[\rm (c)] As $\zeta\to \infty$, there exists a function $a_1(s;\vec{\tau})$ such that
  \begin{multline}\label{Psi-infty}
 \Psi(\zeta;s, \vec{\tau})=
\begin{pmatrix}
                     1 &  0              \\
                     a_1(s;\vec{\tau}) & 1 \\
\end{pmatrix}
 \left[ I + \frac{
 \Psi_{1}(s;\vec{\tau})}{\zeta} +
O\left(\frac{1}{\zeta^2}\right) \right]
\\
\times e^{-\frac{1}{4}\pi i\sigma_3}\zeta^{-\frac{1}{4} \sigma_3} \frac{I + i \sigma_1}{\sqrt{2}} e^{-\theta(\zeta) \sigma_3},
   \end{multline}
  where
  \begin{equation}\label{def:theta}
  \theta(\zeta):=\theta(\zeta; s) =\frac{2}{3}\zeta^{3/2}+s\zeta^{1/2},
  \end{equation}
  and
 \begin{equation}\label{a1:def}
 (\Psi_1(s;\vec{\tau}))_{12}=a_1(s;\vec{\tau})
 \end{equation}
with $(M)_{ij}$ standing for the $(i,j)$-th entry of a matrix $M$. Here, the branch cuts of the functions $\zeta^{-\frac{1}{4}}, \zeta^{3/2}$ and $\zeta^{1/2}$ are all taken along the negative real axis with $\arg \zeta \in (-\pi, \pi)$, $\sigma_1$ and $\sigma_3$ are the Pauli matrices defined by
\begin{equation}\label{Pauli-matrix}
\sigma_1=\begin{pmatrix}
                     0 &1 \\
                    1 & 0
          \end{pmatrix},
                    \qquad \sigma_3=
                   \begin{pmatrix}
                     1 & 0 \\
                    0 & -1
                    \end{pmatrix}.
 \end{equation}

 \item[\rm (d)] As $\zeta\to 0$, there exists a unimodular matrix  $\Psi_0(s)=\Psi_0(s;\vec{\tau})$, independent of $\zeta$, such that
  \begin{equation}\label{Psi-origin}
  \Psi(\zeta;s,\vec{\tau})=\Psi_0(s)\left[I+O(\zeta)\right] e^{-\sum_{k=1}^{2m}\tau_k\zeta^{-k}\sigma_3}\left\{
  \begin{array}{ll}
    I, &  \zeta \in \Omega_1\cup\Omega_4,
    \\
  \begin{pmatrix}
                                 1 & 0 \\
                               -1& 1
  \end{pmatrix},  &  \zeta \in \Omega_2, \\[.4cm]
     \begin{pmatrix}
                                 1 &0 \\
                                  1 &1
                                  \end{pmatrix}, &   \zeta \in \Omega_3,
 \end{array}  \right .
  \end{equation}
  where the regions $\Omega_1-\Omega_4$ are depicted in Figure \ref{contour-for-model}.
 \end{enumerate}

As we will show later, there exists a unique solution to the above model RH problem. We now set
\begin{equation}\label{def:psi12}
\begin{pmatrix}
    \psi_1(x;s,\vec{\tau}) \\
    \psi_2(x;s,\vec{\tau})
\end{pmatrix}
=\left\{
          \begin{array}{ll}
            \Psi_+(x;s,\vec{\tau})\begin{pmatrix}
                   1 \\
                  1 \\
                 \end{pmatrix}, & \hbox{for $x<0$,} \\[.4cm]
            \Psi_+(x;s,\vec{\tau})\begin{pmatrix}
                   1 \\
                  0 \\
                 \end{pmatrix}, & \hbox{for $x>0$.}
          \end{array}
        \right.
\end{equation}
Then, the limiting kernels $K_{\Psi}$ in Theorem \ref{thm:Correlation kernel-Asy} can be written as
\begin{equation}\label{def:Psi-kernel}
K_{\Psi}(x,y;s,\vec{\tau})=\frac{\psi_1(y;s,\vec{\tau})\psi_2(x;s,\vec{\tau})-\psi_1(x;s,\vec{\tau})\psi_2(y;s,\vec{\tau})}{2\pi i(x-y)}.
\end{equation}

\begin{rem}
The principle of universality (cf. \cite{bi, deift,dkmv1}) suggests that this new family of limiting kernels applies to more general situations whenever the coalescing of the pole and the soft edge of the spectrum occurs, which represents a new universality class.
\end{rem}

\subsection*{About the proofs and organization of the rest of the paper}
The rest of this paper is devoted to the proofs of our results. We deal with the properties of the model RH problem for $\Psi$ in Section \ref{Analysis of the model RH problem}, which include its unique solvability, Lax pair equations and asymptotics as $s\to+\infty$. These results will finally lead to the proof of Theorem \ref{thm:painleve solution} presented at the end of Section \ref{Analysis of the model RH problem}. The proofs of multiple scaling limits of the partition function and the correlation kernel rely on their connections with the classical RH problem that characterize orthogonal polynomials. In Section \ref{sec:connectionRHP}, we recall this RH problem (denoted by $Y$), and establish some new relations between the logarithmic derivative of the partition function (with respect to $\lambda$) and $Y$. We then perform a Deift-Zhou steepest descent analysis of the RH problem for $Y$ in Section \ref{sec:AARHY}. According to Assumption \ref{asm}, the analysis should be performed under conditions \eqref{eq:lambdascalingparti} and \eqref{eq:tkscaling}. It comes out that, in practice, the asymptotics of $Y$ for $\lambda$ in the regime \eqref{eq:lambdascalingparti} alone is not enough for us to derive the asymptotics of the partition function. The reason is what we really obtain from the differential identity is the asymptotics of the logarithmic derivative of the partition function. We then encounter the problem of identifying the integration constant. To resolve this problem, our strategy is the following. In Section \ref{sec:AARHY}, we carry out asymptotic analysis of the RH problem for $Y$ in a larger regime $1+cn^{-2/3}<\lambda<1+dn^{-1/3}$, where $c<0$ and $d>0$ are arbitrarily fixed constants, and the model RH problem $\Psi$ is used in the construction of local parametrix near $z=1$. As a consequence, we are able to prove Theorem \ref{thm:Correlation kernel-Asy} and derive the asymptotics of the logarithmic derivative of the partition function (Lemma \ref{thm:Log-Partition function-asy}), as presented in Section \ref{sec:pfthmkernel}. Particularly, the error bound in the asymptotic formula is uniformly valid for $1+cn^{-2/3}<\lambda<1+dn^{-1/3}$. We then analyze the RH problem for $Y$ with $\lambda>1+n^{-2/5}$ as $n\to \infty$ in Section \ref{sec:AARHY2}, and obtain the asymptotics of the partition function at the end of this section; see Lemma \ref{thm:Partition function-Asy-outer} below. Note that these two ranges of $\lambda$ are overlapped, which enables us to estimate the constant of integration and leads to the proof of Theorem \ref{thm:partition asymptotics} given in Section \ref{sec:pfasyparti}.

\section{Analysis of the model RH problem}\label{Analysis of the model RH problem}
In this section, we first show that the model RH problem for $\Psi$ is uniquely solvable, and then derive the associated Lax pair equations, whose compatibility condition will give us the coupled Painlev\'{e} XXXIV system. After performing the Deift-Zhou steepest descent analysis to the RH problem for $\Psi$ as $s \to +\infty$, we finally present the proof of Theorem \ref{thm:painleve solution} at the end.

\subsection{Unique solvability of the RH problem for $\Psi$ }

We start with a lemma also known as the vanishing lemma.

\begin{lem}[Vanishing Lemma]
\label{thm:vanishing lemma }
Let $\widetilde{\Psi}^{(1)}(\zeta;s,\vec{\tau})$ with $s\in\mathbb{R}$ and  parameters $\vec{\tau}=(\tau_1, \tau_2,\ldots, \tau_{2m})\in \mathbb{R}^{2m-1}\times (0,\infty)$ be the `homogeneous' version of the RH problem for $\Psi$, i.e., it satisfies items (a), (b) and (d) of the RH problem for $\Psi$, but with the large $\zeta$ behavior replaced by
\begin{equation}\label{Psi-tilde-infty}
 \widetilde{\Psi}^{(1)}(\zeta)=
O\left(\frac{1}{\zeta}\right)\zeta^{-\frac{1}{4} \sigma_3} \frac{I + i \sigma_1}{\sqrt{2}} e^{-\theta(\zeta) \sigma_3},
   \end{equation}
where $\theta$ is defined in \eqref{def:theta}. Then, the solution is trivial, that is,
$$\widetilde{\Psi}^{(1)}(\zeta)\equiv 0.$$
\end{lem}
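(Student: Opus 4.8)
The plan is to follow the standard strategy for vanishing lemmas in Riemann--Hilbert analysis: introduce an auxiliary matrix $H(\zeta) := \widetilde{\Psi}^{(1)}(\zeta)\,\overline{\widetilde{\Psi}^{(1)}(\bar\zeta)}^{\,T}$, show that its off-diagonal growth at $\infty$ and its behavior near the origin are controlled, and conclude that a suitable bilinear contour integral of $H$ vanishes, forcing $\widetilde{\Psi}^{(1)}\equiv 0$. First I would undo the oscillatory factor: set $\widehat{\Psi}(\zeta):=\widetilde{\Psi}^{(1)}(\zeta)\,e^{\theta(\zeta)\sigma_3}\,\frac{I-i\sigma_1}{\sqrt 2}\,\zeta^{\frac14\sigma_3}$, so that by \eqref{Psi-tilde-infty} one has $\widehat{\Psi}(\zeta)=O(1/\zeta)$ as $\zeta\to\infty$, and translate the jumps \eqref{Psi-jump} into jumps for $\widehat{\Psi}$ on the rays $\Sigma_j$. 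The key point is that on each ray the conjugated jump matrix is, up to the triangular pieces, conjugate to either $\begin{pmatrix}0&1\\-1&0\end{pmatrix}$ or a unipotent matrix whose nontrivial entry is exponentially small because $\Re\,\theta(\zeta)$ has the appropriate sign along $\Sigma_1,\Sigma_2,\Sigma_4$ (here one uses the precise opening angles $\pi/3$ of $\Omega_2,\Omega_3$ and the formula $\theta(\zeta)=\tfrac23\zeta^{3/2}+s\zeta^{1/2}$, exactly as in the Airy parametrix). Near $\zeta=0$, the factor $e^{-\sum_k\tau_k\zeta^{-k}\sigma_3}$ in \eqref{Psi-origin} is the delicate ingredient: because $\tau_{2m}>0$, in the relevant sectors the dominant entry of this exponential is decaying, so $\widetilde{\Psi}^{(1)}(\zeta)$ is in fact bounded (indeed $O(1)$) as $\zeta\to 0$ in every sector, the contributions from the essential singularity being killed by the sign of $\tau_{2m}$ on the rays $\Sigma_3$ and by the triangular structure in $\Omega_2,\Omega_3$.

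Next I would write down the bilinear identity. Define $\Phi(\zeta):=\widetilde{\Psi}^{(1)}(\zeta)$ in the upper half-plane and relate it to $\overline{\widetilde{\Psi}^{(1)}(\bar\zeta)}^{\,T}$; consider
\[
\int_{\mathbb{R}} \widetilde{\Psi}^{(1)}_{+}(x)\,\overline{\widetilde{\Psi}^{(1)}_{+}(x)}^{\,T}\,dx,
\]
split along the rays, and deform each piece. Using the decay $\widetilde{\Psi}^{(1)}=O(1/\zeta)$ at infinity and the $O(1)$ (or better) bound at the origin, Cauchy's theorem shows the total integral vanishes. On the real line the integrand is a nonnegative Hermitian matrix on the portion where the jump is $\begin{pmatrix}0&1\\-1&0\end{pmatrix}$ and has controlled sign on the unipotent portions; combining these one deduces that the first column (or the appropriate combination dictated by \eqref{def:psi12}) of $\widetilde{\Psi}^{(1)}_{+}$ vanishes identically on $\Sigma_3$. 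Then analytic continuation across the trivial-jump rays, together with the growth bounds, forces $\widetilde{\Psi}^{(1)}\equiv 0$ by Liouville-type reasoning (the remaining entries extend to an entire function that is $O(1/\zeta)$ at infinity and bounded at $0$).

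The main obstacle I anticipate is the behavior at the origin: unlike the classical Airy-type vanishing lemmas, here the local data involves the essential singularity $e^{-\sum_{k=1}^{2m}\tau_k\zeta^{-k}\sigma_3}$ rather than a mild power or logarithmic singularity. One must check carefully, sector by sector and ray by ray, that the combination of the sign of $\Re\big(\sum_k \tau_k\zeta^{-k}\big)$ (governed ultimately by $\tau_{2m}>0$ near $\arg\zeta=0$ and by the triangular jumps elsewhere) makes $\widetilde{\Psi}^{(1)}$ genuinely bounded as $\zeta\to 0$, so that the deformation of contours in the bilinear identity is legitimate and no boundary term at $0$ survives. The estimates at infinity, by contrast, are routine and identical to the Airy parametrix computation once the opening angles are pinned down. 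I would also note that uniqueness of the solution to the RH problem for $\Psi$ follows from this lemma in the usual way: the ratio of two solutions is analytic across all contours and tends to $I$ at infinity, hence equals $I$; I would record this as a corollary right after the lemma.
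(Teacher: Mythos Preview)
Your overall strategy is the right one, but there is a genuine gap at the origin, and it is precisely the place you flag as ``delicate.'' The claim that $\widetilde{\Psi}^{(1)}(\zeta)$ is bounded as $\zeta\to 0$ in every sector is false. From item (d), in $\Omega_1\cup\Omega_4$ one has $\widetilde{\Psi}^{(1)}(\zeta)=\Psi_0(s)[I+O(\zeta)]\,e^{-\sum_k\tau_k\zeta^{-k}\sigma_3}$, and for $\zeta\to 0^+$ the dominant term $\tau_{2m}\zeta^{-2m}\to+\infty$ makes the $(2,2)$ entry of that exponential blow up like $e^{+\tau_{2m}\zeta^{-2m}}$; the triangular factors in $\Omega_2,\Omega_3$ do not repair this. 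Consequently your bilinear form $H(\zeta)=\widetilde{\Psi}^{(1)}(\zeta)\,\overline{\widetilde{\Psi}^{(1)}(\bar\zeta)}^{\,T}$ satisfies, near $0$,
\[
H(\zeta)=O(1)\,e^{-\sum_k\tau_k\zeta^{-k}\sigma_3}\,e^{-\sum_k\tau_k\zeta^{-k}\sigma_3}\,O(1)
=O(1)\,e^{-2\sum_k\tau_k\zeta^{-k}\sigma_3}\,O(1),
\]
and the essential singularity does \emph{not} cancel; the integrand $\widetilde{\Psi}^{(1)}_{+}(x)\,\overline{\widetilde{\Psi}^{(1)}_{+}(x)}^{\,T}$ is not locally integrable at $x=0$, so the contour-deformation / Cauchy argument as you state it is illegitimate.

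The missing idea, which the paper uses, is twofold. First, fold all jumps onto $\mathbb{R}$ by setting $\widetilde{\Psi}^{(2)}(\zeta)=\widetilde{\Psi}^{(1)}(\zeta)e^{\theta(\zeta)\sigma_3}$ times the obvious unipotent factors in $\Omega_2,\Omega_3$; this makes the Cauchy step clean. Second --- and this is the crucial point --- insert the antidiagonal matrix between the two factors:
\[
M(\zeta)=\widetilde{\Psi}^{(2)}(\zeta)\begin{pmatrix}0&-1\\1&0\end{pmatrix}\bigl(\widetilde{\Psi}^{(2)}(\bar\zeta)\bigr)^{*}.
\]
Because $e^{-a\sigma_3}\begin{pmatrix}0&-1\\1&0\end{pmatrix}e^{-a\sigma_3}=\begin{pmatrix}0&-1\\1&0\end{pmatrix}$, the essential singularities now cancel exactly and $M$ is bounded near $0$; together with $M(\zeta)=O(\zeta^{-3/2})$ at infinity this gives $\int_{\mathbb{R}}M_+(x)\,dx=0$. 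Adding this identity to its Hermitian conjugate and using $\theta_+(x)=-\theta_-(x)$ for $x<0$ produces a nonnegative integrand, forcing the first column of $\widetilde{\Psi}^{(2)}_-$ to vanish on $\mathbb{R}$; Carlson's theorem (not Liouville) is then invoked to kill the remaining column in each half-plane. Your outline becomes correct once you replace $H$ by $M$ with this antidiagonal insertion.
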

\begin{proof}
We first bring all the jumps of $\widetilde{\Psi}^{(1)}$ to the real axis and remove the exponential term in its large $\zeta$ behavior by introducing the following transformation
\begin{equation}\label{def:psi-tilde-2}
 \widetilde{\Psi}^{(2)}(\zeta): = \widetilde{\Psi}^{(1)}(\zeta)e^{\theta(\zeta) \sigma_3}
 \left\{
 \begin{array}{ll}
    I, &  \textrm{for $\zeta \in \Omega_1\cup \Omega_4$,}
    \\
    \begin{pmatrix}
                                 1 &0 \\
                                 e^{2\theta(\zeta)} &1
                                 \end{pmatrix}, &   \textrm{for $\zeta \in \Omega_2$,} \\[.4cm]
    \begin{pmatrix}
                                 1 &0 \\
                                 -e^{2\theta(\zeta)} &1
                                 \end{pmatrix}, &   \textrm{for $\zeta \in \Omega_3$,}
 \end{array}  \right .
 \end{equation}
where the regions $\Omega_1-\Omega_4$ are depicted in Figure \ref{contour-for-model}.
Then, it is easily seen that $\widetilde{\Psi}^{(2)}(\zeta)$ satisfies the following RH problem.
\subsubsection*{RH problem for $\widetilde{\Psi}^{(2)}$}
\begin{enumerate}
\item[\rm (a)] $\widetilde{\Psi}^{(2)}(\zeta)$ is defined and analytic in
  $\mathbb{C}\setminus \mathbb{R}$.
\item[\rm (b)] $\widetilde{\Psi}^{(2)}(\zeta)$ satisfies the jump condition
\begin{equation}\label{psi-tilde-2-jump}
 \widetilde{\Psi}^{(2)}_+(x)= \widetilde{\Psi}^{(2)}_-(x)
 \left\{
 \begin{array}{ll}
    \begin{pmatrix}
                                 1 &e^{-2\theta(x)} \\
                                 0 &1
                                 \end{pmatrix}, &   x>0, \\[.4cm]
    \begin{pmatrix}
                                 e^{2\theta_+(x)} &1 \\
                                 0 &e^{2\theta_-(x)}
                                 \end{pmatrix}, &   x<0.
 \end{array}  \right .  \end{equation}
\item[\rm (c)] As $\zeta\to\infty$, we have
\begin{equation}\label{psi-tilde-2-infty}
 \widetilde{\Psi}^{(2)}(\zeta)=
O\left(\frac{1}{\zeta^{3/4}}\right).
   \end{equation}
\item[\rm (d)]As $\zeta\to 0 $, we have
\begin{equation}\label{psi-tilde-2-origin}
 \widetilde{\Psi}^{(2)}(\zeta)=O(1)
  e^{-\sum_{k=1}^{2m}\tau_k\zeta^{-k}\sigma_3}.
  \end{equation}
\end{enumerate}

We next define a matrix-valued function $M$ by
\begin{equation}\label{def:M}
M(\zeta):= \widetilde{\Psi}^{(2)}(\zeta)
    \begin{pmatrix}
    0 &-1
    \\
    1 &0
    \end{pmatrix}\left(\widetilde{\Psi}^{(2)}(\bar{\zeta})\right)^*,~~\zeta\in \mathbb{C}\setminus \mathbb{R},
 \end{equation}
where $\bar{\zeta}$  denotes the conjugate of $\zeta$ and $A^*$ stands for the Hermitian conjugate of the matrix $A$.
From the RH problem for $ \widetilde{\Psi}^{(2)}$, it is readily seen that $M(\zeta)$ is analytic $\mathbb{C} \setminus \mathbb{R}$. Moreover,
$M(\zeta)$ is bounded near the origin and $M(\zeta)=O\left(\frac{1}{\zeta^{3/2}}\right)$ as $\zeta\to \infty$. Thus, by Cauchy's theorem, we have
\begin{equation}\label{integral-1}
\int_{\mathbb{R}} M_+(x)dx=0.
 \end{equation}
Using the jump condition \eqref{psi-tilde-2-jump} and the definition of $M$ in \eqref{def:M}, the integral \eqref{integral-1} can be rewritten as
\begin{multline}\label{integral-2}
\int_{-\infty}^0 \widetilde{\Psi}^{(2)}_-(x)
    \begin{pmatrix}
                                1 &-e^{2\theta_+(x)} \\
                                 e^{2\theta_-(x)} &0
                                 \end{pmatrix}\left(\widetilde{\Psi}^{(2)}_-(x)\right)^*dx
                                 \\
                                +\int^{\infty}_0 \widetilde{\Psi}^{(2)}_-(x)dx
    \begin{pmatrix}
                                e^{2\theta(x)}&-1 \\
                                 1 &0
                                 \end{pmatrix}\left(\widetilde{\Psi}^{(2)}_-(x)\right)^*dx=0.
 \end{multline}
By adding this relation to its Hermitian conjugate, we have
 \begin{multline}\label{integral-3}
\int_{-\infty}^0 \widetilde{\Psi}^{(2)}_-(x)
    \begin{pmatrix}
                                2 &0\\
                                0 &0
                                 \end{pmatrix}\left(\widetilde{\Psi}^{(2)}_-(x)\right)^*dx  +\int^{\infty}_0 \widetilde{\Psi}^{(2)}_-(x)
    \begin{pmatrix}
                                2 e^{2\theta(x)}&0 \\
                                 0 &0
                                 \end{pmatrix}\left(\widetilde{\Psi}^{(2)}_-(x)\right)^*dx=0,
 \end{multline}
where use has been made of the fact that $\theta_+(x)$ is purely imaginary and
$$\theta_+(x)=-\theta_-(x),\qquad x<0.$$
Thus, we obtain from \eqref{integral-3} that the first column of $\widetilde{\Psi}^{(2)}_-(x)$ vanishes for real value of $x$, which also implies the first column of $\widetilde{\Psi}^{(2)}$ vanishes in the lower half complex plane. By the jump relation \eqref{psi-tilde-2-jump}, the second column of $\widetilde{\Psi}^{(2)}$ vanishes in the upper half complex plane. It then follows from the Carlson's theorem that the other entries of $\widetilde{\Psi}^{(2)}$  vanish in the complex plane as well, cf. \cite{ik,xz2011}. Hence, on account of \eqref{def:psi-tilde-2}, we arrive at $\widetilde{\Psi}^{(1)}(\zeta)\equiv 0$.

This completes the proof of Lemma \ref{thm:vanishing lemma }.
\end{proof}

By a standard analysis  \cite{dkmv1,dkmv2,fikn}, the following proposition is immediate.
\begin{prop}\label{thm:exsitence}
There exists a unique solution to the RH problem for $\Psi$ for the parameters $s\in\mathbb{R}$ and $\vec{\tau}=(\tau_1, \tau_2,\ldots, \tau_{2m})\in\mathbb{R}^{2m-1}\times (0,\infty)$.
\end{prop}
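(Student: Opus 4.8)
The plan is to deduce Proposition~\ref{thm:exsitence} from the Vanishing Lemma via the standard functional-analytic machinery for $2\times 2$ RH problems with piecewise-analytic jumps. First I would observe that the RH problem for $\Psi$ can be recast as a singular integral equation: after conjugating out the explicit behavior at $\infty$ and at $0$ (i.e.\ multiplying on the right by $e^{\theta(\zeta)\sigma_3}\,\cdot\,e^{\sum_k\tau_k\zeta^{-k}\sigma_3}$-type factors and by the constant matrix $\frac{I+i\sigma_1}{\sqrt 2}$, $\zeta^{\frac14\sigma_3}$, $e^{\frac14\pi i\sigma_3}$ appearing in \eqref{Psi-infty}), one obtains a normalized RH problem on a fixed oriented contour $\Sigma=\cup_{j=1}^4\Sigma_j\cup\mathbb R_\pm$ with jump matrix $v(\zeta)$ that decays to $I$ exponentially fast as $\zeta\to\infty$ along the unbounded rays and is bounded (though oscillatory/growing near $0$, handled by the local model behavior in item (d)) elsewhere. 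The key structural point is that $v-I$ lies in $L^2\cap L^\infty$ of the contour, so the associated Cauchy operator $C_{v}\colon f\mapsto C_-\big(f(v-I)\big)$ is bounded on $L^2(\Sigma)$, and $I-C_v$ is Fredholm of index zero because $C_{v}$ is a compact perturbation of a Fredholm operator (the jump has unit determinant and the singularities at the self-intersection points are of the mild type covered in \cite{dkmv1,dkmv2,fikn}).

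Next I would invoke the standard equivalence: a solution of the normalized RH problem exists if and only if $I-C_v$ is invertible on $L^2(\Sigma)$, and since the operator is Fredholm of index zero, invertibility is equivalent to injectivity, i.e.\ triviality of the kernel. A nonzero element of the kernel would produce, via the Cauchy transform, a nontrivial solution of the \emph{homogeneous} RH problem — precisely the object $\widetilde\Psi^{(1)}$ of Lemma~\ref{thm:vanishing lemma }, whose large-$\zeta$ behavior is the decaying one in \eqref{Psi-tilde-infty} and which still satisfies items (a), (b), (d). By the Vanishing Lemma this forces $\widetilde\Psi^{(1)}\equiv 0$, hence the kernel is trivial, $I-C_v$ is invertible, and a solution $\Psi$ exists. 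Uniqueness follows by the same token: the difference of two solutions, after the same normalization, is a homogeneous solution and hence vanishes; alternatively one notes $\det\Psi$ is entire (the jumps have determinant $1$), bounded, and $\to 1$ at $\infty$ by \eqref{Psi-infty}, so $\det\Psi\equiv 1$, and then $\Psi_1\Psi_2^{-1}$ has no jumps and tends to $I$, hence equals $I$.

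I would then spell out the two technical checks that make the ``standard analysis'' citation legitimate in this setting. First, the behavior at the intersection point $\zeta=0$: item (d) prescribes that $\Psi$ equals an analytic (in fact $O(1)$) prefactor times an \emph{explicit} exponentially singular factor $e^{-\sum_k\tau_k\zeta^{-k}\sigma_3}$ times a constant connection matrix; since this singular factor is the same on both sides that matching allows one to define the normalized unknown to be bounded near $0$, so $0$ contributes no growth to the $L^2$ theory and only a removable singularity. Second, the behavior at $\infty$ along the four rays: $\theta(\zeta)=\frac23\zeta^{3/2}+s\zeta^{1/2}$ has $\Re\theta$ of one sign on $\Sigma_1$ and the opposite on $\Sigma_2\cup\Sigma_4$ in the relevant sectors (opening angle $\pi/3$ chosen exactly so that the off-diagonal entries of the conjugated jumps decay), giving the needed exponential decay of $v-I$ away from a compact set; on $\Sigma_3=\mathbb R_-$ the jump $\begin{psmallmatrix}0&1\\-1&0\end{psmallmatrix}$ is constant, which is handled by an additional local rotation as in \cite{fikn}. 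The main obstacle is therefore not any single hard estimate but assembling these pieces correctly — in particular verifying the index-zero Fredholm property in the presence of the self-intersections and the essential singularity at $0$ — and once that framework is in place the Vanishing Lemma does all the real work. Hence Proposition~\ref{thm:exsitence} follows.
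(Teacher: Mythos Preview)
Your proposal is correct and follows exactly the approach the paper has in mind: the paper's own proof is the single sentence ``By a standard analysis \cite{dkmv1,dkmv2,fikn}, the following proposition is immediate,'' and what you have written is precisely a fleshed-out version of that standard analysis (normalize to a small-norm/singular-integral formulation, use Fredholmness of index zero, and invoke the Vanishing Lemma to kill the kernel). There is no substantive difference in strategy, only in the level of detail.
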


\subsection{Lax pair equations and the coupled Painlev\'{e}  system}

We next derive the Lax pair for $\Psi$ and establish its connection to the coupled Painlev\'{e} system \eqref{eq:b-k} from the associated compatibility condition.
\begin{prop}\label{thm:laxpair}
Let $\Psi=\Psi(\zeta;s, \vec{\tau})$ be a solution of the model RH problem. Then, we have the following Lax pair:
    \begin{align}
    \frac{\partial\Psi}{\partial \zeta} & =  \left(\sum_{k=0}^{2m+1}\frac{A_k}{\zeta^k} +\zeta\sigma_{-} \right) \Psi, \label{Lax pair-1} \\[.1cm]
    \frac{\partial\Psi}{\partial s} &= B(\zeta;s)\Psi, \label{Lax pair-2}
  \end{align}
  where $\sigma_{-}=
               \begin{pmatrix}
                 0 & 0 \\
                 1 & 0\\
               \end{pmatrix}$,
  the coefficient matrices take the form
  \begin{align} \label{A-0}
      A_0(s) & = \begin{pmatrix}
                0 & 1 \\
                b_1+s& 0
              \end{pmatrix},
              \\
       A_k(s) &= \begin{pmatrix}
    \frac{b_k'}{2} & -b_k \\
    \frac{b_k''}{2}-(2b_1+s)b_k-b_{k+1} & - \frac{b_k'}{2}
  \end{pmatrix},\qquad k=1,2,\ldots,2m+1,
  \label{A-k}
  \end{align}
 with $' =\frac{d}{ds}$, and
  \begin{equation} \label{psi-B}
    B(\zeta;s) = \begin{pmatrix}
    0 &1\\ \zeta+2b_1+s &0
  \end{pmatrix}.
  \end{equation}
Moreover, the functions $b_1(s),\ldots,b_{2m+1}(s)$ in \eqref{A-0} and \eqref{A-k} satisfy
the coupled Painlev\'{e} XXXIV system \eqref{eq:b-k} with the parameters $\tilde \tau_p$ given by \eqref{def:tauptilde}.
\end{prop}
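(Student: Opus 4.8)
The plan is to derive the Lax pair directly from the RH problem for $\Psi$ by the standard Flaschka--Newell / Its approach, and then extract the coupled Painlev\'e system from the zero-curvature (compatibility) condition. First I would normalize: set $\Phi(\zeta) := \begin{pmatrix} 1 & 0 \\ a_1 & 1\end{pmatrix}^{-1}\Psi(\zeta)$ so that the leading constant prefactor in \eqref{Psi-infty} is removed, and observe that the jump matrices in \eqref{Psi-jump} and \eqref{Psi-origin} are all \emph{independent} of both $\zeta$ (piecewise) and of $s$. Hence the logarithmic derivatives $\partial_\zeta \Psi \cdot \Psi^{-1}$ and $\partial_s \Psi \cdot \Psi^{-1}$ have no jumps across any $\Sigma_j$ and extend to single-valued meromorphic functions on $\mathbb{C}$, with possible poles only at $\zeta = \infty$ and $\zeta = 0$. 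The behaviour at $\infty$ is controlled by \eqref{Psi-infty}: conjugating $\partial_\zeta\Psi\cdot\Psi^{-1}$ by the explicit factor $e^{-\frac14\pi i\sigma_3}\zeta^{-\frac14\sigma_3}\frac{I+i\sigma_1}{\sqrt2}e^{-\theta(\zeta)\sigma_3}$ produces (using $\theta'(\zeta) = \zeta^{1/2} + \tfrac{s}{2}\zeta^{-1/2}$) a term growing like $\zeta$ plus lower-order terms, which forces $\partial_\zeta\Psi\cdot\Psi^{-1} = \zeta\sigma_- + A_0 + O(1/\zeta)$ near infinity; the precise form of $A_0$ in \eqref{A-0} comes from matching the $\zeta^0$ coefficient and using $(\Psi_1)_{12} = a_1$. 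At $\zeta = 0$, \eqref{Psi-origin} with the essential singularity $e^{-\sum_k \tau_k \zeta^{-k}\sigma_3}$ shows $\partial_\zeta\Psi\cdot\Psi^{-1}$ has a pole of order $2m+1$, while $\partial_s\Psi\cdot\Psi^{-1}$ stays bounded at $0$ because $\partial_s$ of that exponential vanishes identically. Liouville's theorem then yields \eqref{Lax pair-1} with some coefficient matrices $A_k$ and \eqref{Lax pair-2} with $B$ affine in $\zeta$, i.e. $B = \zeta B_1 + B_0$; reading off the $\zeta$-coefficient from the $\infty$-expansion gives $B_1 = \sigma_-$ up to conjugation, and matching the $\zeta^0$ term gives the form \eqref{psi-B} with the as-yet-unnamed function $2b_1 + s$ appearing in the $(2,1)$ entry.

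Next I would pin down the entries of the $A_k$. Introduce $b_1(s)$ as the relevant entry produced above (so that $(A_0)_{21} = b_1 + s$ and $B_{21} = \zeta + 2b_1 + s$), and for $k \geq 1$ define $b_k(s) := -(A_k)_{12}$. The symmetry constraints come from two sources: the factor $e^{-\theta\sigma_3}$ together with the $\zeta^{-1/4\sigma_3}$ and $\frac{I+i\sigma_1}{\sqrt2}$ conjugation force a specific trace and determinant structure on the expansion, and the essential-singularity structure at $0$ — the fact that, after stripping $e^{-\sum \tau_k\zeta^{-k}\sigma_3}$, the remaining factor is $\Psi_0(s)(I + O(\zeta))$ with $\Psi_0$ unimodular and $\zeta$-independent — forces $A_{\rm singular}(\zeta) := \sum_{k=1}^{2m+1} A_k \zeta^{-k}$ to be conjugate (by the constant $\Psi_0$) to $\frac{d}{d\zeta}\big(-\sum_{k=1}^{2m}\tau_k\zeta^{-k}\big)\sigma_3 = \big(\sum_{k=1}^{2m} k\tau_k \zeta^{-k-1}\big)\sigma_3$ plus an $O(1)$ correction; in particular each $A_k$ must be traceless, and its eigenvalues are determined by the $\tau_j$'s. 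This is precisely what makes the combination $\tilde\tau_p = \sum (k-1)(p-k-1)\tau_{k-1}\tau_{p-k-1}$ in \eqref{def:tauptilde} appear: it is (up to the shift of indices) the coefficient in the Laurent expansion of $\big(\sum k\tau_k\zeta^{-k-1}\big)^2$, i.e. of $-\det$ of the singular part. Matching $\det A_{\rm singular}$ order by order — together with the recursive relation \eqref{eq:recursion relation} that is forced by requiring $B$ to be affine (the $s$-compatibility of the hierarchy) — gives the $(2,1)$ entry of $A_k$ as $\tfrac12 b_k'' - (2b_1+s)b_k - b_{k+1}$ and the diagonal entries as $\pm\tfrac12 b_k'$.

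Finally, the coupled Painlev\'e XXXIV system \eqref{eq:b-k} is obtained from the zero-curvature condition
\begin{equation*}
\frac{\partial}{\partial s}\Big(\sum_{k=0}^{2m+1}\frac{A_k}{\zeta^k} + \zeta\sigma_-\Big) - \frac{\partial B}{\partial\zeta} + \Big[\sum_{k=0}^{2m+1}\frac{A_k}{\zeta^k} + \zeta\sigma_-, \ B\Big] = 0,
\end{equation*}
which must hold identically in $\zeta$. I would expand this in powers of $\zeta$: the positive and zeroth powers, plus the $\zeta^{-1}$ coefficient, reproduce the defining relations $(A_0)_{21} = b_1+s$, $B_{21} = \zeta + 2b_1+s$ and the recursion \eqref{eq:recursion relation} for $k=1,\dots,2m+1$ with $b_{2m+2}=0$; the coefficients of $\zeta^{-p+\text{(shift)}}$ — concretely the $(2,1)$ entries at the appropriate negative orders — give, after substituting the recursion to eliminate $b_{k+1}$ in favour of $b_k''$, exactly the $2m+1$ second-order equations \eqref{eq:b-k} for $p = 2m+2,\dots,4m+2$, with the constants on the right being the $\tilde\tau_p$ identified above. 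I expect the main obstacle to be bookkeeping: carefully tracking the conjugation by the non-single-valued prefactor $e^{-\frac14\pi i\sigma_3}\zeta^{-\frac14\sigma_3}\frac{I+i\sigma_1}{\sqrt2}e^{-\theta\sigma_3}$ to justify that $\partial_\zeta\Psi\cdot\Psi^{-1}$ is genuinely rational (the half-integer powers must cancel, which they do only because of the specific $\frac{I+i\sigma_1}{\sqrt2}$ rotation, the analogue of the Airy-type structure), and then matching coefficients through $2m+1$ orders at $\zeta=0$ to nail down the off-diagonal entries of the $A_k$ and to verify that the constants produced really equal \eqref{def:tauptilde}. Everything else is linear algebra and Liouville's theorem.
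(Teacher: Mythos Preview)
Your approach is essentially the same as the paper's: constant jumps imply $\partial_\zeta\Psi\cdot\Psi^{-1}$ and $\partial_s\Psi\cdot\Psi^{-1}$ are rational, the asymptotics at $\infty$ and $0$ fix the pole structure, and the zero-curvature condition pins down the entries of $A_k$ in terms of the $b_k$. One point needs correcting, though. You write that the system \eqref{eq:b-k}, indexed by $p=2m+2,\dots,4m+2$, comes from ``the coefficients of $\zeta^{-p+\text{(shift)}}$'' in the zero-curvature relation. But $B$ is affine in $\zeta$ and $A$ has a pole of order $2m+1$ at $0$, so the zero-curvature identity $\partial_s A-\partial_\zeta B+[A,B]=0$ only carries information down to order $\zeta^{-(2m+1)}$; there is nothing at orders $\zeta^{-(2m+2)},\dots,\zeta^{-(4m+2)}$ to read off. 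What the zero-curvature condition actually gives you (as the paper does) is the explicit form \eqref{A-k} of each $A_k$ and the Lenard-type recursion \eqref{eq:recursion relation}.

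The equations \eqref{eq:b-k} with the constants $\tilde\tau_p$ come instead from the \emph{determinant} calculation you already sketched: from \eqref{Psi-origin} one sees that $A(\zeta)$ is conjugate near $0$ to $\big(\sum_{k=1}^{2m}k\tau_k\zeta^{-k-1}\big)\sigma_3+O(1)$, so $\det A(\zeta)=-\sum_{p=2m+2}^{4m+2}\tilde\tau_p\zeta^{-p}+O(\zeta^{-(2m+1)})$ with $\tilde\tau_p$ as in \eqref{def:tauptilde}; expanding $\det A$ on the other hand using the explicit $A_k$ from \eqref{A-k} and matching the $\zeta^{-p}$ coefficients for $p=2m+2,\dots,4m+2$ yields exactly \eqref{eq:b-k}. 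So keep your determinant argument and drop the claim that zero-curvature alone produces those $2m+1$ equations.
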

\begin{proof}
Since the jumps in the RH problem for $\Psi$ are constant matrices,  it follows that the functions
\begin{equation}
A(\zeta;s):=\frac{\partial \Psi}{\partial \zeta}\cdot \Psi^{-1}, \qquad B(\zeta;s):=\frac{\partial \Psi}{\partial s}\cdot \Psi^{-1}
\end{equation}
are meromorphic in $\zeta$ with possible isolated singularity at the origin. From the asymptotic behaviors of $\Psi$ near $\zeta=\infty$ and $\zeta=0$ as given in \eqref{Psi-infty}--\eqref{Psi-origin}, we have
that
\begin{align}\label{eq:Aexp}
A(\zeta;s)&=\sum_{k=0}^{2m+1}\frac{A_k}{\zeta^k}+\zeta\sigma_-,
\\
B(\zeta;s)&=
\begin{pmatrix}
0 & 1 \\
\zeta+b_1(s)+a_1'(s)+\frac {s}{2} & 0
       \end{pmatrix},
\end{align}
where
\begin{align}\label{eq:A-0}
 A_0&=\begin{pmatrix}
         0 & 1 \\
         b_1(s)+s & 0
 \end{pmatrix},
\\
\label{A-1}
b_1(s) & = -(A_1)_{12}=a_1(s)^2-2a_2(s)-\frac{s}{2}
 \end{align}
with
\begin{equation}
a_2(s)=a_2(s;\vec{\tau})=(\Psi_1(s;\vec{\tau}))_{11}.
\end{equation}

To show that the other $\zeta$-independent matrices $A_k$, $k=1,\ldots,2m+1$ in \eqref{eq:Aexp} have the explicit expressions as given in \eqref{A-k}, we note that the compatibility condition
$$\frac{\partial^2 \Psi}{\partial \zeta \partial s}=\frac{\partial^2 \Psi}{\partial s \partial \zeta }$$
for the differential equations \eqref{Lax pair-1} and \eqref{Lax pair-2} is the zero curvature
relation
\begin{equation}
  \frac{\partial A}{\partial s} -\frac{ \partial B}{\partial \zeta} +[A,B] = 0,
\end{equation}
where $[L,K] = LK - KL$ stands for the standard commutator of two matrices. Hence, it follows that
\begin{align}\label{eq:zerocurv}
&\sum_{k=1}^{2m+1}\frac{A_k'}{\zeta^k}
+
\begin{pmatrix}
0 & 0
\\
b_1'(s) & 0
\end{pmatrix}
\nonumber
\\
&=
\begin{pmatrix}
0 & 1 \\
\zeta+b_1(s)+a_1'(s)+\frac {s}{2} & 0
\end{pmatrix}
\left(\sum_{k=1}^{2m+1}\frac{A_k}{\zeta^k}+\begin{pmatrix}
0 & 1 \\
\zeta+b_1(s)+s & 0
\end{pmatrix}\right)
\nonumber
\\
&\qquad -\left(\sum_{k=1}^{2m+1}\frac{A_k}{\zeta^k}+\begin{pmatrix}
0 & 1 \\
\zeta+b_1(s)+s & 0
\end{pmatrix}\right)\begin{pmatrix}
0 & 1 \\
\zeta+b_1(s)+a_1'(s)+\frac {s}{2} & 0
\end{pmatrix}.
\end{align}
In addition, since $\det \Psi(\zeta)=1$, we have $\mathrm{Tr}A=0$, which means
\begin{equation}\label{Trace zero}
\mathrm{Tr}A_k=0.
 \end{equation}
As a consequence, if one sets
$$b_k=-(A_k)_{12}, \qquad k=1,\ldots, 2m+1,$$
by comparing the coefficients of $\zeta^j$, $j=0,\ldots,-2m-1$, on both sides of \eqref{eq:zerocurv} and making use of \eqref{Trace zero}, we obtain the relation
\begin{equation} \label{a-1-b-1}
   a_1'=b_1+\frac {s}{2},
\end{equation}
and
 \begin{equation}\label{eq:bi-a-i}
 \begin{cases}
   (A_k)_{11}=-(A_k)_{22}=\frac{b_k'(s)}{2}, \\
          (A_k)_{21}=\frac{b_k''(s)}{2}-(2b_1(s)+s)b_k(s)-b_{k+1}(s),\\
          (A_k)_{21}'=2(2b_1+s)(A_k)_{11}+2(A_{k+1})_{11},
 \end{cases}
 \end{equation}
where $k=1,\ldots, 2m$ and $b_{k}=0$ for $k>2m+1$, as shown in \eqref{A-k}. Substituting the first two equations into the third one, we obtain
 the Lenard type recursion relation \eqref{eq:recursion relation} for $b_k$, $k=1,\ldots, 2m$.

To derive the coupled Painlev\'{e}  system \eqref{eq:b-k}, we observe from the asymptotic behavior of $\Psi$ near $\zeta=0$ (see \eqref{Psi-origin}) that as $\zeta \to 0$,
\begin{equation}\label{A-det}
\det A(\zeta;s)=-\sum_{p=2m+2}^{4m+2}\tilde{\tau}_{p}\zeta^{-p}+O(\zeta^{-(2m+1)}),
 \end{equation}
where the constants $\tilde{\tau}_{p}$ are given in \eqref{def:tauptilde}.
The above formula, together with \eqref{eq:Aexp}, \eqref{A-0} and \eqref{A-k}, implies \eqref{eq:b-k}.

This completes the proof of Proposition \ref{thm:laxpair}.
\end{proof}

\begin{rem}
  We derive the coupled Painlev\'{e} XXXIV  system \eqref{eq:b-k} from \eqref{A-det}. One may also obtain it from the Lenard type recursion relation \eqref{eq:recursion relation} by using a similar argument as in Atkin \cite[Theorem 4.1]{Atkin}.
\end{rem}

\subsection{Asymptotic analysis of the RH problem for $\Psi$ as $s \to +\infty$}\label{sec-Psi-large-S}

In this section, we shall perform the Deift-Zhou steepest descent analysis to the RH problem for $\Psi$ as $s \to +\infty$, which will be essential in proving the asymptotics of $b_1$ shown in \eqref{eq:b1asy}. It consists of a series of explicit and invertible transformations which leads to an RH problem tending to the identity matrix as $s \to +\infty$.

\subsubsection{$\Psi\to U$: Rescaling}
Define
\begin{equation}\label{psi to U}
U(\zeta;s,\vec{\tau})=\Psi(s\zeta;s,\vec{\tau}).
\end{equation}
It is then straightforward to show that the function $U$ satisfies the following RH problem.

\subsubsection*{RH problem for $U$}
\begin{enumerate}
\item[\rm (a)] $U(\zeta)$ is defined and analytic in
  $\mathbb{C}\setminus \{\cup^4_{j=1}\Sigma_j\cup\{0\}\}$, where the contours $\Sigma_j$, $j=1,2,3,4$  are illustrated in Figure \ref{contour-for-model}.
\item[\rm (b)] $U$ shares the same piecewise-constant jump condition as $\Psi(\zeta)$; see \eqref{Psi-jump}.
\item[\rm (c)] As $\zeta\to\infty$, we have
\begin{multline}\label{U-infty}
 U(\zeta;s, \vec{\tau})= \left(
                   \begin{array}{cc}
                     1 & 0\\
                     a_1(s;\vec{\tau}) & 1 \\
                   \end{array}
                 \right)
 \left[ I + \frac{
 \Psi_{1}(s; \vec{\tau})}{s\zeta} +
O\left(\frac{1}{\zeta^2}\right) \right]
 \\
 \times e^{-\frac{1}{4}\pi i\sigma_3}(s\zeta)^{-\frac{1}{4} \sigma_3} \frac{I + i \sigma_1}{\sqrt{2}} e^{-s^{3/2}\left(\frac23 \zeta^{3/2}+ \zeta^{1/2} \right) \sigma_3},
   \end{multline}
  where the functions $a_1$ and $\Psi_1$ are given in \eqref{Psi-infty}.
\item[\rm (d)]As $\zeta\to 0 $, we have
\begin{equation}\label{U-origin}
 U(\zeta)=\Psi_0(s)\left(I+O(\zeta)\right) e^{-\sum_{k=1}^{2m}\tau_ks^{-k}\zeta^{-k}\sigma_3}\left\{
  \begin{array}{ll}
    I, &  \zeta \in \Omega_1\cup\Omega_4,
    \\
 \begin{pmatrix}
                                 1 & 0 \\
                               -1& 1
                                \end{pmatrix},  &  \zeta \in \Omega_2, \\[.4cm]
    \begin{pmatrix}
                                 1 &0 \\
                                  1 &1
                                 \end{pmatrix}, &   \zeta \in \Omega_3,
 \end{array}  \right .
  \end{equation}
where the function $\Psi_0$ is given in \eqref{Psi-origin} and the regions $\Omega_1-\Omega_4$ are depicted in Figure \ref{contour-for-model}.
\end{enumerate}

\subsubsection{$U \to W$: Contour deformation}

In the second transformation we apply contour deformation. The rays $\Sigma_2$ and  $\Sigma_4$
emanating from the origin are replaced by their parallel lines $\widetilde{\Sigma}_2$ and  $\widetilde{\Sigma}_4$ emanating from the point $-1$. These lines divide the whole complex plane into six regions, which we denote by $\Omega_1,\ldots,\Omega_6$; see Figure \ref{contour-for-V} for an illustration.

We then define
\begin{equation}\label{U to W}
W(\zeta)=\left\{\begin{array}{ll}
                 U(\zeta) & \mbox{for} \quad  \zeta \in \cup_{j=1}^4\Omega_j, \\
                 U(\zeta) \begin{pmatrix}
                             1 & 0 \\
                             1 & 1
                            \end{pmatrix}& \mbox{for} \quad \zeta \in \Omega_5,\\[.4cm]
                  U(\zeta)  \begin{pmatrix}
                             1 & 0 \\
                             -1 & 1
                           \end{pmatrix} & \mbox{for} \quad  \zeta \in \Omega_6.
                \end{array}
\right.
\end{equation}
\begin{figure}[h]
 \begin{center}
   \includegraphics[width=7.5cm]{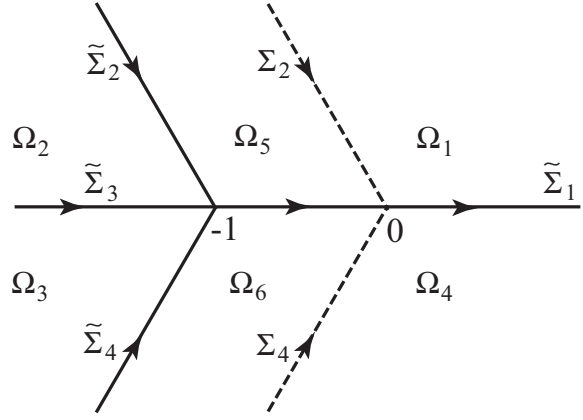} \end{center}
  \caption{The contours $\widetilde{\Sigma}_j$, $j=1,\ldots,4$, and the domains $\Omega_k$, $k=1,\ldots,6$.}
 \label{contour-for-V}
\end{figure}
It is readily seen that the function $W$ defined above satisfies the following conditions.

\subsubsection*{RH problem for $W$}
\begin{enumerate}
\item[\rm (a)] $W(\zeta)$ is analytic in $\mathbb{C} \setminus
\{\cup^4_{j=1}\widetilde{\Sigma}_j\cup\{-1\}\}$, where the contours $\widetilde{\Sigma}_j$, $j=1,2,3,4$, are shown as the solid lines in Figure \ref{contour-for-V}. Note that
\begin{equation}
\widetilde{\Sigma}_1=(-1,+\infty), \qquad \widetilde{\Sigma}_3=(-\infty,-1).
\end{equation}
\item[\rm (b)] $W(\zeta)$ satisfies the jump condition
\begin{equation}\label{W-jump}
 W_+(\zeta)=W_-(\zeta)
 \left\{
 \begin{array}{ll}
    \begin{pmatrix}
                                 1 & 1 \\
                               0& 1
                                 \end{pmatrix}, &  \zeta \in \widetilde{\Sigma}_1, \\[.4cm]
    \begin{pmatrix}
                                0 &1\\
                               -1&0
                                \end{pmatrix},  &  \zeta \in \widetilde{\Sigma}_3, \\[.4cm]
    \begin{pmatrix}
                                 1 &0 \\
                                 1 &1
                                 \end{pmatrix}, &   \zeta \in \widetilde{\Sigma}_2 \cup \widetilde{\Sigma}_4.
 \end{array}  \right .  \end{equation}
\item[\rm (c)] As $\zeta\to \infty$, $W$ has the same asymptotic behavior as $U$.
\item[\rm (d)]As $\zeta\to 0 $, we have
\begin{equation}\label{W-origin}
W(\zeta)=\Psi_0(s)\left(I+O(\zeta)\right) e^{-\sum_{k=1}^{2m}\tau_ks^{-k}\zeta^{-k}\sigma_3}.
\end{equation}
\end{enumerate}

All the conditions in the above RH problem are straightforward to check except the jump condition on $(-1,0)$, which we verify now. By \eqref{U to W} and item (b) in the RH problem for $U$, we have, if $\zeta\in(-1,0)$,
\begin{align*}
W_+(\zeta)&=U_{+}(\zeta)\begin{pmatrix}
                             1 & 0 \\
                             1 & 1 \\
                            \end{pmatrix}
         =U_{-}(\zeta)\begin{pmatrix}
                             0 & 1 \\
                             -1 & 0 \\
                            \end{pmatrix}\begin{pmatrix}
                             1 & 0 \\
                             1 & 1 \\
                            \end{pmatrix}
\\
&=W_{-}(\zeta)\begin{pmatrix}
                             1 & 0 \\
                             1 & 1 \\
                            \end{pmatrix}\begin{pmatrix}
                             0 & 1 \\
                             -1 & 0 \\
                            \end{pmatrix}\begin{pmatrix}
                             1 & 0 \\
                             1 & 1 \\
                            \end{pmatrix}
=W_{-}(\zeta)\begin{pmatrix}
                             1 & 1 \\
                             0 & 1 \\
                            \end{pmatrix},
\end{align*}
as shown in \eqref{W-jump}.

\subsubsection{$ W \to Q$: Normalization at $\infty$ and $0$}
Define
\begin{equation}\label{def:g}
g(\zeta)=\frac{2}{3}(\zeta+1)^{\frac{3}{2}}, \qquad \zeta\in\mathbb{C}\setminus(-\infty, -1],
\end{equation}
where $\arg(\zeta+1)\in(-\pi,\pi)$. It is easy to see that
\begin{equation}\label{eq:asyofg}
g(\zeta)= \frac{2}{3}\zeta^{3/2}+\zeta^{1/2}+ \frac {1}{4\zeta^{1/2}}+O(1/\zeta^{3/2}), \qquad \zeta \to \infty,
\end{equation}
and
\begin{equation}\label{eq:gjump}
g_{+}(\zeta)+g_-(\zeta)=0, \qquad \zeta<-1.
\end{equation}
We also define
\begin{equation}\label{def:q}
q(\zeta)=\sqrt{\zeta+1}\sum_{k=1}^{2m}\frac{c_k}{s^{k}\zeta^k}, \qquad \zeta\in\mathbb{C}\setminus(-\infty, -1],
\end{equation}
with $\arg(\zeta+1)\in(-\pi,\pi)$, where the coefficients $c_k=c_k(s)$ are chosen such that as $\zeta\to 0$
\begin{equation}\label{q-zeta-approx}
q(\zeta)=\sum_{k=1}^{2m}\frac{\tau_k}{s^{k}\zeta^k}+O(1).
\end{equation}
Again, we have that
\begin{equation}\label{eq:qjump}
q_{+}(\zeta)+q_-(\zeta)=0, \qquad \zeta<-1.
\end{equation}
Note that
$$ \sqrt{\zeta  +1}=\sum_{k=0}^\infty \frac{(-1)^k(-\frac{1}{2})_k}{k!}\zeta^k, \qquad \textrm{for } |\zeta| < 1$$
with $(a)_k=\frac{\Gamma(a+k)}{\Gamma(a)}=a(a+1)\ldots(a+k-1)$ being the Pochhammer symbol. It is readily seen that the constants $c_k$, $k=1,\ldots,2m$, in \eqref{def:q} satisfy the following linear system
\begin{equation}\label{eq:ck}
\sum_{j=i}^{2m} (-1)^{j-i}\frac{(-\frac12)_{j-i}}{(j-i)!}\frac{c_j}{s^j}=\frac{\tau_i}{s^i},\qquad i=1,\ldots,2m.
\end{equation}
Since the coefficient matrix in \eqref{eq:ck} is an upper triangular matrix, one can determine these constants recursively to obtain that
\begin{equation}
c_{2m}=\tau_{2m},\quad c_{2m-1}=\left(\frac{\tau_{2m-1}}{s^{2m-1}}-\frac{\tau_{2m}}{2s^{2m}}\right)s^{2m-1}, \quad \ldots.
\end{equation}
In particular, we have
\begin{equation}\label{eq:c1asy}
c_1=\tau_1+O(1/s), \qquad s\to+\infty.
\end{equation}

The third transformation is then defined by
\begin{equation}\label{W to Q}
Q(\zeta)=\begin{pmatrix}
             1 & 0\\
             -i\left(\frac{s^{3/2}}{4}+\frac {c_1}{ s }
             \right) & 1 \\
           \end{pmatrix}
s^{\frac{1}{4}\sigma_3}e^{\frac{1}{4}\pi i\sigma_3}
\begin{pmatrix}
1 & 0 \\
-a_1(s) & 1 \\
\end{pmatrix}
W(\zeta)e^{(s^{3/2}g(\zeta)+q(\zeta))\sigma_3}.
\end{equation}
By \eqref{eq:gjump} and \eqref{eq:qjump}, it is readily to verify that $Q$ satisfies the following RH problem.

\subsubsection*{RH problem for $Q$}
\begin{enumerate}
\item[\rm (a)] $Q(\zeta)$ is defined and analytic in $\mathbb{C} \setminus
\{\cup^4_{j=1}\widetilde{\Sigma}_j\cup\{-1\}\}$.

\item[\rm (b)] $Q(\zeta)$  satisfies the jump condition
\begin{equation}\label{Q-jump}
 Q_+(\zeta)=Q_-(\zeta)J_Q(\zeta)=Q_-(\zeta)
 \left\{
 \begin{array}{ll}
\begin{pmatrix}
1 & e^{-2(s^{3/2}g(\zeta)+q(\zeta))} \\
0 & 1
\end{pmatrix}, &  \zeta \in \widetilde{\Sigma}_1, \\[.4cm]
    \begin{pmatrix}
                                1 & 0\\
                               e^{2(s^{3/2}g(\zeta)+q(\zeta))} & 1
    \end{pmatrix},  &  \zeta \in \widetilde{\Sigma}_2\cup \widetilde{\Sigma}_4, \\[.4cm]
    \begin{pmatrix}
                                0 & 1\\
                               -1 & 0
    \end{pmatrix},  & \zeta \in \widetilde{\Sigma}_3.
\end{array}  \right .
\end{equation}

\item[\rm (c)] As $\zeta\to\infty$, we have
\begin{equation}\label{Q: infty}
Q(\zeta)=
 \left( I +\frac{Q_1}{\zeta}+ O\left(\frac{1}{\zeta^2}\right)
 \right)\zeta^{-\frac{1}{4} \sigma_3} \frac{I + i \sigma_1}{\sqrt{2}}
   \end{equation}
with
\begin{equation}\label{def:Q-12}
(Q_1)_{12}=i\left(\frac{a_1(s)}{s^{1/2}}-\frac{s^{3/2}}{4}-\frac {c_1}{s}\right).
\end{equation}
\item[\rm (d)]$Q(\zeta)$ is bounded near the origin.
\end{enumerate}

For the convenience of the reader, we give a proof of \eqref{Q: infty} and \eqref{def:Q-12} in what follows. By \eqref{U-infty} and \eqref{U to W}, it follows that,
as $\zeta \to \infty$,
\begin{multline}\label{W-infty2}
 W(\zeta)= \begin{pmatrix}
                     1 & 0\\
                     a_1(s) & 1 \\
 \end{pmatrix}e^{-\frac{1}{4}\pi i\sigma_3}s^{-\frac{1}{4}\sigma_3}
 \left[ I + \frac{
 \widehat{\Psi}_{1}(s)}{\zeta} +
O\left(\frac{1}{\zeta^2}\right) \right]
 \\
 \times \zeta^{-\frac{1}{4} \sigma_3} \frac{I + i \sigma_1}{\sqrt{2}} e^{-s^{3/2}\left(\frac23 \zeta^{3/2}+ \zeta^{1/2} \right) \sigma_3},
   \end{multline}
where
\begin{equation}
\widehat{\Psi}_{1}(s)=\frac{1}{s}
\begin{pmatrix}
\ast & i s^{1/2}a_1(s)
\\
\ast & \ast
\end{pmatrix}
\end{equation}
with $\ast$ being certain unimportant entries. In view of \eqref{eq:asyofg} and \eqref{def:q}, we note that
\begin{multline}
e^{-s^{3/2}\left(\frac23 \zeta^{3/2}+ \zeta^{1/2}\right)\sigma_3}e^{(s^{3/2}g(\zeta)+q(\zeta))\sigma_3}
\\
=\left[I+\begin{pmatrix}
\frac{c_1}{s}+\frac{s^{3/2}}{4} & 0
\\
0 & -\frac{c_1}{s}-\frac{s^{3/2}}{4}
\end{pmatrix}\frac{1}{\zeta^{1/2}}+O\left(\frac{1}{\zeta^{3/2}}\right)
\right]
\end{multline}
for large $\zeta$. Inserting the above formula into \eqref{W-infty2}, it follows from an elementary calculation that
\begin{multline}
W(\zeta)e^{(s^{3/2}g(\zeta)+q(\zeta))\sigma_3}=\begin{pmatrix}
                     1 & 0\\
                     a_1(s) & 1 \\
 \end{pmatrix}e^{-\frac{1}{4}\pi i\sigma_3}s^{-\frac{1}{4}\sigma_3}\begin{pmatrix}
             1 & 0\\
             i\left(\frac{s^{3/2}}{4}+\frac {c_1}{ s }
             \right) & 1 \\
           \end{pmatrix}
 \\
 \times
 \left[ I+\begin{pmatrix}
             \ast & i\left(\frac{a_1(s)}{s^{1/2}}-\frac{s^{3/2}}{4}-\frac {c_1}{s}\right)\\
             \ast & \ast
           \end{pmatrix}
           \frac{1}{\zeta} +
O\left(\frac{1}{\zeta^2}\right) \right]
\zeta^{-\frac{1}{4} \sigma_3} \frac{I + i \sigma_1}{\sqrt{2}}.
\end{multline}
This, together with \eqref{W to Q}, gives us \eqref{Q: infty} and \eqref{def:Q-12}.

\subsubsection{Outer parametrix}
By \eqref{def:g}, we see that for sufficiently large positive $s$,
 the jump matrices of $Q$ tend to the identity matrix exponentially fast except the one on $\widetilde{\Sigma}_3=(-\infty,-1)$.
Thus, we expect that $Q$ should be approximated by the solution to the following RH problem.

\subsubsection*{RH problem for $Q^{(\infty)}$}
\begin{enumerate}
\item[\rm (a)] $Q^{(\infty)}(\zeta)$ is defined and analytic in $\mathbb{C} \setminus
(-\infty,-1].$

\item[\rm (b)] $Q^{(\infty)}(\zeta)$  satisfies the jump condition
\begin{equation}\label{Q-out-jump}
 Q^{(\infty)}_+(\zeta)=Q^{(\infty)}_-(\zeta)
       \begin{pmatrix}
       0 & 1\\
       -1 & 0
       \end{pmatrix}
                             ,\qquad \zeta\in(-\infty, -1).
    \end{equation}

\item[\rm (c)] As $\zeta\to\infty$, we have
\begin{equation}
Q^{(\infty)}(\zeta)=
 \left( I + O\left(\frac{1}{\zeta}\right)
 \right)\zeta^{-\frac{1}{4} \sigma_3} \frac{I + i \sigma_1}{\sqrt{2}}.
   \end{equation}
\end{enumerate}

The solution to the above RH problem is explicitly given by
\begin{equation}\label{Out solution}
Q^{(\infty)}(\zeta)=
(\zeta+1)^{-\frac 14\sigma_3} \frac{I + i \sigma_1}{\sqrt{2}},
\end{equation}
where the branch is chosen as $\arg(\zeta+1)\in (-\pi, \pi)$.

\subsubsection{Local parametrix near $\zeta=-1$}\label{subsubsec:Airypara}

Near $\zeta=-1$, the outer parametrix $Q^{(\infty)}(\zeta)$ is no longer a good approximation to $Q(\zeta)$.  We seek
a parametrix $Q^{(-1)}(\zeta)$ satisfying the following RH problem:

\subsubsection*{RH problem for $Q^{(-1)}$}
\begin{enumerate}
\item[\rm (a)] $Q^{(-1)}(\zeta)$ is  analytic in  $\overline{U(-1,r)} \setminus
\{\cup^4_{j=1}\widetilde{\Sigma}_j\cup\{-1\}\}$, where $U(a,b):=\{\zeta | \zeta\in\mathbb{C}, |\zeta-a|<b\}$.
\item[\rm (b)] $Q^{(-1)}(\zeta)$ satisfies the same jump condition \eqref{Q-jump} as $Q$ for $\zeta \in U(-1,r) \cap \{\cup^4_{j=1}\widetilde{\Sigma}_j\}$.

\item[\rm (c)] As $s \to +\infty$, $Q^{(-1)}(\zeta)$ matches $Q^{(\infty)}(\zeta)$ on the boundary of $U(-1,r)$, i.e.,
\begin{equation}\label{Matching condition}
Q^{(-1)}(\zeta)=
(I+O(1/s))Q^{(\infty)}(\zeta), \qquad \zeta\in \partial U(-1,r).
\end{equation}
\end{enumerate}
The construction of $Q^{(-1)}$ is standard (cf. \cite{deift,dkmv2}) with the aid of the so-called Airy parametrix
$\Phi_{\Ai}$ defined by
\begin{equation}\label{Airy-model-solution}
  \Phi_{\Ai}(\zeta)=M_{\Ai}\left\{
                 \begin{array}{ll}
                 \begin{pmatrix}
                        \Ai(\zeta) & \Ai(\omega^2 \zeta) \\
                   \Ai'(\zeta)&\omega^2 \Ai'(\omega^2 \zeta)
                   \end{pmatrix} e^{-\frac{\pi i}{6}\sigma_3}, & \textrm{for $\zeta\in \Omega_1$,} \\[.4cm]
                      \begin{pmatrix}
                        \Ai(\zeta) & \Ai(\omega^2 \zeta) \\
                      \Ai'(\zeta)&\omega^2 \Ai'(\omega^2\zeta)\end{pmatrix}e^{-\frac{\pi i}{6}\sigma_3}
                      \begin{pmatrix}
                                                    1 & 0 \\
                                                        -1 & 1\end{pmatrix}
                 , & \textrm{for $\zeta\in \Omega_2$,} \\[.4cm]
                           \begin{pmatrix}
                         \Ai(\zeta) & -\omega^2 \Ai(\omega\zeta) \\
                        \Ai'(\zeta)&- \Ai'(\omega \zeta) \end{pmatrix}e^{-\frac{\pi i}{6}\sigma_3}
                        \begin{pmatrix}
                                                       1 & 0 \\
                                                      1 & 1 \end{pmatrix}
                 , &\textrm{for $\zeta\in \Omega_3$,} \\[.4cm]
                       \begin{pmatrix}
                    \Ai(\zeta) & -\omega^2 \Ai(\omega \zeta) \\
                        \Ai'(\zeta)&- \Ai'(\omega \zeta) \end{pmatrix}e^{-\frac{\pi i}{6}\sigma_3}, & \textrm{for $\zeta\in \Omega_4$,}
                 \end{array} \right.
   \end{equation}
where  $\omega=e^\frac{2\pi i}{3}$, $\Ai$ is the Airy function (cf. \cite[Chapter 9]{nist}),
$$M_\Ai=\sqrt{2\pi} e^{\frac 1 6\pi i}  \begin{pmatrix}
                                                                                 1 & 0 \\
                                                                                 0 & -i
                                                                                 \end{pmatrix}
$$ is a constant matrix and the regions $\Omega_i$ are indicated in Figure \ref{contour-for-model}. It is well-known that
$\Phi_{\Ai}$ solves the following RH problem; see \cite{dkmv2}.

\subsubsection*{RH problem for $\Phi_{\Ai}$}
\begin{enumerate}
\item[\rm (a)]  $\Phi_{\Ai}(\zeta)$ is analytic in
  $\mathbb{C}\setminus \{\cup^4_{j=1}\Sigma_j\cup\{0\}\}$, where the contours $\Sigma_j$, $j=1,2,3,4$  are illustrated in Figure \ref{contour-for-model}.

\item[\rm (b)]  $\Phi_{\Ai}$ satisfies the same jump condition \eqref{Psi-jump} as $\Psi$.

\item[\rm (c)] As $\zeta\to \infty$, we have
  \begin{equation}\label{Aipara-infty}
 \Phi_{\Ai}(\zeta) = \frac{1}{\sqrt{2}}\zeta^{-\frac{\sigma_3}{4}}\left[\begin{pmatrix}
 1 & i \\
 i & 1
 \end{pmatrix}+\frac{1}{48\zeta^{3/2}}
 \begin{pmatrix}
 -5 & 5i \\
 7i & -7
 \end{pmatrix}+O\left(\frac{1}{\zeta^3}\right)
 \right]e^{-\frac{2}{3}\zeta^{3/2}\sigma_3}.
   \end{equation}

 \end{enumerate}

The local parametrix $Q^{(-1)}(\zeta)$ is then constructed in terms of the Airy parametrix $\Phi_{\Ai}$ as follows:
\begin{equation}\label{Local-Airy}
Q^{(-1)}(\zeta)=
\begin{pmatrix}
1 & 0
\\
-i\frac{\tau_1}{s} & 1
\end{pmatrix}
s^{\frac{1}{4}\sigma_3}\Phi_{\Ai}(s(\zeta+1))e^{(s^{3/2}g(\zeta)+q(\zeta))\sigma_3}.
\end{equation}
With $Q^{(-1)}$ defined in \eqref{Local-Airy}, it is straightforward to check the jump condition stated in item (b) of the RH problem
for $Q^{(-1)}$ is satisfied. To show \eqref{Matching condition}, we see from \eqref{def:q}, \eqref{eq:c1asy}
\eqref{Out solution}, \eqref{Local-Airy} and \eqref{Aipara-infty} that, for $\zeta\in \partial U(-1,r)$ and large positive $s$,
\begin{align}\label{eq:mathcing}
&Q^{(-1)}(\zeta)Q^{(\infty)}(\zeta)^{-1}
\nonumber
\\
&=\begin{pmatrix}
1 & 0
\\
-i\frac{\tau_1}{s} & 1
\end{pmatrix}(\zeta+1)^{-\frac{1}{4}\sigma_3}\left[\begin{pmatrix}
 1 & i \\
 i & 1
 \end{pmatrix}+\frac{1}{48(s(\zeta+1))^{3/2}}
 \begin{pmatrix}
 -5 & 5i \\
 7i & -7
 \end{pmatrix}+O\left(\frac{1}{s^3}\right)\right]
 \nonumber
 \\
 &\qquad \times
 e^{q(\zeta)\sigma_3}\begin{pmatrix}
 1 & i \\
 i & 1
 \end{pmatrix}^{-1}(\zeta+1)^{\frac{1}{4}\sigma_3}
 \nonumber
 \\
 &=\begin{pmatrix}
1 & 0
\\
-i\frac{\tau_1}{s} & 1
\end{pmatrix}(\zeta+1)^{-\frac{1}{4}\sigma_3}\left[\begin{pmatrix}
 1 & i \\
 i & 1
 \end{pmatrix}+\frac{1}{48(s(\zeta+1))^{3/2}}
 \begin{pmatrix}
 -5 & 5i \\
 7i & -7
 \end{pmatrix}+O\left(\frac{1}{s^3}\right)\right]
 \nonumber
 \\
 &\qquad \times \left[I+\frac{c_1\sqrt{\zeta+1}}{\zeta}\frac{\sigma_3}{s}+O\left(\frac{1}{s^2}\right)\right]
 \begin{pmatrix}
 1 & i \\
 i & 1
 \end{pmatrix}^{-1}(\zeta+1)^{\frac{1}{4}\sigma_3}
 \nonumber
 \\
 &=I+\frac{J_1(\zeta)}{s}+\frac{J_2(\zeta)}{s^{3/2}}+O\left(\frac{1}{s^2}\right),
\end{align}
where
\begin{align}\label{def:J1}
J_1(\zeta)&=\frac{\tau_1}{\zeta}
\begin{pmatrix}
0 & -i \\
i & 0
\end{pmatrix},
\\
\label{def:J2}
J_2(\zeta)&=
\frac{1}{48}
\begin{pmatrix}
0 & \frac{5i}{(\zeta+1)^2} \\
\frac{7i}{\zeta+1} & 0
\end{pmatrix}.
\end{align}


\subsubsection{Final transformation}
Our final transformation is defined by
\begin{equation}\label{def:D}
D(\zeta)=\left\{
       \begin{array}{ll}
        Q(\zeta)Q^{(\infty)}(\zeta)^{-1}, & \hbox{for $z \in \mathbb{C}\setminus U(-1,r) $,} \\
       Q(\zeta)Q^{(-1)}(\zeta)^{-1}, & \hbox{for $z \in U(-1,r)$.}
       \end{array}
     \right.
\end{equation}
\begin{figure}[h]
 \begin{center}
   \includegraphics[width=5.5cm]{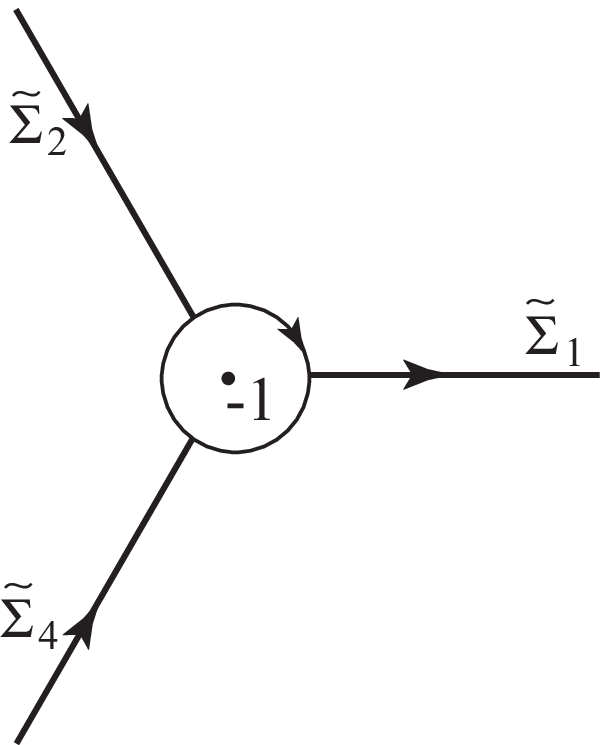} \end{center}
  \caption{Contour $\Sigma_D$ for the RH problem for $D$.}
 \label{contour-for-D}
\end{figure}
It is then easily seen that $D$ satisfies the following RH problem.

\subsubsection*{RH problem for $D$}
\begin{enumerate}
\item[\rm (a)]  $D(\zeta)$ is analytic in $\mathbb{C} \setminus \Sigma_{D}$, where the contour $\Sigma_D$ is shown
in Figure \ref{contour-for-D}.

\item[\rm (b)]  $D(\zeta)$  satisfies the jump condition
$$ D_{+}(\zeta) =D_{-}(\zeta)J_{D}(\zeta), \qquad \zeta\in \Sigma_D, $$
where
  \begin{equation}
                     J_{D}(\zeta)=\left\{
                                      \begin{array}{ll}
                                        Q^{(-1)}(\zeta)Q^{(\infty)}(\zeta)^{-1}, & \hbox{$\zeta \in \partial U(-1,r)$,} \\
                                        Q^{(\infty)}(\zeta) J_Q(\zeta) Q^{(\infty)}(\zeta)^{-1}, & \hbox{$\zeta \in \Sigma_{D} \setminus \partial U(-1,r)$,}
                                      \end{array}
\right.
\end{equation}
with $J_Q(\zeta)$ being defined in \eqref{Q-jump}.
\item[\rm (c)] As $\zeta \to \infty$,
\begin{equation}\label{D-large-zeta}
D(\zeta)=I+O(1/\zeta).
\end{equation}
\end{enumerate}

The RH problem for $D$ is equivalent to the following singular integral equation:
\begin{equation}\label{eq:integraloperatorD}
D(\zeta)=I+\frac{1}{2\pi i}\int_{\Sigma_D}D_-(w)\left(J_{D}(w)-I\right)\frac{dw}{w-\zeta}, \qquad \zeta \in\mathbb{C}\setminus \Sigma_D.
\end{equation}
Note that there exists some constant $c>0$ such that
\begin{equation}\label{eq:Dexpo}
J_{D}(\zeta)=I+O\left(e^{-cs^{3/2}}\right), \qquad \zeta \in \Sigma_{D} \setminus \partial U(-1,r)
\end{equation}
for large positive $s$. This, together with \eqref{eq:mathcing} and standard analysis (cf. \cite{deift,DZ93}), implies that $D(\zeta)$ admits a large $s$ expansion of the following form
\begin{equation}\label{D-expand}
D(\zeta)=I+\frac{D_1(\zeta)}{s}+\frac{D_2(\zeta)}{s^{3/2}}+O\left(s^{-2}\right),
\end{equation}
uniformly for $\zeta\in\mathbb{C}\setminus \Sigma_D$.

Furthermore, a combination of \eqref{D-expand} and the RH problem for $D$ shows that each $D_i(\zeta)$, $i=1,2$, satisfies the following RH problem.

\subsubsection*{RH problem for $D_i$}
\begin{enumerate}
\item[\rm (a)]  $D_i(\zeta)$ is analytic in $\mathbb{C} \setminus \partial U(-1,r)$.

\item[\rm (b)]  For $\zeta \in \partial U(-1,r)$, we have
  \begin{equation}
                    D_{i,+}(\zeta)-D_{i,-}(\zeta)=J_i(\zeta),
\end{equation}
where $J_i(\zeta)$, $i=1,2$, is given in \eqref{def:J1} and \eqref{def:J2}, respectively.
\item[\rm (c)] As $\zeta \to \infty$,
$$D_i(\zeta)=O(1/\zeta).$$
\end{enumerate}
By Cauchy theorem and the residue theorem, it is readily seen that
\begin{align}\label{eq:D1}
  D_1(\zeta) = \frac{1}{2\pi i} \oint_{\partial U(-1,r)}
  \frac{J_1(w)}{w-\zeta} dw
=\left\{
    \begin{array}{ll}
      \frac{\tau_1}{\zeta}
\begin{pmatrix}
0 & i \\
-i & 0
\end{pmatrix}, & \hbox{for $\zeta \in U(-1,r)$,} \\
      0, & \hbox{for $\zeta \in\mathbb{C}\setminus U(-1,r),$}
    \end{array}
  \right.
\end{align}
and
\begin{equation}\label{eq:D2}
  D_2(\zeta) = \frac{1}{2\pi i} \oint_{\partial U(-1,r)}
  \frac{J_2(w)}{w-\zeta} dw
= \frac{1}{48}
\begin{pmatrix}
0 & \frac{5i}{(\zeta+1)^2} \\
\frac{7i}{\zeta+1} & 0
\end{pmatrix}, \quad \hbox{for $\zeta \in\mathbb{C}\setminus U(-1,r)$.}
\end{equation}

We are now ready to prove Theorem \ref{thm:painleve solution}.

\subsection{Proof of Theorem \ref{thm:painleve solution}}
By Propositions \ref{thm:exsitence} and \ref{thm:laxpair}, it is immediate that there exists a family of solutions $b_1,\ldots, b_{2m+1}$ to the coupled Painlev\'{e} XXXIV system \eqref{eq:b-k} with the specified parameters \eqref{def:tauptilde}, which are also pole-free for real $s$.

To show that the function $b_1$ in \eqref{A-0} indeed has the large $s$ behavior \eqref{eq:b1asy}, by \eqref{a-1-b-1}, it suffices to derive the derive the large $s$ behavior of $a_1$. From \eqref{Out solution}, \eqref{def:D} and \eqref{eq:integraloperatorD}, it is readily seen that,
for large $\zeta$,
\begin{multline}\label{eq:Qasy}
Q(\zeta)=D(\zeta)Q^{(\infty)}(\zeta)
=\left(I+\frac{D_\infty}{\zeta}+O\left(\zeta^{-2}\right)\right) \biggl( I- \frac{1}{4\zeta} \sigma_3 +O\left(\zeta^{-2}\right) \biggr)
\zeta^{-\frac{1}{4} \sigma_3} \frac{I + i \sigma_1}{\sqrt{2}},
\end{multline}
where
\begin{equation}\label{def:Dinfty}
D_\infty=\frac{i}{2 \pi}\int_{\Sigma_D}D_-(w)\left(J_{D}(w)-I\right)dw.
\end{equation}
If we further take $s\to +\infty$, a combination of \eqref{def:Dinfty}, \eqref{D-expand}, \eqref{eq:Dexpo} and \eqref{def:J1} gives
\begin{equation}\label{D-inf-12}
(D_\infty)_{12}=O\left(s^{-2}\right).
\end{equation}
From  \eqref{eq:Qasy}, \eqref{def:Q-12}, \eqref{eq:c1asy} and the above estimate, we have
\begin{equation}\label{eq:a 1 asy}
a_1(s)=\frac{s^2}{4}+\frac{\tau_1}{\sqrt{s}}+O(1/s^{3/2}),\qquad s\to +\infty.
\end{equation}
This, together with \eqref{a-1-b-1}, gives us \eqref{eq:b1asy}.

This  completes the proof of Theorem \ref{thm:painleve solution}.
\qed

\section{RH problem for orthogonal polynomials and the differential identities}
\label{sec:connectionRHP}
Recall that $\pi_n(x)$ is the monic polynomial of degree $n$ orthogonal with respect to the weight function $w(x)$ given in \eqref{eq:weight}, it is well-known that the $2 \times 2$ matrix-valued function
\begin{equation}\label{eq:Y}
Y(z)=Y\left(z;\lambda, \vec{t}\;\right)
=
\begin{pmatrix}
\pi_n(z) & \frac 1{2\pi i}\int_\mathbb{R} \frac{\pi_n(x) w(x)}{x-z}dx\\
-2\pi i \gamma_{n-1}^2 \;\pi_{n-1}(z)& - \gamma_{n-1}^2\;
\int_\mathbb{R} \frac{\pi_{n-1}(x) w(x)}{x-z}dx
\end{pmatrix}
\end{equation}
is the unique solution of the following RH problem (see \cite{fik}), where $\gamma_n$ depending on $\lambda$ is defined in \eqref{eq:orthogonal}.

\subsubsection*{RH problem for $Y$}
\begin{enumerate}
  \item [\rm(a)]  $Y(z)$ is analytic in $\mathbb{C} \setminus \mathbb{R}$.

  \item [\rm(b)]$Y(z)$  satisfies the jump condition
  \begin{equation}\label{eq:Yjump}
  Y_+(x)=Y_-(x)
  \begin{pmatrix}
  1 & w(x) \\
  0 & 1
  \end{pmatrix},\qquad x\in\mathbb{R}.
  \end{equation}

  \item [\rm(c)]  As $z\to \infty$, we have
  \begin{equation}\label{Y-infinity}
  Y(z)=\left (I+\frac {Y_{1}}{z}+O\left (\frac 1 {z^2}\right )\right)
 \begin{pmatrix}
 z^n & 0 \\
 0 & z^{-n}
 \end{pmatrix},
 \end{equation}
 where the matrix $Y_1$ is independent of $z$.
 \item [\rm(d)] $Y$ is bounded near $z=\lambda$.
\end{enumerate}

For later use, it is also worthwhile to point out that
\begin{equation}\label{eq:Y112}
(Y_1)_{12}=-\frac{1}{2 \pi i \gamma_n^2}, \qquad (Y_1)_{21}=- 2 \pi i \gamma_{n-1}^2;
\end{equation}
see \cite[Equation (3.11)]{dkmv2}.

In terms of the function $Y$ given in \eqref{eq:Y}, the correlation kernel \eqref{eq:correlation kernel} can be rewritten as
\begin{equation}\label{kernel-Y}
K_n\left(x,y; \lambda, \vec{t}\;\right)=\frac{\sqrt{w(x)w(y)}}{2\pi i(x-y)}\left(Y_+\left(y;\lambda, \vec{t}\;\right)^{-1}Y_+\left(x;\lambda, \vec{t}\;\right)\right)_{21}, \quad
x, y \in \mathbb{R}.
\end{equation}
The relation between the partition function $Z_n(\lambda)$ and $Y$ is more involved. We will derive two differential identities with respect to $\lambda$ in the following lemma, which are expressed in terms of the asymptotics of $Y$ near infinity and $z=\lambda$, respectively.

\begin{lem}\label{lem:differential identity}
Let $Z_n(\lambda)$ be the partition function defined in \eqref{def:partiationfunction}, then we have
\begin{align}
\label{diff identity}
    \frac d {d\lambda}\ln Z_{n}(\lambda&)=4n \lim_{z\to\infty}z(Y(z)z^{-n\sigma_3}-I)_{11}=4n(Y_1)_{11},
    \\
    \label{diff identity-2}
    \frac{ d^2}{d\lambda^2}\ln Z_{n}(\lambda)&=4n^2(\lambda^2-1)+\det\left(\frac{d}{d\lambda}H(\lambda)\cdot H(\lambda)^{-1}\right),
\end{align}
where $Y_1$ is the residue matrix appearing in the large $z$ asymptotics of $Y$ (see \eqref{Y-infinity}) and $H(\lambda)$ is defined by
\begin{equation}\label{def:H}
H(\lambda):=Y_+(\lambda)e^{-n\lambda^2\sigma_3}.
\end{equation}
\end{lem}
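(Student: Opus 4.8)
\textbf{Proof plan for Lemma \ref{lem:differential identity}.}

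The plan is to differentiate $\ln Z_n(\lambda)$ directly from the integral representation \eqref{def:partiationfunction} and then massage the resulting expression into the stated $Y$-quantities. First I would write
$$
\frac{d}{d\lambda}\ln Z_n(\lambda)=\frac{1}{Z_n(\lambda)}\int_{\mathbb{R}^n}\left(\sum_{j=1}^n \frac{\partial}{\partial\lambda}\ln w(x_j;\lambda,\vec t\,)\right)\prod_{i<j}(x_j-x_i)^2\prod_j w(x_j)\,dx_j,
$$
so that the problem reduces to computing the one-point expectation of $\partial_\lambda\ln w(x;\lambda,\vec t\,)=-n\,\partial_\lambda V(x;\lambda,\vec t\,)$. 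Using \eqref{potential:V} one has $\partial_\lambda V(x;\lambda,\vec t\,)=\sum_{k=1}^{2m}k\,t_k(x-\lambda)^{-(k+1)}$, which is rational in $x$ with a single pole at $x=\lambda$. Integrating this against the one-point correlation function $\rho_1(x)=K_n(x,x;\lambda,\vec t\,)=\gamma_{n-1}^2 w(x)\big(\pi_n'(x)\pi_{n-1}(x)-\pi_{n-1}'(x)\pi_n(x)\big)$ (equivalently the diagonal of \eqref{eq:correlation kernel}), one reduces the whole thing to residue/contour manipulations; the key point will be that the $\lambda$-derivative of the potential, being a perfect $\lambda$-derivative structure in $V$, produces something that telescopes into the large-$z$ data of $Y$. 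Concretely I expect that after recognizing $\rho_1$ in terms of the entries of $Y$ via \eqref{eq:correlation kernel}–\eqref{eq:Y}, and using the fact that $\int_\mathbb{R}\rho_1(x)\,dx=n$ together with the orthogonality relations \eqref{eq:orthogonal}, the integral collapses to $4n(Y_1)_{11}$; the factor $4n$ and the coefficient $z^2$ come from the quadratic term $2x^2$ in $V$ through the standard identity relating $(Y_1)_{11}$ to the recurrence coefficients. An efficient alternative, which I would actually prefer, is to avoid the pole term entirely: differentiate in $\lambda$ but then \emph{change variables} $x_j\mapsto x_j+\lambda$ in \eqref{def:partiationfunction} so that the $\lambda$-dependence moves entirely into the Gaussian part $2(x_j+\lambda)^2$, whose $\lambda$-derivative is the polynomial $4(x_j+\lambda)$, linear in $x_j$; then $\frac{d}{d\lambda}\ln Z_n=-n\sum_j\mathbb{E}[4(x_j+\lambda)]$ after the shift, wait — more carefully, after the shift the weight becomes $\widetilde w(x)=e^{-n(2(x+\lambda)^2+\sum_k t_k x^{-k})}$ and its $\lambda$-derivative is $-4n(x+\lambda)\widetilde w(x)$, so $\frac{d}{d\lambda}\ln Z_n=-4n\big(n\lambda+\mathbb{E}[\sum_j x_j]\big)$; and $\mathbb{E}[\sum_j x_j]$ in the shifted ensemble is read off from the subleading coefficient of $\pi_n$, hence from $(Y_1)_{11}$ after undoing the shift — this is the cleanest route and I would write it this way, taking care that the shift is legitimate because the contour/integrability at the (now) singularity $x=0$ is unaffected since $t_{2m}>0$ keeps the weight integrable.

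For \eqref{diff identity-2} I would differentiate \eqref{diff identity} once more, so that $\frac{d^2}{d\lambda^2}\ln Z_n=4n\frac{d}{d\lambda}(Y_1)_{11}$, and the task becomes: express $\frac{d}{d\lambda}(Y_1)_{11}$ in terms of local data of $Y$. The natural tool is the logarithmic $\lambda$-derivative of $Y$ itself: since the jump matrix in \eqref{eq:Yjump} depends on $\lambda$ only through $w(x)=e^{-nV(x;\lambda,\vec t\,)}$, and $\partial_\lambda w/w=-n\,\partial_\lambda V$ has its only singularity at $x=\lambda$, the matrix $R(z):=\big(\partial_\lambda Y(z)\big)Y(z)^{-1}$ has no jump across $\mathbb{R}$ (the jump of $Y$ being multiplicative on the right by an upper-triangular matrix, the standard computation shows $R$ is jump-free) and is therefore meromorphic in $\mathbb{C}$ with its only possible singularity at $z=\lambda$, where item (d) guarantees $Y$ is bounded and invertible so that $R$ has at worst a pole of the order of $\partial_\lambda V$ at $\lambda$. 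Since $Y(z)z^{-n\sigma_3}\to I$ with the expansion \eqref{Y-infinity}, $R(z)=O(1/z)$ at infinity (more precisely $R(z)=\partial_\lambda Y_1/z+\cdots$ after accounting for the $z^{n\sigma_3}$ conjugation — the constant term vanishes because $\partial_\lambda$ of $z^{n\sigma_3}$ is zero), so by Liouville $R(z)$ equals the principal part of its Laurent expansion at $z=\lambda$; call this rational matrix $\mathcal R(z;\lambda)$. Then reading off the $1/z$ coefficient of $\mathcal R$ at infinity yields $\partial_\lambda(Y_1)$ (in particular its $(1,1)$ entry) purely in terms of the coefficients of the Laurent expansion of $Y$ near $z=\lambda$, i.e. in terms of $H(\lambda)=Y_+(\lambda)e^{-n\lambda^2\sigma_3}$ and its $\lambda$-derivatives. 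To get the clean determinantal form on the right of \eqref{diff identity-2}, I would compute $R$ at $z=\lambda$ using that $Y_+(\lambda)$ is finite and invertible: writing $Y_+(z)=\big(H(\lambda)+\text{(known analytic corrections)}\big)e^{n\lambda^2\sigma_3}$-type local expansion and using $\partial_\lambda Y_+(\lambda)$, one finds $\text{Res}_{z=\lambda} R\,\propto$ a $2\times 2$ matrix built from $\big(\tfrac{d}{d\lambda}H\big)H^{-1}$, and the trace-zero / unit-determinant structure forces the relevant scalar in $\partial_\lambda(Y_1)_{11}$ to be $\det\!\big(\tfrac{d}{d\lambda}H\cdot H^{-1}\big)$ up to the explicit polynomial term $4n^2(\lambda^2-1)$ that comes from the quadratic part of $V$ (this is the same source as the $4n$ and $z^2$ in the first identity).

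\textbf{Main obstacle.} The genuinely delicate step is the second identity: correctly bookkeeping the local expansion of $Y$ near the (mild) singularity $z=\lambda$. One must carefully separate the singular part of $\partial_\lambda V$ — which is $\sum_k k\,t_k(z-\lambda)^{-(k+1)}$, a pole of order up to $2m+1$ — from the regular part, track how these propagate through $R(z)=(\partial_\lambda Y)Y^{-1}$, and then verify that, despite the high-order pole at $z=\lambda$, all the pieces except the order-$(z-\lambda)^{-1}$ term contribute only the explicit polynomial $4n^2(\lambda^2-1)$ to $\frac{d}{d\lambda}(Y_1)_{11}$ after passing to infinity (there is also a contribution from the $(z-\lambda)$-analytic correction terms, and one must check the higher-order poles of $\mathcal R$ do not leak into the $1/z$ coefficient at infinity — they do, so the determinantal term must in fact be understood as collecting \emph{all} of those). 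Organizing this so that everything neatly packages into the single scalar $\det\!\big(\tfrac{d}{d\lambda}H\cdot H^{-1}\big)$, rather than a messy sum of residues, is where the real work lies; I would do it by noting $\det H$ is $\lambda$-independent up to an explicit exponential (since $\det Y_+\equiv 1$), so that $\mathrm{tr}\big(\tfrac{d}{d\lambda}H\cdot H^{-1}\big)$ is explicit, and then $\det\!\big(\tfrac{d}{d\lambda}H\cdot H^{-1}\big)=\tfrac12\big[(\mathrm{tr})^2-\mathrm{tr}((\cdot)^2)\big]$ can be related back to the $(1,1)$ and $(2,1),(1,2)$ entries of the residue of $R$, which are exactly the quantities feeding $(Y_1)_{11}$ and \eqref{eq:Y112}.
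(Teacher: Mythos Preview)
Your approach to the first identity is essentially the same as the paper's: both exploit the shift $x\mapsto x+\lambda$ so that the $\lambda$-dependence sits entirely in the Gaussian factor, and then identify the resulting linear statistic with the subleading coefficient $\mathfrak p_1(n;\lambda)=(Y_1)_{11}$ of $\pi_n$. The paper routes this through the product formula $Z_n=n!\prod_k\gamma_k^{-2}$ and the Christoffel--Darboux identity, while you go directly through $\mathbb E[\sum_j x_j]$; both are fine.

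For the second identity, however, there is a genuine error. You assert that $R(z):=(\partial_\lambda Y(z))Y(z)^{-1}$ has no jump across $\mathbb R$. This is false: from $Y_+=Y_-J$ with $J=\bigl(\begin{smallmatrix}1&w\\0&1\end{smallmatrix}\bigr)$ one computes
\[
R_+ - R_- \;=\; Y_-\,(\partial_\lambda J)J^{-1}\,Y_-^{-1}
\;=\; Y_-\begin{pmatrix}0&\partial_\lambda w\\0&0\end{pmatrix}Y_-^{-1},
\]
which does not vanish since $\partial_\lambda w\neq 0$. The ``standard computation'' you invoke applies only when the jump matrix is \emph{independent} of the deformation parameter. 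Consequently, $R$ is not meromorphic, and the Liouville/residue argument you outline cannot get started.

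The paper repairs this by first passing to $\widetilde Y(z;\lambda):=Y(z+\lambda)\,w(z+\lambda)^{\sigma_3/2}$, which has the \emph{constant} jump $\bigl(\begin{smallmatrix}1&1\\0&1\end{smallmatrix}\bigr)$; now $(\partial_\lambda\widetilde Y)\widetilde Y^{-1}$ is genuinely jump-free. A second benefit of this normalization is that the essential singularity of $\widetilde Y$ at $z=0$ sits in the factor $e^{-\frac n2\sum_k t_k z^{-k}\sigma_3}$, which is $\lambda$-independent; hence $(\partial_\lambda\widetilde Y)\widetilde Y^{-1}$ has \emph{no} singularity there at all. Liouville then forces it to equal the polynomial read off from the large-$z$ expansion, namely $-2n(z+\lambda)\sigma_3-4n\bigl(\begin{smallmatrix}0&-(Y_1)_{12}\\(Y_1)_{21}&0\end{smallmatrix}\bigr)$. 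Evaluating at $z=0$ gives $\tfrac{d}{d\lambda}H\cdot H^{-1}$ explicitly as a $2\times2$ matrix in $(Y_1)_{12},(Y_1)_{21},\lambda$, and taking its determinant yields \eqref{diff identity-2} after combining with the separately derived relation $\tfrac{d}{d\lambda}(Y_1)_{11}=4n(Y_1)_{12}(Y_1)_{21}-n$ (obtained by differentiating the shifted orthogonality relation with indices $(n,n-1)$). In particular, your ``main obstacle'' --- organizing a high-order pole at $z=\lambda$ into the determinantal form --- evaporates: in the correct gauge there is no pole to organize.
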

\begin{proof}
We start with the following expression for the partition function $Z_n(\lambda)$ in terms of $\gamma_k(\lambda)$ given in \eqref{eq:orthogonal}:
 \begin{equation}\label{Z:log H}
Z_{n}(\lambda)=n!\prod_{k=0}^{n-1}\gamma_{k}(\lambda)^{-2};
\end{equation}
see \cite{s}. Taking logarithmic derivative on both sides of the above formula gives us
\begin{equation}\label{Z:log derivative}
 \frac d{d\lambda}\ln Z_{n}(\lambda)=-2\sum_{k=0}^{n-1}\gamma_{k}(\lambda)^{-1}\frac d {d\lambda}\gamma_{k}(\lambda).
\end{equation}
To find the logarithmic derivative of $\gamma_k$, we see from \eqref{eq:orthogonal} and a change of variable $x\to x+\lambda$ that
\begin{equation}\label{leading coef-2}
\gamma_{k}(\lambda)^{-2}\delta_{j,k}=\int_{\mathbb{R}}\pi_j(x+\lambda)\pi_k(x+\lambda)w(x+\lambda)dx.
\end{equation}
By taking derivative with respect to $\lambda$ on both sides of \eqref{leading coef-2} with $j=k$, it follows that
\begin{equation*}
-2\gamma_k(\lambda)^{-3}\frac d {d\lambda}\gamma_{k}(\lambda)
=-4 n \int_{\mathbb{R}}(x+\lambda)(\pi_k(x+\lambda))^2 w(x+\lambda)dx,
\end{equation*}
or, equivalently,
\begin{equation}\label{leading coef-d}
\gamma_{k}(\lambda)^{-1}\frac d {d\lambda}\gamma_{k}(\lambda) = 2n\gamma_{k}(\lambda)^2\int_{\mathbb{R}}x\pi_k(x)^2w(x)dx.
\end{equation}
This, together with  \eqref{Z:log derivative}, the Christoffel-Darboux formula for orthogonal polynomials and \eqref{eq:orthogonal}, implies that
\begin{align}\label{Z:integral}
\frac d{d\lambda}\ln Z_{n}(\lambda)
&=-4n\int_{\mathbb{R}} x\sum_{k=0}^{n-1}(\gamma_{k}(\lambda)\pi_k(x))^2w(x)dx
\nonumber
\\
&=-4n\gamma_{n-1}(\lambda)^2\int_{\mathbb{R}}x\left(\frac d {dx}\pi_n(x) \cdot \pi_{n-1}(x)-\pi_n(x) \cdot \frac d {dx}\pi_{n-1}(x)\right)w(x)dx
\nonumber
\\
&=-4n\gamma_{n-1}(\lambda)^2\int_{\mathbb{R}}x\frac d {dx}\pi_n(x) \cdot \pi_{n-1}(x) w(x)dx.
\end{align}
If we further set
$$ \pi_n(x)=x^n+ \mathfrak{p}_1(n;\lambda) x^{n-1}+\ldots $$
and expand $x\frac d{dx}\pi_n(x)$ in terms of $\pi_k$, it is readily seen that
\begin{equation}\label{Decomp.}
x\frac d{dx}\pi_n(x)=nx^n+(n-1)\mathfrak{p}_1(n;\lambda)x^{n-1}+\ldots = n\pi_n(x)-\mathfrak{p}_1(n;\lambda)\pi_{n-1}(x)+\ldots.
\end{equation}
Inserting the above formula into \eqref{Z:integral}, we obtain again from the orthogonality condition \eqref{eq:orthogonal} that
\begin{equation}\label{eq:Y111}
\frac d{d\lambda}\ln Z_{n}(\lambda)=4n\mathfrak{p}_1(n;\lambda)=4n \lim_{z\to\infty}z(Y(z)z^{-n\sigma_3}-I)_{11}=4n(Y_1)_{11},
\end{equation}
as required.

We next give the proof of the second differential identity \eqref{diff identity-2}. By taking derivative with respect to $\lambda$ on both sides of \eqref{leading coef-2} with $j=n$ and $k=n-1$, it is readily seen from the orthogonality condtion that
\begin{align*}
   0 & = \int_{\mathbb{R}}\left(\left( \frac d{d \lambda} \mathfrak{p}_1(n;\lambda)+n \right)(x+\lambda)^{n-1}+\ldots \right)\pi_{n-1}(x+\lambda)w(x+\lambda)dx
\\
& \qquad \qquad -
4n \int_{\mathbb{R}}(x+\lambda)\pi_{n-1}(x+\lambda)\pi_{n}(x+\lambda)w(x+\lambda)dx
\\
&= \int_{\mathbb{R}}\left(\left( \frac d{d \lambda} \mathfrak{p}_1(n;\lambda)+n \right)\pi_{n-1}(x+\lambda)+\ldots \right)\pi_{n-1}(x+\lambda)w(x+\lambda)dx
\\
& \qquad \qquad -4n
\int_{\mathbb{R}}(\pi_{n}(x+\lambda)+\ldots)\pi_{n}(x+\lambda)w(x+\lambda)dx
\\
&=\left( \frac d{d \lambda} \mathfrak{p}_1(n;\lambda)+n \right)\gamma_{n-1}(\lambda)^{-2}-4n\gamma_{n}(\lambda)^{-2}.
\end{align*}
Hence, a combination of \eqref{eq:Y111} and the above formula yields
\begin{equation}\label{eq: Y-1-diff}
  \frac{d}{d\lambda}(Y_1)_{11}=4n\left(\frac{\gamma_{n-1}(\lambda)}{\gamma_n(\lambda)}\right)^2-n=4n(Y_1)_{12}(Y_1)_{21}-n,
\end{equation}
where we have made use of \eqref{eq:Y112} in the last step.

To proceed, we define
\begin{equation}\label{def: Y-hat}
 \widetilde{Y}(z)=\widetilde{Y}(z;\lambda):=Y(z+\lambda)w(z+\lambda)^{\frac{1}{2}\sigma_3}.
\end{equation}
Thus, it is easily seen that $\widetilde{Y}$ solves the following RH problem.
\subsubsection*{RH problem for $\widetilde{Y}$}
\begin{enumerate}
  \item [\rm(a)]  $\widetilde{Y}(z)$ is analytic in $\mathbb{C} \setminus \mathbb{R}$.

  \item [\rm(b)] $\widetilde{Y}(z)$  satisfies the jump condition
  \begin{equation}\label{eq:tildeYjump}
  \widetilde{Y}_+(x)=\widetilde{Y}_-(x)
  \begin{pmatrix}
  1 & 1 \\
  0 & 1
  \end{pmatrix},\qquad x\in\mathbb{R}.
  \end{equation}

  \item [\rm(c)]  As $z\to \infty$, we have
  \begin{equation}\label{tildeY-infinity}
 \widetilde{Y}(z)=\left (I+\frac {\widetilde{Y}_{1}}{z}+O\left (\frac 1 {z^2}\right )\right)
 \begin{pmatrix}
 z^n & 0 \\
 0 & z^{-n}
 \end{pmatrix}e^{-n(z+\lambda )^2\sigma_3},
 \end{equation}
 where
 \begin{equation}
 \widetilde{Y}_{1}=Y_1
+
\begin{pmatrix}
-\frac{n}{2}t_1 & 0
\\
0 &\frac{n}{2}t_1
\end{pmatrix}
\end{equation}
with $Y_1$ being given in \eqref{Y-infinity}.
 \item [\rm(d)]
 As $z\to 0$, we have
  \begin{equation}\label{tildeY-zero}
  \widetilde{Y}\left(z;\lambda \right)=O(1)e^{-\frac{n}{2}\sum_{k=1}^{2m}t_kz^{-k}\sigma_3}.
  \end{equation}
\end{enumerate}
Since the jump matrix for $\widetilde {Y}$ is a constant matrix, we have that the function
$$\frac{\partial}{\partial \lambda}\widetilde {Y}(z;\lambda)\widetilde{Y}(z;\lambda)^{-1}$$
is meromorphic in $z$ with one possible singularity near the origin. By \eqref{tildeY-infinity}, it is readily seen that
\begin{equation}\label{eq: Y-hat-ODE}
\frac{\partial}{\partial \lambda}\widetilde{Y}(z;\lambda)=\left(-2n(z+\lambda)\sigma_3-4n\left(
                              \begin{array}{cc}
                                0 & -(Y_1)_{12} \\
                                (Y_1)_{21} & 0 \\
                              \end{array}
                            \right)\right)\widetilde{Y}(z;\lambda).
\end{equation}
By the definition \eqref{def:H}, we have $H(\lambda)=\widetilde {Y}_{+}(0)$.
From \eqref{eq: Y-hat-ODE}, we obtain that
\begin{equation}\label{eq:H-ODE}
\frac{d}{d\lambda}H(\lambda)=-2n\left(
                              \begin{array}{cc}
                                \lambda & -2(Y_1)_{12} \\
                                2(Y_1)_{21} & -\lambda \\
                              \end{array}
                            \right)H(\lambda).
\end{equation}
Hence, it follows from \eqref{diff identity}, \eqref{eq: Y-1-diff} and \eqref{eq:H-ODE} that
\begin{multline}
\frac{ d^2}{d\lambda^2}\ln Z_{n}(\lambda)
=4n\frac{d}{d\lambda}(Y_1)_{11}=16n^2(Y_1)_{12}(Y_1)_{21}-4n^2
\\=4n^2(\lambda^2-1)+\det\left(\frac{d}{d\lambda}H(\lambda)\cdot H(\lambda)^{-1}\right),
\end{multline}
which is \eqref{diff identity-2}.

This completes the proof of Lemma \ref{lem:differential identity}.
\end{proof}

\section{Asymptotic analysis of the RH problem for $Y$ with $1+cn^{-2/3}<\lambda<1+dn^{-1/3}$}
\label{sec:AARHY}

In this section, we perform the Deift-Zhou steepest descent analysis to the RH problem for $Y$ under \eqref{eq:tkscaling} and
 $$1+cn^{-2/3}<\lambda<1+dn^{-1/3},$$
where $c<0,d>0$ are arbitrarily fixed constants. The reason why we enlarge the range of $\lambda$ is explained at the end of Section \ref{sec:results}.

 \subsection{$Y\rightarrow T$: Normalization at $\infty$}
Define
\begin{equation}
\label{g}\widehat{g}(z)=\int_{-1}^{1}  \log(z-x) \varphi(x) dx, \qquad  z\in \mathbb{C}\setminus (-\infty, 1],
\end{equation}
where $\varphi(x)=\frac 2 {\pi } \sqrt{1-x^2}$ and the branch cut of the logarithm  is taken along the negative axis so that  $\arg (z-x)\in (-\pi , \pi)$.
We then introduce the first transformation $Y\rightarrow T$ to normalize the large $z$ behavior of $Y$:
 \begin{equation}\label{Y to T}
 T(z)= e^{-\frac 1 2 nl \sigma_3} Y(z) e^ {n \left (\frac 1 2 l
-\widehat{g}(z)\right )\sigma_3} e^{-\frac{n}{2}\sum_{k=1}^{2m}t_k(z-\lambda)^{-k}\sigma_3},\quad z\in
\mathbb{C} \setminus \mathbb{R}.
\end{equation}
with the constant
\begin{equation}\label{larg constant}
l:=-1-2\ln 2.
\end{equation}

With the $\phi$-function given in \eqref{phi}, it is easily seen that
the $\widehat{g}$-function and the $\phi$-function satisfy the following properties:
\begin{align}\label{eq:hatgdiff}
&\widehat{g}_{+}(x)-\widehat{g}_{-}(x)=\left\{
                    \begin{array}{ll}
                      2 \pi i, & \hbox{$x<-1$,} \\
                      2\phi_-(x)=-2\phi_+(x), & \hbox{$-1<x<1$,}
                    \end{array}
                  \right.
\\
&\widehat{g}(z)+\phi(z)-z^2-l/2=0, \qquad z\in\mathbb{C}. \label{eq:hatgphi}
\end{align}

Thus, it is readily seen that $T(z)$ defined in \eqref{Y to T} solves the following RH problem.
\subsubsection*{RH problem for T}
\begin{enumerate}
  \item[\rm(a)]
 $T(z)$ is analytic in $\mathbb{C} \setminus \mathbb{R}$.

  \item[\rm(b)] $T$ satisfies the jump condition
 \begin{equation}\label{Jump-T}
T_+(x)=T_-(x)\left\{
\begin{array}{ll}
\begin{pmatrix}
                                 1 & e^{-2n \phi(x)} \\
                                 0 & 1
\end{pmatrix},&   x\in(1,+\infty),\\
                             [.4cm]
 \begin{pmatrix}
                                 e^{2n\phi_+(x)} & 1 \\
                                 0 & e^{2n \phi_-(x)} \end{pmatrix},&   x\in(-1, 1), \\
                            [.4cm]
   \begin{pmatrix}
                                 1 &  e^{-2n \phi_+(x)} \\
                                 0 & 1 \end{pmatrix}, &  x\in(-\infty,-1).
\end{array}
\right .
\end{equation}
\item[\rm(c)] As $z\to \infty$, we have
  \begin{equation}\label{T-infinity}
  T(z)=I+O(1/z).
  \end{equation}
 \item[\rm(d)] As $z\to \lambda$, we have   \begin{equation}\label{T-zero}
  T(z)=O(1)e^{-\frac{n}{2}\sum_{k=1}^{2m}t_k(z-\lambda)^{-k}\sigma_3}.
  \end{equation}
\end{enumerate}

To show the jump condition \eqref{Jump-T}, we see from \eqref{eq:Yjump} and \eqref{Y to T} that
$$T_{+}(x)=T_-(x)
\begin{pmatrix}
e^{n(\widehat{g}_-(x)-\widehat{g}_+(x))} & e^{-2n(x^2+(l-\widehat{g}_+(x)-\widehat{g}_{-}(x))/2)}
\\
0 & e^{n(\widehat{g}_+(x)-\widehat{g}_-(x))}
\end{pmatrix}, \qquad x\in\mathbb{R}.$$
This, together with \eqref{eq:hatgdiff} and \eqref{eq:hatgphi}, implies \eqref{Jump-T}.

\subsection{$T \rightarrow S$: Contour deformation}
Since $\Re \phi_{\pm} (x) = 0$ for $x \in (-1,1)$ (see \eqref{phi}), the diagonal entries of the jump matrix in the RH problem for $T$ on $(-1, 1)$ are highly oscillatory for large $n$. We now deform the interval $[-1,\lambda]$ into a lens-shaped region (see Figure \ref{contour-for-S}), and introduce the transformation $T \to S$ below to remove the oscillations. Note that the lens opening depends on the parameter $\lambda$; see also \cite{xz2011} for a similar situation.
\begin{figure}[h]
 \begin{center}
   \includegraphics[width=7cm]{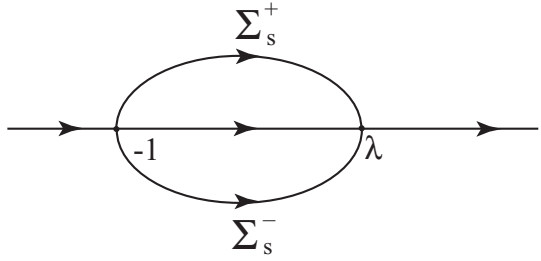}
 \end{center}
  \caption{Contour $\Sigma_S$ for the RH problem for $S$.}
 \label{contour-for-S}
\end{figure}

Define
\begin{equation}\label{T-S}
S(z)=\left \{
\begin{array}{ll}
  T(z), & \mbox{for $z$ outside the lens,}
  \\
  T(z) \begin{pmatrix}
                                 1 & 0 \\
                                  - e^{2n\phi(z)} & 1
\end{pmatrix}, & \mbox{for $z$ in the upper lens,}\\ [.4cm]
T(z) \begin{pmatrix}
                                 1 & 0 \\
                                  e^{2n\phi(z)} & 1
\end{pmatrix} , & \mbox{for $z$ in the lower lens. }
\end{array}\right.
\end{equation}
It is then straightforward to check that $S$ satisfies the following RH problem.
\subsubsection*{RH problem for $S$}
\begin{enumerate}
  \item[\rm(a)] $S(z)$ is analytic in
  $\mathbb{C} \setminus \Sigma_S$, where the contour $\Sigma_S$ is shown in Figure \ref{contour-for-S}.

  \item[\rm(b)] $S(z)$  satisfies the jump condition
 \begin{align}\label{Jump-S}
 S_+(z)&=S_-(z)J_{S}(z),
 \end{align}
where
\begin{align}
J_{S}(z)=S_-(z)\left\{
\begin{array}{ll}
\begin{pmatrix}
                                 1 & e^{-2n \phi(z)} \\
                                 0 & 1
\end{pmatrix},&   z \in(\max(\lambda,1), +\infty),
\\
                             [.4cm]
  \begin{pmatrix}
                                 0 & 1 \\
                                 -1 & 0 \end{pmatrix},&   z \in(-1, \min(\lambda, 1)),
                                 \\
                             [.4cm]
\begin{pmatrix}
                                 e^{2n\phi_+(z)} & 1 \\
                                 0 & e^{2n \phi_-(z)}\end{pmatrix}, &   z \in(\lambda, 1 ), \quad \textrm{if } \lambda < 1,  \\
                                                        [.4cm]
                             \begin{pmatrix}
                                 0 & e^{-2n\phi(z)} \\
                                 -e^{2n\phi(z)} & 0 \end{pmatrix}, &   z\in(1, \lambda ), \quad \textrm{if } \lambda > 1,  \\
                             [.4cm]
  \begin{pmatrix}
                                 1 &  e^{-2n \phi_+(z)} \\
                                 0 & 1
\end{pmatrix}, &  z\in(-\infty,-1),
\\
                             [.4cm]
\begin{pmatrix}
                                 1 &  0 \\
                                 e^{2n \phi(z)} & 1
\end{pmatrix}, & z \in \Sigma_S^+ \cup \Sigma_S^-.
\end{array}
\right .
\end{align}

  \item[\rm(c)] As $z\to \infty$, we have
  \begin{equation}
  S(z)=I+O(1/z).
  \end{equation}

 \item[\rm(d)] As $z \to \lambda$, we have
  \begin{equation}
  S(z)=O(1)e^{-\frac{n}{2}\sum_{k=1}^{2m}t_k(z-\lambda)^{-k}\sigma_3}.
  \end{equation}
\end{enumerate}

When $n$ is large, the jump matrix $J_S(z)$ tends to the identity matrix for $z$ bounded away from the interval $(-1, \lambda)$.
In what follows, we will construct both the outer parametrix and the local parametrices near endpoints to approximate $S$ for large $n$.
Particularly, the local parametrix near $z=1$ will be built in terms of the new model RH problem for $\Psi$ in Section \ref{sec:results}.

\subsection{Outer parametrix} \label{sec:outer parametrix}
The outer parametrix $N$ solves an RH problem with a jump only along $(-1, \lambda)$.
\subsubsection*{RH problem for $N$}
\begin{enumerate}
\item[\rm(a)]  $N(z)$ is analytic in  $\mathbb{C} \setminus [-1,\lambda]$.
\item[\rm(b)]  $N$ satisfies the jump condition
      \begin{equation}\label{Jump-N} N_{+}(x)=N_{-}(x)\begin{pmatrix}
       0 & 1 \\
       -1 & 0
\end{pmatrix}, \qquad x\in  (-1,\lambda).
       \end{equation}
\item[\rm(c)]  As $z\to \infty$, we have
                \begin{equation}
                 N(z)= I+O(z^{-1}).
                 \end{equation}
  \end{enumerate}

The solution to the above RH problem is explicitly given by
   \begin{equation}\label{N}
  N(z)=
\begin{pmatrix}
    \frac {\beta_0(z) + \beta_0^{-1}(z)} {2}&\frac {\beta_0(z) - \beta_0^{-1}(z)} {2i} \\
    -\frac {\beta_0(z) - \beta_0^{-1}(z)} {2i} &\frac {\beta_0(z) + \beta_0^{-1}(z)} {2}
  \end{pmatrix},
  \end{equation}
 where $$\beta_0(z)=\left ( \frac {z-\lambda}{z+1} \right )^{1/ 4}$$ is analytic in $\mathbb{C}\setminus [-1, \lambda]$ and $\beta_0(z)\sim 1$ as $z\to\infty$.

\subsection{Local parametrix near $z=-1$}
Near $z=-1$, we seek a parametrix $P^{(-1)}(z)$ satisfying the following RH problem.

\subsubsection*{RH problem for $P^{(-1)}$}
\begin{enumerate}
\item[\rm (a)] $P^{(-1)}(z)$ is analytic in $\overline{U(-1,r)} \setminus
\Sigma_S$.

\item[\rm (b)] $P^{(-1)}(z)$ satisfies the jump condition
$$ P^{(-1)}_+(z)=P^{(-1)}_-(z)J_S(z),\qquad z \in U(-1,r) \cap \Sigma_S, $$
where the function $J_S(z)$ is given in \eqref{Jump-S}.

\item[\rm (c)] As $n \to \infty$, $P^{(-1)}(z)$ matches the outer parametrix $N(z)$ on the boundary of $U(-1,r)$, i.e.,
\begin{equation}\label{Matching condition:P-1}
P^{(-1)}(z)=
(I+O(1/n))N(z), \qquad  z \in \partial U(-1,r).
\end{equation}
\end{enumerate}
As what we did in Section \ref{subsubsec:Airypara}, the solution to RH problem for $P^{(-1)}$ can be constructed in terms of the Airy parametrix $\Phi_{\textrm{Ai}}$ \eqref{Airy-model-solution}, which we omit here.

\subsection{Local parametrix near $z=1$}
Recall that $\lambda \to 1$ as $n\to \infty$, thus, for $n$ large enough, both the points $z=1$ and $z=\lambda$ belong to an open dist $U(1,r)$ with $r$ small and fixed. We then intend to find a local parametrix $P^{(1)}(z)$ satisfying the following RH problem.

\subsubsection*{RH problem for $P^{(1)}$}
\begin{enumerate}
  \item[\rm(a)] $P^{(1)}(z)$  is analytic in $\overline{U(1,r)} \setminus \Sigma_{S}$.
  \item[\rm(b)] $P^{(1)}(z)$ satisfies the  jump condition
  \begin{equation}\label{PJS}
P^{(1)}_+(z)=P^{(1)}_-(z)J_{S}(z), \qquad z\in   U(1,r) \cap \Sigma_S.
\end{equation}
  \item[\rm(c)]  As $n\to \infty$, $P^{(1)}(z)$ matches $N(z)$ on the boundary of $U(1,r)$, i.e.,
\begin{equation}\label{mathcing condition P1}
P^{(1)}(z)= \left(I+ O\left (n^{-1/3}\right )\right)N(z), \qquad z\in \partial U(1,r).
 \end{equation}
 \item[\rm(d)]  $P^{(1)}(z)$ has the same local behavior as that of $S(z)$ near $z=\lambda$.
 \end{enumerate}

To solve the above RH problem, we note that the jump matrix in \eqref{PJS} can be reduced to a piecewise constant matrix by setting
\begin{equation}\label{eq:tildeP1}
  \widetilde{P}^{(1)}(z):= P^{(1)}(z)e^{-n\phi(z)\sigma_3}.
\end{equation}
It is easily seen that $\widetilde{P}^{(1)}(z)$ satisfies an RH problem as follows.

\subsubsection*{RH problem for $\widetilde{P}^{(1)}$}
\begin{enumerate}
  \item[\rm(a)] $\widetilde{P}^{(1)}(z)$  is analytic in $\overline{U(1,r)} \setminus  \Sigma_{S}$.
  \item[\rm(b)] $\widetilde{P}^{(1)}(z)$  satisfies the jump condition
  \begin{equation}\label{eq:jumpp1tilde}
 \widetilde{P}^{(1)}_+(z)=\widetilde{P}^{(1)}_-(z)\left\{ \begin{array}{ll}
  \begin{pmatrix}
    0 & 1 \\ -1 & 0
  \end{pmatrix}, & z \in (1 - r, \lambda), \\  [.4cm]
  \begin{pmatrix}
    1 & 1 \\ 0 & 1
  \end{pmatrix}, & z \in (\lambda, 1 + r),  \\
  [.4cm]
   \begin{pmatrix}
   1 &  0 \\
   1 & 1
   \end{pmatrix}, & z \in  U(1,r)\cap \{\Sigma_S^+ \cup \Sigma_S^-\}.
\end{array}
\right .
\end{equation}

  \item[\rm(c)] As $n\to\infty$, we have
\begin{equation}
\widetilde{P}^{(1)}(z)e^{n\phi(z)\sigma_3}= \left(I+ O\left (n^{-1/3}\right )\right)N(z), \qquad z\in \partial U(1,r).
 \end{equation}

 \item[\rm(d)] As $z\to \lambda$, we have
  \begin{equation} \label{eq:lambdap1tilde}
  \widetilde{P}^{(1)}(z)=O(1)e^{-\frac{n}{2}\sum_{k=1}^{2m}t_k(z-\lambda)^{-k}\sigma_3}.
  \end{equation}
\end{enumerate}

Recall the function $f$ defined in \eqref{def:f}, it is easily seen that, as $z \to 1$,
\begin{equation}\label{eq:fnear1}
f(z)=2(z-1)+O((z-1)^2),
\end{equation}
thus, it is analytic near $z=1$. Since $\lambda \to 1$ for large $n$, we have that $f(z)-f(\lambda)$ induces a conformal mapping from a neighborhood of $z=\lambda$ to that of $z=0$ for large $n$. By comparing \eqref{eq:jumpp1tilde} and \eqref{eq:lambdap1tilde} with \eqref{Psi-jump} and \eqref{Psi-origin}, respectively, we construct the local parametrix  ${P}^{(1)}$ with the aid of the model RH problem for $\Psi$ as follows:
\begin{equation}\label{parametrix}
P^{(1)}(z)=\widetilde{P}^{(1)}(z) e^{n\phi(z)\sigma_3} =E(z)\Psi\left (n^{2/3}(f(z)-f(\lambda));n^{2/3}f(\lambda), \vec{\tau}\right )e^{n\phi(z)\sigma_3},
\end{equation}
where
\begin{multline}\label{E}
E(z):=N(z)\frac{1}{\sqrt{2}}(I-i\sigma_1)
\left [n^{2/3}(f(z)-f(\lambda))\right ]^{\sigma_3/4}e^{\frac{1}{4}\pi i\sigma_3}
\\
\times
\begin{pmatrix}
1 & 0 \\
-a_1\left(n^{2/3}f(\lambda);\vec{\tau}\;\right) +\frac{n^{4/3}f(\lambda)^2}{4}& 1
\end{pmatrix}.
\end{multline}
In \eqref{E}, the $\frac{1}{4}$-root takes the principal branch. Thus,
on account of \eqref{eq:fnear1}, one has
\begin{equation}
  \biggl( f(x)-f(\lambda) \biggr)_+^{1/4} = e^{\pi i /2} \biggl( f(x)-f(\lambda) \biggr)_-^{1/4},
\end{equation}
for $x<\lambda$ and in a small neighborhood of $\lambda$. This, together with the RH problem for $N$ given in Section \ref{sec:outer parametrix}, implies that the pre-factor $E$ in \eqref{parametrix} is analytic in a neighborhood of $z = \lambda$, hence also in $U(1,r)$ for large $n$.

With $P^{(1)}$ defined in \eqref{parametrix}, it is easily seen that items (a) and (b) in the RH problem for $P^{(1)}$ have been fulfilled. It then remains to check the matching condition \eqref{mathcing condition P1} on the boundary of the circle and its local behavior near $z = \lambda$.

For the local behavior near $z = \lambda$, by \eqref{Psi-origin}, \eqref{eq:tildeP1}, \eqref{eq:lambdap1tilde} and \eqref{parametrix},  it suffices to show that the function
\begin{equation}\label{def:p}
p(z;n):=\sum_{j=1}^{2m}\tau_jn^{-2j/3}(f(z)-f(\lambda))^{-j}-\frac{n}{2}\sum_{k=1}^{2m}t_k(z-\lambda)^{-k}
\end{equation}
is bounded as $z\to \lambda$. In view of the expansion \eqref{def:cjk},
we have
\begin{equation}\label{eq: p-expand}
p(z;n)=\sum_{k=1}^{2m}\left(\sum_{j=k}^{2m}c_{jk}\tau_j n^{-2j/3}-\frac{n}{2}t_k\right)(z-\lambda)^{-k}+\sum_{j=1}^{2m}c_{j0}\tau_j n^{-2j/3}+O(z-\lambda).
\end{equation}
Recall the scaling of $\vec{t}=(t_1,t_2,\ldots,t_{2m})$ given in \eqref{eq:tkscaling}, we have that the coefficients of $(z-\lambda)^{-k}$, $k=1,...,2m$ in the above formula all vanish. Thus, $p(z;n)$ is analytic at $z=\lambda$, which also implies its boundedness near $z=\lambda$. Moreover, we have the estimate
\begin{equation}\label{def:p-lambda}
p(\lambda;n)=\sum_{j=1}^{2m}c_{j0}\tau_jn^{-2j/3}=O(n^{-2/3})
\end{equation}
for large $n$.

We proceed to verify the matching condition \eqref{mathcing condition P1} on $\partial U(1,r)$. The discussion is divided into two cases depending on the range of $\lambda$, namely, $cn^{-2/3}< \lambda-1 < c_0 n^{-2/3}$ and $c_0n^{-2/3}< \lambda-1 < d n^{-1/3}$, where $c_0$ is an arbitrary positive and big enough constant.
In the former case, we have that $n^{2/3}f(\lambda)$ is bounded.   We then obtain from \eqref{parametrix}, \eqref{E} and \eqref{Psi-infty} that, for $z\in \partial U(1,r)$ and large $n$,
\begin{align}\label{eq:matchingcheck1}
P^{(1)}(z)&N(z)^{-1}=N(z)\frac{1}{\sqrt{2}}(I-i\sigma_1)
\left [n^{2/3}(f(z)-f(\lambda))\right ]^{\sigma_3/4}e^{\frac{1}{4}\pi i\sigma_3}
\nonumber
\\
&\times \begin{pmatrix}
1 & 0 \\
\frac{n^{4/3}f(\lambda)^2}{4}& 1
\end{pmatrix}\left(I+
\frac{
\Psi_{1}(n^{2/3}f(\lambda))}{n^{2/3}(f(z)-f(\lambda))} +
O\left(n^{-4/3}\right)\right)
\nonumber
\\
&\times e^{-\frac{1}{4}\pi i\sigma_3}\left [n^{2/3}(f(z)-f(\lambda))\right ]^{-\sigma_3/4}\frac{1}{\sqrt{2}}(I+i\sigma_1)e^{F_n(z)\sigma_3}N(z)^{-1},
\end{align}
where
\begin{align}
F_n(z)&:= n\phi(z)- \theta\left( n^{2/3}(f(z)-f(\lambda));n^{2/3}f(\lambda) \right) \nonumber
\\
&=\frac{2}{3}n f(z)^{\frac{3}{2}}-\frac{2}{3}n(f(z)-f(\lambda))^{\frac{3}{2}}-nf(\lambda)(f(z)-f(\lambda))^{\frac{1}{2}}. \label{F-n}
\end{align}
Note that $|f(z)|$ is uniformly bounded below from $0$ for $z\in \partial U(1,r)$ and $f(\lambda)\to 0$ as $n$ goes to infinity, it then follows from a direct calculation and the Taylor expansion that
\begin{align}
F_n(z)&=\frac{nf(\lambda)^2}{4(f(z)-f(\lambda))^{\frac12}}\left(-\frac{8}{3}\frac{f(z)^2}{f(\lambda)^2}+\frac{4}{3}\frac{f(z)}{f(\lambda)}+\frac43
+\frac{8}{3}\frac{f(z)^2}{f(\lambda)^2}\left(1-\frac{f(\lambda)}{f(z)}\right)^{\frac12}\right)
\nonumber
\\
&=\frac{nf(\lambda)^2}{4(f(z)-f(\lambda))^{\frac12}}\left(1+O\left(\frac{f(\lambda)}{f(z)}\right)\right).
\end{align}
Since $n^{2/3}f(\lambda)$ is bounded, it follows that
\begin{equation}
F_n(z)=O(n^{-1/3})
\end{equation}
for $z\in \partial U(1,r)$ and large $n$. Inserting the above formula into \eqref{eq:matchingcheck1}, it follows that
\begin{align}
 P^{(1)}(z)N(z)^{-1}
& =N(z)\big(I+\frac{(n^{4/3}f(\lambda)^2/4-a_1(n^{2/3}f(\lambda)))(\sigma_3-i\sigma_1)}{2n^{1/3}\sqrt{f(z)-f(\lambda)}}
\nonumber
\\
& \qquad +O (n^{-2/3})\big) (I+O(n^{-1/3}))N(z)^{-1}
\nonumber
\\
&=I+O(n^{-1/3}),
\nonumber
\end{align}
as required.

If $c_0 n^{-2/3}<\lambda-1<dn^{-1/3}$, the function $n^{2/3}f(\lambda)$ might be unbounded. The expansion \eqref{eq:matchingcheck1} is not valid anymore, since
the asymptotics of $\Psi(\zeta; s , \vec{\tau})$ in \eqref{Psi-infty} does not hold for $s$ large. Based on the asymptotic analysis of the RH problem for $\Psi$ performed in Section \ref{sec-Psi-large-S}, however, we could derive the other asymptotic formula of $\Psi(\zeta; s , \vec{\tau} )$ for both $s$ and $\zeta$ large, as stated in the next lemma. This expansion will then be used to verify the matching condition in the second case.

\begin{lem}\label{lem:asypsinew}
  We have
  \begin{align} \label{eq:Psi-large s-1}
    \Psi(\zeta; s , \vec{\tau} ) & = \left(
                  \begin{array}{cc}
                    1 & 0 \\
                    a_1(s ; \vec{\tau})-\frac{s^2}{4} & 1 \\
                  \end{array}
                \right)e^{-\frac{1}{4}\pi i\sigma_3}\zeta^{-\frac{1}{4}\sigma_3}\frac{I+i\sigma_1}{\sqrt{2}} \nonumber \\
                & \qquad \qquad \times \left(I+O(1/\sqrt{s\zeta})+O(s/\zeta)\right)\exp\left(-\frac{2}{3}(\zeta+s)^{3/2}\sigma_3 \right),
  \end{align}
  as $s \to \infty$ and $\zeta/s\to \infty$.
\end{lem}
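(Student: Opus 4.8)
The plan is to derive \eqref{eq:Psi-large s-1} by running the steepest‑descent chain $\Psi\mapsto U\mapsto W\mapsto Q\mapsto D$ of Section~\ref{sec-Psi-large-S} backwards, evaluated at a point that is large. By \eqref{psi to U} one has $\Psi(\zeta;s,\vec{\tau})=U(\zeta/s;s,\vec{\tau})$, so with $w:=\zeta/s$ the hypothesis $\zeta/s\to\infty$ means exactly $w\to\infty$; in particular $w$ lies outside $\overline{U(-1,r)}$ and, by \eqref{U to W}, $U(w)=W(w)$ up to a constant unipotent factor $\left(\begin{smallmatrix}1&0\\\pm1&1\end{smallmatrix}\right)$ in the far‑left sectors $\Omega_5,\Omega_6$ of Figure~\ref{contour-for-V}; since $\mathrm{Re}\,(\zeta+s)^{3/2}<0$ there, that factor becomes exponentially small once commuted through $e^{-\frac{2}{3}(\zeta+s)^{3/2}\sigma_3}$ and joins the error. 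Inverting \eqref{W to Q} and substituting $Q(w)=D(w)Q^{(\infty)}(w)$ with the explicit outer parametrix \eqref{Out solution} then gives
\begin{multline*}
\Psi(\zeta;s,\vec{\tau})=\begin{pmatrix}1&0\\a_1(s;\vec{\tau})&1\end{pmatrix}e^{-\frac{1}{4}\pi i\sigma_3}s^{-\frac{1}{4}\sigma_3}\begin{pmatrix}1&0\\i\left(\frac{s^{3/2}}{4}+\frac{c_1}{s}\right)&1\end{pmatrix}
\\
\times D(w)\,(w+1)^{-\frac{1}{4}\sigma_3}\,\frac{I+i\sigma_1}{\sqrt{2}}\,e^{-\bigl(s^{3/2}g(w)+q(w)\bigr)\sigma_3},\qquad w:=\zeta/s.
\end{multline*}

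The second step is to relocate the scalar quantities by means of three elementary identities. First, $g(w)=\frac{2}{3}(w+1)^{3/2}$ from \eqref{def:g} yields the \emph{exact} equalities $s^{3/2}g(w)=\frac{2}{3}(\zeta+s)^{3/2}$ and $s^{-\frac{1}{4}\sigma_3}(w+1)^{-\frac{1}{4}\sigma_3}=(\zeta+s)^{-\frac{1}{4}\sigma_3}$. Second, the conjugation rule $e^{-\frac{1}{4}\pi i\sigma_3}s^{-\frac{1}{4}\sigma_3}\left(\begin{smallmatrix}1&0\\b&1\end{smallmatrix}\right)=\left(\begin{smallmatrix}1&0\\ibs^{1/2}&1\end{smallmatrix}\right)e^{-\frac{1}{4}\pi i\sigma_3}s^{-\frac{1}{4}\sigma_3}$ converts the large triangular factor into $\left(\begin{smallmatrix}1&0\\-(s^2/4+c_1/\sqrt{s})&1\end{smallmatrix}\right)$, so the prefactor becomes $\left(\begin{smallmatrix}1&0\\a_1(s;\vec{\tau})-s^2/4-c_1/\sqrt{s}&1\end{smallmatrix}\right)$. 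Third, $(\zeta+s)^{\pm\frac{1}{4}}=\zeta^{\pm\frac{1}{4}}\bigl(1+O(s/\zeta)\bigr)$. Using these, pulling the unwanted $\left(\begin{smallmatrix}1&0\\-c_1/\sqrt{s}&1\end{smallmatrix}\right)$ out of the prefactor on the left, and commuting it to the right through $e^{-\frac{1}{4}\pi i\sigma_3}\zeta^{-\frac{1}{4}\sigma_3}\frac{I+i\sigma_1}{\sqrt{2}}$: the decisive point is that conjugation by $\zeta^{-\frac{1}{4}\sigma_3}$ multiplies a lower‑triangular unipotent perturbation's $(2,1)$‑entry by $\zeta^{-1/2}$, so — since $c_1=c_1(s)$ stays bounded as $s\to+\infty$ (cf.\ \eqref{eq:ck}, \eqref{eq:c1asy}) — this factor is $I+O(1/\sqrt{s\zeta})$.

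It remains to check that the residual errors, all of which end up between $\frac{I+i\sigma_1}{\sqrt{2}}$ and $e^{-\frac{2}{3}(\zeta+s)^{3/2}\sigma_3}$, are $O(1/\sqrt{s\zeta})+O(s/\zeta)$: (i) $e^{-q(w)\sigma_3}=I+O(q(w))$, where $q(w)=\frac{\sqrt{\zeta+s}}{\sqrt{s}}\sum_{k=1}^{2m}c_k\zeta^{-k}=O(1/\sqrt{s\zeta})$ by \eqref{def:q}, \eqref{q-zeta-approx} and the boundedness of the $c_k$; (ii) the $c_1/\sqrt{s}$ contribution, just discussed; (iii) $D(w)-I$, which by \eqref{D-expand} (with $D_1$ vanishing off $\partial U(-1,r)$ by \eqref{eq:D1}) equals $D_2(w)/s^{3/2}+O(s^{-2})$, and since each correction is $O(1/|w|)$ at infinity — see \eqref{eq:D2} — conjugating through $s^{-\frac{1}{4}\sigma_3}(w+1)^{-\frac{1}{4}\sigma_3}$ and unraveling $w=\zeta/s$ turns it into $O(\zeta^{-3/2})+O(1/(s\zeta))$, which is absorbed. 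Assembling the pieces produces \eqref{eq:Psi-large s-1}, uniformly in the relevant unbounded $\zeta$‑region (where $\zeta$ may approach the deformed contours $\widetilde{\Sigma}_j$), because every transformation and estimate in Section~\ref{sec-Psi-large-S} is uniform there and the jump relations carry over. The main obstacle I anticipate is precisely this bookkeeping: one must verify that the $O(s^{3/2})$‑sized entry forced in by the normalization \eqref{W to Q} never couples with any of the decaying factors (from $D-I$, from $q(w)$, or from the replacement $(\zeta+s)^{\pm\frac14}\mapsto\zeta^{\pm\frac14}$) to generate a term larger than the asserted error; what keeps the estimate closed are the two \emph{exact} identities for $s^{3/2}g(w)$ and $s^{-\frac14\sigma_3}(w+1)^{-\frac14\sigma_3}$ together with the $\zeta^{-1/2}$‑contraction under conjugation by $\zeta^{-\frac14\sigma_3}$.
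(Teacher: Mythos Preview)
Your proposal is correct and follows essentially the same route as the paper's proof: invert the chain $\Psi\to U\to W\to Q\to D$, use the exact identities $s^{3/2}g(\zeta/s)=\tfrac{2}{3}(\zeta+s)^{3/2}$ and $s^{-\frac14\sigma_3}(\zeta/s+1)^{-\frac14\sigma_3}=(\zeta+s)^{-\frac14\sigma_3}$, and absorb the contributions from $q$, $D-I$, and the $c_1/\sqrt{s}$ shift into the stated error. The only cosmetic difference is that the paper controls $D(\zeta/s)-I$ through its large-argument expansion $I+(s/\zeta)D_\infty+O(s^2/\zeta^2)$ combined with the bound $(D_\infty)_{12}=O(s^{-2})$ from \eqref{D-inf-12}, whereas you invoke the large-$s$ expansion \eqref{D-expand}; both yield the same estimate.
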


\begin{proof}
Tracing back the transformations $\Psi\to U\to W \to Q \to D$ in \eqref{psi to U}, \eqref{U to W}, \eqref{W to Q} and \eqref{def:D}, it follows that,
if $|\zeta+s|>\delta$,
\begin{align} \label{eq:Psi-large s-0}
  \Psi(\zeta)=&\left(
              \begin{array}{cc}
                1 & 0 \\
                a_1-\frac{s^2}{4}-\frac{c_1}{\sqrt{s}} & 1 \\
              \end{array}
            \right)e^{-\frac{1}{4}\pi i\sigma_3}s^{-\frac{1}{4}\sigma_3}D(\zeta/s)\left(\frac{\zeta}{s}+1\right)^{-\frac{1}{4}\sigma_3}
            \frac{I+i\sigma_1}{\sqrt{2}} \nonumber \\
            & \qquad \times \exp\left(-(s^{3/2}g(\zeta/s)+q(\zeta/s))\sigma_3\right).
\end{align}
As $\zeta/s \to \infty$, it is readily seen from \eqref{def:q}, \eqref{D-large-zeta} and \eqref{eq:integraloperatorD} that
\begin{equation*}
  q(\zeta/s)= \frac{c_1}{(s \zeta)^{1/2}} + O\left(\frac{s^{1/2}}{\zeta^{3/2}}\right) ,
\end{equation*}
and
\begin{equation*}\label{eq:D-est}
D(\zeta/s)=I+ \frac{s}{\zeta} \cdot D_{\infty} + O\left(\frac{s^2}{\zeta^2}\right),
\end{equation*}
where the matrix $D_{\infty}$ is given in \eqref{def:Dinfty}. Also note that
\begin{equation*}
  \left(\frac{\zeta}{s}+1\right)^{-\frac{1}{4}\sigma_3}=\left(\frac{\zeta}{s}\right)^{-\frac{1}{4}\sigma_3} \left( I - \frac{s}{4\zeta} \sigma_3 + O\left(\frac{s^2}{\zeta^2}\right) \right) , \qquad \textrm{as } \zeta/s \to \infty,
\end{equation*}
and
\begin{equation*}
s^{3/2}g(\zeta/s) = \frac{2}{3}(\zeta+s)^{3/2};
\end{equation*}
see the definition of $g(\zeta)$ in \eqref{def:g}. Inserting the above formulas into \eqref{eq:Psi-large s-0} gives us \eqref{eq:Psi-large s-1}. Here, we have also make use of the fact that $s (D_\infty)_{12}$ is uniformly bounded for all $c_0< s < +\infty$; see \eqref{D-inf-12}.

This completes the proof of Lemma \ref{lem:asypsinew}.
\end{proof}

We now return to checking the matching condition for $c_0 n^{-2/3}<\lambda-1<dn^{-1/3}$. To this end, note that, for large $n$,
\begin{equation*}
  \zeta:=n^{2/3}(f(z)-f(\lambda)) = O(n^{2/3}), \qquad  z\in \partial U(1,r),
\end{equation*}
 \begin{equation*}
 n^{2/3}f(\lambda)=O(n^{1/3}),
\end{equation*}
and
\begin{equation*}
 \frac{2}{3}(\zeta+n^{2/3}f(\lambda))^{3/2}= \frac{2}{3}nf(z)^{3/2}=n\phi(z).
  \end{equation*}
Thus, we use the asymptotic formula \eqref{eq:Psi-large s-1} in the large $n$ expansion of \eqref{parametrix}, and it follows from a straightforward calculation that
\begin{equation*}
P^{(1)}(z) N(z)^{-1}= I+O(n^{-1/3}), \qquad  z\in \partial U(1,r).
\end{equation*}
As a consequence, we conclude that $P^{(1)}(z)$ in \eqref{parametrix} satisfies the matching condition \eqref{mathcing condition P1} for all $c n^{-2/3}<\lambda-1<dn^{-1/3}$.

\subsection{Final transformation}
The final transformation is defined by
\begin{equation}\label{S-R}
R(z)=\left\{ \begin{array}{ll}
                S(z)N(z)^{-1}, & \textrm{for $z\in \mathbb{C} \setminus \left \{ U(-1,r)\cup U(1,r)\cup \Sigma_S \right \}$,} \\
               S(z) P^{(-1)}(z)^{-1}, & \textrm{for $z\in   U(-1,r) \setminus \Sigma_{S}$,}  \\
               S(z)  P^{(1)}(z)^{-1}, & \textrm{for $z\in   U(1,r) \setminus \Sigma_{S}$.}
             \end{array}\right .
\end{equation}
Then, it is readily seen that $R$ satisfies the following RH problem.
\begin{figure}[h]
 \begin{center}
   \includegraphics[width=7.5cm]{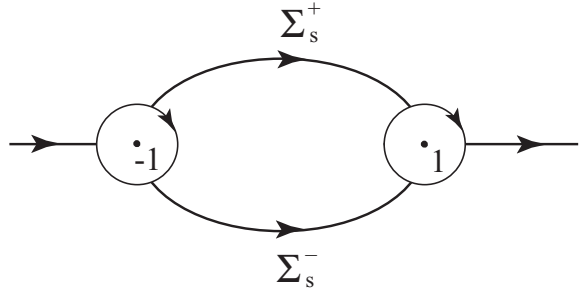} \end{center}
  \caption{The contour $\Sigma_R$ for the RH problem for $R$.}
 \label{contour-for-R}
\end{figure}

\subsubsection*{RH problem for $R$}
\begin{enumerate}
\item[\rm (a)]  $R(z)$ is analytic in $\mathbb{C} \setminus \Sigma_{R}$, where the contour $\Sigma_R$ is shown in Figure \ref{contour-for-R}.

\item[\rm (b)]  $R$  satisfies the jump condition
$$ R_{+}(z) = R_{-}(z)J_{R}(z), \qquad z\in \Sigma_R,$$
where
  \begin{equation}
                     J_{R}(z)=\left\{
                                      \begin{array}{ll}
                                      P^{(1)}(z)N(z)^{-1}, & \hbox{for $z \in \partial U(1,r)$,} \\
                                        P^{(-1)}(z)N(z)^{-1}, & \hbox{for $z \in \partial U(-1,r)$,} \\
                                        N(z) J_S(z) N(z)^{-1}, & \hbox{for $z \in \Sigma_{R} \setminus \{ \partial U(-1,r) \cup \partial U(1,r) \}$.}
                                      \end{array}
\right.
  \end{equation}
\item[\rm (c)] As $z \to \infty$,
$$R(z)=I+O(1/z).$$
\end{enumerate}

For $z \in \Sigma_{R} \setminus \{ \partial U(-1,r) \cup \partial U(1,r) \}$, it is easily seen that
\begin{equation}\label{eq:Rexpo}
J_{R}(z)=I+O\left(e^{-\tilde {c} n}\right),
\end{equation}
for some $\tilde {c}>0$ and large $n$. Thus, in view of \eqref{Matching condition:P-1}, \eqref{mathcing condition P1} and the above estimate, it follows again from the
standard analysis for the small norm RH problem that
\begin{equation}\label{R-asymptotic}
 R(z)=I+O(n^{-1/3}), \qquad n\to \infty,
\end{equation}
where the error bound is uniformly valid for $z \in \mathbb{C} \setminus \Sigma_{R}$ and  $1+cn^{-2/3}<\lambda<1+dn^{-1/3}$
with some constants $c<0$ and $d>0$.

\section{Proof of  Theorem \ref{thm:Correlation kernel-Asy} and asymptotics of $\frac{d^2}{d\lambda^2}\ln Z_n(\lambda)$}
\label{sec:pfthmkernel}

As a consequence of \eqref{R-asymptotic}, we are now able to prove Theorem \ref{thm:Correlation kernel-Asy} and derive the asymptotics of
the logarithmic derivative of the partition function $\frac{d^2}{d\lambda^2}\ln Z_n(\lambda)$ in this section. We begin with the proof of Theorem \ref{thm:Correlation kernel-Asy}.

\subsection{Proof of Theorem \ref{thm:Correlation kernel-Asy}}

\label{sec:pfasykernel}
From \eqref{kernel-Y}, it follows that
\begin{equation}\label{eq:kerY}
K_n\left(x,y; \lambda, \vec{t}\;\right)=\frac{\sqrt{w(x)w(y)}}{2\pi i(x-y)}
\begin{pmatrix}
0 & 1
\end{pmatrix}
Y_+\left(y;\lambda, \vec{t}\;\right)^{-1}Y_+\left(x;\lambda, \vec{t}\;\right)
\begin{pmatrix}
1
\\
0
\end{pmatrix}.
\end{equation}
The large $n$ approximation of $Y_+(x)$ can be obtained by tracing back the sequence of transformations $Y\to T\to S\to R$, which gives us
\begin{align}\label{Y-Psi-1}
Y_+(x)
=\left\{
         \begin{array}{ll}
           e^{\frac{1}{2}nl\sigma_3}R(x)E(x)\Psi_+w(x)^{-\sigma_3/2}, & \hbox{$\lambda<x<1-r$,}
        \\[.4cm]
          e^{\frac{1}{2}nl\sigma_3}R(x)E(x)\Psi_+\begin{pmatrix}
                                                                            1 & 0 \\
                                                                            1 & 1
\end{pmatrix}
w(x)^{-\sigma_3/2}, & \hbox{$1-r <x < \lambda$,}
         \end{array}
       \right.
\end{align}
where $\Psi_+=\Psi_+(n^{2/3}(f(x)-f(\lambda));n^{2/3}f(\lambda),\vec{\tau})$.
Now we fix $u,v>0$ and take
\begin{equation}\label{eq:xy}
x=\lambda+\frac{u}{2n^{2/3}},\qquad   y=\lambda+\frac{v}{2n^{2/3}}.
\end{equation}
It is then easily seen from \eqref{eq:kerY} and \eqref{Y-Psi-1} that
\begin{multline}\label{eq:kuvpositive}
K_n\left(x,y; \lambda, \vec{t}\;\right)
=\frac{1}{2 \pi i (x-y)}\begin{pmatrix}
0 & 1
\end{pmatrix}
\Psi_+(n^{2/3}(f(y)-f(\lambda));n^{2/3}f(\lambda),\vec{\tau})^{-1}\\
\times
E(y)^{-1}R(y)^{-1}
 R(x)E(x)\Psi_+(n^{2/3}(f(x)-f(\lambda));n^{2/3}f(\lambda),\vec{\tau})
\begin{pmatrix}
1
\\
0
\end{pmatrix}.
\end{multline}
As $n\to \infty$, it is immediate from \eqref{eq:lambdascalingparti}, \eqref{eq:fnear1} and \eqref{eq:xy} that
\begin{equation}\label{eq:limitxy}
n^{2/3}f(\lambda)  \to s, \qquad
n^{2/3}(f(x)-f(\lambda))  \to u, \qquad n^{2/3}(f(y)-f(\lambda))  \to v.
\end{equation}
Furthermore, since both $R$ and $E$ are analytic near $z=\lambda$, we have
\begin{equation}\label{RinverseR}
R(y)^{-1}R(x)=I+O\left(\frac{x-y}{n^{1/3}}\right) =
I+O\left(\frac{u-v}{n}\right),
\end{equation}
and in view of \eqref{E} we see that $E(x)=O\left(n^{1/6}\right)$,
$E(y)=O\left(n^{1/6}\right)$ and
\begin{equation}\label{EinverseE}
E(y)^{-1}E(x) = I+O(n^{-1/3}),
\end{equation}
as $n\to\infty$. As a consequence, one has
\begin{equation}\label{ER}
E(y)^{-1}R^{-1}(y)R(x)E(x)=I+O(n^{-1/3}).
\end{equation}
Inserting \eqref{eq:limitxy} and \eqref{ER} into \eqref{eq:kuvpositive} gives us
\eqref{Cor kernel-asy} for $u,v>0$.

The case where $u$ and/or $v$ are negative can be proved in a similar manner. We do not give details here.

This completes the proof of Theorem \ref{thm:Correlation kernel-Asy}.
\qed

\subsection{Asymptotics of $\frac{d^2}{d\lambda^2}\ln Z_n(\lambda)$}
It is the aim of this subsection to prove the following lemma concerning the asymptotics
of the logarithmic derivative of the partition function, which will be the starting point in our proof of Theorem \ref{thm:partition asymptotics}.
The proof is based on the connection between $\frac{d^2}{d\lambda^2}\ln Z_n(\lambda)$ and the RH problem for $Y$ established in \eqref{diff identity-2}.

\begin{lem}\label{thm:Log-Partition function-asy}
Under the condition \eqref{eq:tkscaling}, we have, as $n\to \infty$,
\begin{equation}\label{log-Z-n-asy}
    \frac{d^2}{d\lambda^2}\ln Z_n(\lambda)=-\left(n^{2/3} \frac{d f(\lambda)}{d \lambda}\right)^2 \biggl( a'_1(n^{2/3}f(\lambda)) -\frac{n^{2/3}f(\lambda)}{2} + O(n^{-2/3}) \biggr),
\end{equation}
where $f$ is defined in \eqref{def:f}, $a_1(s)=a_1(s;\vec{\tau})$ is given in \eqref{a1:def}, $a_1'(s) = \frac{da_1(s)}{ds}$ and the error bound is uniform for $1+cn^{-2/3}<\lambda<1+dn^{-1/3}$ with any choice of  constants $c<0$ and $d>0$.
\end{lem}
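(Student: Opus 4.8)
The plan is to start from the second differential identity \eqref{diff identity-2}. Since $\det Y\equiv 1$ forces $\det H(\lambda)\equiv 1$, the matrix $\frac{d}{d\lambda}H(\lambda)\,H(\lambda)^{-1}$ is traceless, so \eqref{eq:H-ODE} gives $\det\bigl(\frac{d}{d\lambda}H(\lambda)\,H(\lambda)^{-1}\bigr)=-4n^2\lambda^2+16n^2(Y_1)_{12}(Y_1)_{21}$, and \eqref{diff identity-2} collapses to
\[
\frac{d^2}{d\lambda^2}\ln Z_n(\lambda)=-4n^2+16n^2(Y_1)_{12}(Y_1)_{21}.
\]
As the target error in \eqref{log-Z-n-asy} is $\bigl(n^{2/3}f'(\lambda)\bigr)^2\cdot O(n^{-2/3})=O(n^{2/3})$, it suffices to compute the product $(Y_1)_{12}(Y_1)_{21}$ with additive error $o(n^{-4/3})$, uniformly for $1+cn^{-2/3}<\lambda<1+dn^{-1/3}$.

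First I would read $Y_1$ off the steepest descent of Section \ref{sec:AARHY}. Unravelling $Y\to T\to S\to R$ in the region outside the lens and the two disks yields $Y(z)=e^{\frac12 nl\sigma_3}R(z)N(z)e^{n\widehat g(z)\sigma_3}e^{-\frac12 nl\sigma_3}e^{\frac n2\sum_{k=1}^{2m}t_k(z-\lambda)^{-k}\sigma_3}$. Letting $z\to\infty$ and using that $\widehat g(z)=\log z+O(z^{-2})$ (the $z^{-1}$-coefficient vanishes by the evenness of $\varphi$), that $R(z)=I+R_1z^{-1}+\cdots$ with $R_1=O(n^{-1/3})$ by \eqref{R-asymptotic}, that $N(z)=I+N_1z^{-1}+\cdots$ with $N_1$ explicit from \eqref{N} (in particular $(N_1)_{12}(N_1)_{21}=(1+\lambda)^2/16$), and that $e^{\frac n2\sum_k t_k(z-\lambda)^{-k}\sigma_3}=I+\frac{nt_1}{2z}\sigma_3+\cdots$, one gets $Y_1=e^{\frac12 nl\sigma_3}(R_1+N_1)e^{-\frac12 nl\sigma_3}+\frac{nt_1}{2}\sigma_3$, hence
\[
(Y_1)_{12}(Y_1)_{21}=\bigl((R_1)_{12}+(N_1)_{12}\bigr)\bigl((R_1)_{21}+(N_1)_{21}\bigr).
\]
Thus the whole problem reduces to determining the matrix $R_1$ with error $O(n^{-1})$.

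To compute $R_1$ I would use the small-norm theory for the RH problem for $R$, which gives $R_1=-\frac1{2\pi i}\oint_{\partial U(1,r)}(J_R(w)-I)\,dw-\frac1{2\pi i}\oint_{\partial U(1,r)}(R_-(w)-I)(J_R(w)-I)\,dw+O(n^{-1})$, the circle $\partial U(-1,r)$ contributing only an $O(n^{-1})$ term with no $\vec\tau$-dependence and the rest being exponentially small by \eqref{eq:Rexpo}. On $\partial U(1,r)$ one has $J_R=P^{(1)}N^{-1}$, whose expansion in powers of $n^{-1/3}$ is exactly the one produced when verifying the matching condition \eqref{mathcing condition P1}: in the regime where $n^{2/3}f(\lambda)$ remains bounded it is read off from \eqref{Psi-infty} and carries $a_1(n^{2/3}f(\lambda))$, $(\Psi_1(n^{2/3}f(\lambda)))_{11}$ and $\tfrac14 n^{4/3}f(\lambda)^2$, while in the regime $n^{2/3}f(\lambda)\to\infty$ it is read off from Lemma \ref{lem:asypsinew} and carries $a_1(n^{2/3}f(\lambda))-\tfrac14(n^{2/3}f(\lambda))^2$ together with the matrix $D_\infty$ of \eqref{def:Dinfty}. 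In both regimes the relevant coefficients are functions of $w$ that, once the branch points of $N(w)$ and of $(f(w)-f(\lambda))^{1/2}$ are matched, are meromorphic in $U(1,r)$ with a single simple pole at $w=\lambda$, so the residue theorem gives $R_1$ explicitly. Substituting into the previous display and using the scaling \eqref{eq:lambdascalingparti}, the Taylor expansion $f(\lambda)=2(\lambda-1)+O((\lambda-1)^2)$, and the Lax-pair relation $a_1'(s)=a_1(s)^2-2(\Psi_1(s))_{11}$ (which is \eqref{a-1-b-1} together with \eqref{A-1}, equivalently $\tfrac{d}{ds}\Psi_0(s)\,\Psi_0(s)^{-1}=\bigl(\begin{smallmatrix}0&1\\ 2b_1(s)+s&0\end{smallmatrix}\bigr)$) to repackage the $a_1^2$- and $(\Psi_1)_{11}$-terms, the apparently large pieces of size $n^2$ and $n^{4/3}$ collapse and everything consolidates into \eqref{log-Z-n-asy}.

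The hard part will be the bookkeeping needed to obtain a \emph{single} error term $O(n^{-2/3})$ in \eqref{log-Z-n-asy} valid uniformly across the window $1+cn^{-2/3}<\lambda<1+dn^{-1/3}$, which straddles the two regimes above. Because $a_1(s)\sim s^2/4$ and $(\Psi_1(s))_{11}\sim s^4/32$ grow with $s=n^{2/3}f(\lambda)=O(n^{1/3})$, the contributions carried by $R_1$ enter at several orders (down to $n^{-1/3}$, $n^{-1/2}$, $n^{-2/3}$) with nominal sizes up to $n^{2}$ and $n^{4/3}$, and one must check that they either cancel against the $-4n^2$ and $16n^2(N_1)_{12}(N_1)_{21}$ terms — in particular that the $\vec\tau$-dependent leading part of $R_1$ has equal $(1,2)$ and $(2,1)$ entries, so that it drops out of the cross term $16n^2\bigl((N_1)_{12}(R_1)_{21}+(R_1)_{12}(N_1)_{21}\bigr)\propto n^2\bigl((R_1)_{21}-(R_1)_{12}\bigr)$ — or else reconstitute exactly $a_1'(n^{2/3}f(\lambda))-n^{2/3}f(\lambda)/2=b_1(n^{2/3}f(\lambda))$, leaving only the genuine $O(n^{2/3})$ remainder; the refined expansion of $\Psi$ in Lemma \ref{lem:asypsinew} is precisely what makes this possible in the second regime. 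A secondary technicality, already mentioned above, is verifying the single-valuedness of the $n^{-1/3}$-, $n^{-1/2}$- and $n^{-2/3}$-coefficients of $P^{(1)}N^{-1}$ on $\partial U(1,r)$ so that the residue evaluation of $R_1$ is legitimate.
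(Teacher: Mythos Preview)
Your reduction to $\frac{d^2}{d\lambda^2}\ln Z_n(\lambda)=-4n^2+16n^2(Y_1)_{12}(Y_1)_{21}$ is correct and is precisely \eqref{eq: Y-1-diff} rewritten; the divergence from the paper is in how the right-hand side is evaluated. You propose to extract $(Y_1)_{12}(Y_1)_{21}$ from the large-$z$ expansion of $Y$, i.e.\ from $R_1+N_1$, and then to compute $R_1$ to several orders in $n^{-1/3}$ via residues of $J_R-I$ on $\partial U(1,r)$, relying on cancellations to collapse everything to $b_1(n^{2/3}f(\lambda))$. The paper instead stays with the form \eqref{diff identity-2} and evaluates $H(\lambda)=Y_+(\lambda)e^{-n\lambda^2\sigma_3}$ \emph{locally} at $z=\lambda$, where the steepest-descent chain gives the closed expression
\[
H(\lambda)=e^{\frac12 nl\sigma_3}R(\lambda)E(\lambda)\,\Psi_0\!\bigl(n^{2/3}f(\lambda)\bigr)\,e^{-p(\lambda;n)\sigma_3}.
\]
The determinant $\det\bigl(H'(\lambda)H(\lambda)^{-1}\bigr)$ is then computed by differentiating this product in $\lambda$. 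The decisive simplification is that the $\lambda$-derivative of the $\Psi_0$-factor is supplied directly by the Lax pair \eqref{Lax pair-2} evaluated at $\zeta=0$: $\frac{d}{ds}\Psi_0(s)\,\Psi_0(s)^{-1}=B(0;s)=\bigl(\begin{smallmatrix}0&1\\2a_1'(s)&0\end{smallmatrix}\bigr)$, so $a_1'$ appears immediately, without any recombination of $a_1^2$ and $(\Psi_1)_{11}$. The $R(\lambda)$- and $p(\lambda;n)$-factors contribute only harmless $O(1)$ and $O(n^{-1/2})$ corrections to the matrix inside the determinant, and the leading $2\times2$ block has determinant $-(n^{2/3}f'(\lambda))^2\bigl(a_1'(s)+s/2\bigr)$; combined with $4n^2(\lambda^2-1)=(n f'(\lambda))^2 f(\lambda)$ this gives \eqref{log-Z-n-asy} in one stroke.

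What the local route buys is exactly the avoidance of the bookkeeping you flag as ``the hard part.'' In your approach you must push the small-norm expansion of $R$ to at least three orders (your own error budget shows that the cross term $16n^2(N_1)_{12}\bigl((R_1)_{21}-(R_1)_{12}\bigr)$ magnifies an $O(n^{-1})$ uncertainty in $R_1$ to an $O(n)$ uncertainty in the answer, larger than the claimed $O(n^{2/3})$; so you in fact need $R_1$ to $O(n^{-4/3})$, not $O(n^{-1})$, including the explicit Airy contribution from $\partial U(-1,r)$), then verify several structural cancellations between terms of nominal size up to $n^2$, and do all of this uniformly across the bounded-$s$ and large-$s$ regimes. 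The paper's local computation replaces this by a single evaluation of the Lax pair and crude bounds on $R(\lambda)$, $E(\lambda)$, $p(\lambda;n)$ that already hold uniformly on the full $\lambda$-window.
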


\begin{proof}
In view of \eqref{diff identity-2}, our main task is to estimate $\det\left(\frac{d}{d\lambda}H(\lambda)\cdot H(\lambda)^{-1}\right)$ for large $n$.
We first observe from \eqref{def:H} and \eqref{Y-Psi-1} that
\begin{equation}\label{eq:H-lambda}
H(\lambda)=e^{\frac{1}{2}nl\sigma_3}R(\lambda) E(\lambda)\Psi_0( n^{2/3}f(\lambda))e^{-p(\lambda;n)\sigma_3},
\end{equation}
where $\Psi_0(s)$ and $p(\lambda;n)$ are defined in \eqref{Psi-origin} and \eqref{def:p-lambda}, respectively.
Moreover, from the definitions of $N(z)$ in \eqref{N} and $E(z)$ in \eqref{E}, we have
\begin{align}
E(\lambda) & = \frac{I-i\sigma_1}{\sqrt{2}} \lim_{z \to \lambda} \biggl[  \left ( \frac {z-\lambda}{z+1} \right )^{- \frac{1}{4}\sigma_3} \left( n^{2/3}(f(z)-f(\lambda))\right)^{\sigma_3/4} \biggr]
  \, e^{\frac{1}{4}\pi i\sigma_3} \nonumber \\
  & \qquad \qquad \times \begin{pmatrix}
1 & 0 \\
-a_1\left(n^{2/3}f(\lambda)\right) + \frac{1}{4}n^{4/3}f(\lambda)^2 & 1 \\
\end{pmatrix} \nonumber \\
& = \frac{I-i\sigma_1}{\sqrt{2}} \biggl(\frac{d f(\lambda)}{d \lambda} (\lambda+1)\biggr)^{\frac{1}{4}\sigma_3}e^{\frac{1}{4}\pi i\sigma_3}n^{\frac{1}{6}\sigma_3}  
  \begin{pmatrix}
1 & 0 \\
-a_1(n^{2/3}f(\lambda)) + \frac{1}{4}n^{4/3}f(\lambda)^2 & 1 \\
\end{pmatrix}.
\end{align}

To estimate $\frac{d}{d\lambda}H(\lambda)\cdot H(\lambda)^{-1}$, it is convenient to decompose the function $H(\lambda)$ into three terms as follows:
\begin{equation}\label{eq:H-lambda-1}
H(\lambda)=H_0(\lambda)H_1(\lambda)e^{-p(\lambda;n)\sigma_3}.
\end{equation}
where
\begin{align}\label{def:H-0}
H_0(\lambda)&:= e^{\frac{1}{2}nl\sigma_3}R(\lambda)\frac{I-i\sigma_1}{\sqrt{2}} \biggl( \frac{d f(\lambda)}{d \lambda} (\lambda+1) \biggr)^{\frac{1}{4}\sigma_3}e^{\frac{1}{4}\pi i\sigma_3}n^{\frac{1}{6}\sigma_3},
\\
\label{def:H-1}
H_1(\lambda)&:= \left(
                                                                                          \begin{array}{cc}
                                                                                            1 & 0 \\
                                                                                            -a_1( n^{2/3}f(\lambda))+\frac{1}{4} n^{4/3}f(\lambda)^2 & 1 \\
                                                                                          \end{array}
                                                                                        \right)\Psi_0( n^{2/3}f(\lambda)).
\end{align}
Then, we have from \eqref{eq:H-lambda-1} that
\begin{align}
  \frac{d}{d\lambda}H(\lambda) \cdot H(\lambda)^{-1}& =\frac{d}{d\lambda}H_0(\lambda)\cdot H_0(\lambda)^{-1}+H_0(\lambda)\cdot \frac{d}{d\lambda}H_1(\lambda)\cdot H_1(\lambda)^{-1}H_0(\lambda)^{-1} \nonumber \\
   & \qquad -\frac{d}{d\lambda}p(\lambda;n) \cdot H_0(\lambda)H_1(\lambda)\sigma_3H_1(\lambda)^{-1}H_0(\lambda)^{-1}.
   \label{def:H-diff}
\end{align}
Since $H_0(\lambda)$ is non-singular, we obtain
\begin{align}
   &\det\left( \frac{d}{d\lambda}H(\lambda) H(\lambda)^{-1}\right)  = \det\left( H_0(\lambda)^{-1}  \frac{d}{d\lambda}H(\lambda) H(\lambda)^{-1} H_0(\lambda)\right)   \nonumber \\
  &  =\det \big( H_0(\lambda)^{-1} \frac{d}{d\lambda}H_0(\lambda) 
+\frac{d}{d\lambda}H_1(\lambda) H_1(\lambda)^{-1} -\frac{d}{d\lambda}p(\lambda;n) H_1(\lambda)\sigma_3H_1(\lambda)^{-1} \big). \label{def:H-diff-1}
\end{align}

We next estimate the above expression term by term. From \eqref{def:H-0}, it follows that
\begin{align*}
   & H_0(\lambda)^{-1} \cdot \frac{d}{d\lambda}H_0(\lambda)= n^{-\frac{1}{6}\sigma_3} e^{-\frac{1}{4}\pi i\sigma_3}\left( \frac{d f(\lambda)}{d \lambda} (\lambda+1) \right)^{-\frac{1}{4}\sigma_3} \frac{I+i\sigma_1}{\sqrt{2}} R(\lambda)^{-1} \\
  & \qquad \cdot   \frac{d}{d\lambda}R(\lambda)  \frac{I-i\sigma_1}{\sqrt{2}} \left( \frac{d f(\lambda)}{d \lambda} (\lambda+1) \right)^{\frac{1}{4}\sigma_3} e^{\frac{1}{4}\pi i\sigma_3} n^{\frac{1}{6}\sigma_3} + \frac{1}{4} \left( \frac{\frac{d^2 f(\lambda)}{d \lambda^2} (\lambda+1) +  \frac{d f(\lambda)}{d \lambda}}{\frac{d f(\lambda)}{d \lambda} (\lambda+1)} \right)\sigma_3.
\end{align*}
On account of \eqref{R-asymptotic} and the fact that both $\frac{d f(\lambda)}{d \lambda}$ and $\frac{d^2 f(\lambda)}{d \lambda^2}$ are bounded for large $n$, we have
\begin{equation}\label{def:H-0-diff}
H_0(\lambda)^{-1}\cdot \frac{d}{d\lambda} H_0(\lambda) =
\begin{pmatrix}
  O(1) & O(n^{-2/3}) \\ O(1) & O(1)
\end{pmatrix},  \qquad \textrm{as } n \to \infty.
\end{equation}
To compute $\frac{d}{d\lambda}H_1(\lambda) \cdot H_1(\lambda)^{-1}$, we first use \eqref{Psi-origin} and \eqref{Lax pair-2} to obtain
\begin{align*}
  B(\zeta;s) = \frac{\partial\Psi}{\partial s} \Psi^{-1} = \left( \frac{d\Psi_0(s)}{ds}\left[I+O(\zeta)\right] + \Psi_0(s) O(\zeta) \right) \left[I+O(\zeta)\right]^{-1} \Psi_0(s)^{-1},
\end{align*}
as $\zeta \to 0$.
The above formula, together with \eqref{psi-B} and \eqref{a-1-b-1}, gives us
\begin{equation}
  \frac{d\Psi_0(s)}{ds} \cdot \Psi_0(s)^{-1} =  B(0;s) = \begin{pmatrix}
    0 &1\\ 2a_1'(s) & 0
  \end{pmatrix},
\end{equation}
where $a_1'(s) = \frac{da_1(s)}{ds}$. Thus, from \eqref{def:H-1}, it follows that
\begin{multline}\label{def:H-1-diff}
  \frac{d }{d\lambda}H_1(\lambda) \cdot H_1(\lambda)^{-1}
  \\ =  n^{2/3} \frac{d f(\lambda)}{d \lambda}\left[ \begin{pmatrix}
    a_1(s)-\frac{s^2}{4} & 1 \\ a'_1(s) +\frac{s}{2}-(a_1(s)-\frac{s^2}{4})^2 & -a_1(s)+\frac{s^2}{4}
  \end{pmatrix}\right]_{s=n^{2/3}f(\lambda)}.
\end{multline}
For the last term in \eqref {def:H-diff-1}, we recall from the estimate \eqref{def:p-lambda}
to obtain
\begin{equation}\label{p-H-est}
\frac{d}{d\lambda}p(\lambda;n) \cdot H_1(\lambda)\sigma_3H_1(\lambda)^{-1}= O(n^{-2/3})H_1(\lambda)\sigma_3H_1(\lambda)^{-1}.
\end{equation}
We now show that, for large $n$,
\begin{equation}\label{eq:estH1}
H_1(\lambda)\sigma_3H_1(\lambda)^{-1}=O(n^{1/6})
\end{equation}
Indeed,  we note from \eqref{Psi-origin}, \eqref{D-expand} and  \eqref{eq:Psi-large s-0} that
\begin{align}\label{Q-est}
\Psi_0(s)=& \left(
              \begin{array}{cc}
                1 & 0 \\
                a_1(s) -\frac{s^2}{4}-\frac{c_1}{\sqrt{s}} & 1 \\
              \end{array}
            \right)
            e^{-\frac{1}{4}\pi i\sigma_3}s^{-\frac{1}{4}\sigma_3}(I+O(1/s)) \frac{I+i\sigma_1}{\sqrt{2}}e^{-\frac{2}{3}s^{3/2}\sigma_3}e^{-\epsilon_1(s)\sigma_3},
\end{align}
where the error bound $O(1/s)$ is uniform for $s>c_0$ for certain big enough constant $c_0$, and
\begin{equation}\label{def:epsilon-1}
\epsilon_1(s)=\lim_{\zeta\to0}\left(\sqrt{1+\frac{\zeta}{s}}\sum_{k=1}^{2m}\frac{c_k}{\zeta^k}-\sum_{k=1}^{2m}\frac{\tau_k}{\zeta^k}
\right)
\end{equation}
is bounded due to \eqref{q-zeta-approx}. Thus, for $n^{2/3}f(\lambda)>c_0$ and $1-cn^{-2/3}<\lambda<1+dn^{-1/3}$, we obtain by inserting \eqref{Q-est} into \eqref{def:H-1}
\begin{equation}\label{H-1-est-0}
H_1(\lambda)\sigma_3H_1(\lambda)^{-1}=O(n^{1/3}\sqrt{f(\lambda)})=O(n^{1/6}), \qquad \textrm{as } n \to +\infty,
\end{equation}
where we have also made use of the large $s$ behavior of $a_1(s)$ given in \eqref{eq:a 1 asy}.
If $n^{2/3}f(\lambda)<c_0$ and $1-cn^{-2/3}<\lambda<1-dn^{-1/3}$,  then $n^{2/3}f(\lambda)$ is uniformly bounded in $n$.
Since both $a_1(s)$ and $\Psi_0(s)$ are smooth in $s$, it is immediate from \eqref{def:H-1} that
\begin{equation}\label{H-1-est-1}
H_1(\lambda)\sigma_3H_1(\lambda)^{-1}=O(1).
\end{equation}
The estimate \eqref{eq:estH1} then follows from  \eqref{H-1-est-0} and \eqref{H-1-est-1}.

Hence, combining \eqref{def:H-0-diff}, \eqref{p-H-est} and \eqref{eq:estH1}, it is readily seen that
\begin{multline}\label{eq:H-det-est}
\det \left(\frac{dH(\lambda)}{d\lambda}\cdot H(\lambda)^{-1}\right)
\\
=\det\left( \frac{dH_1(\lambda)}{d\lambda} \cdot H_1(\lambda)^{-1} +
\begin{pmatrix}
  O(1) & O(n^{-2/3}) \\ O(1) & O(1)
\end{pmatrix}
+O(n^{-1/2})\right).
\end{multline}
This, together with \eqref{def:H-1-diff}, implies that
\begin{multline}
  \det\left(\frac{dH(\lambda)}{d\lambda} \cdot H(\lambda)^{-1}\right)
   =  - \left(n^{2/3} \frac{d f(\lambda)}{d \lambda}\right)^2 \left( a'_1(n^{2/3}f(\lambda)) +\frac{n^{2/3}f(\lambda)}{2} + O(n^{-2/3}) \right). \label{eq:H-det}
\end{multline}
Finally, we note from the definitions of $f(z)$  and $\phi(z)$ in \eqref{def:f}  and \eqref{phi} that
\begin{equation}
  4n^2(\lambda^2-1) = n^2\left(\frac{df(\lambda)}{d\lambda}\right)^2f(\lambda).
  \end{equation}
Substituting the above two formulas into \eqref{diff identity-2} yields
\begin{equation}\label{eq:Hakel-deff-2}
    \frac{d^2}{d\lambda^2}\ln Z_n(\lambda)=-\left(n^{2/3} \frac{d f(\lambda)}{d \lambda}\right)^2 \biggl( a'_1(n^{2/3}f(\lambda)) -\frac{n^{2/3}f(\lambda)}{2} + O(n^{-2/3}) \biggr),
\end{equation}
where the error bound is uniform for $1+cn^{-2/3}<\lambda<1+dn^{-1/3}$ with any choice of  constants $c<0$ and $d>0$.

This completes the proof of Lemma \ref{thm:Log-Partition function-asy}.
\end{proof}


\section{Asymptotic analysis of the RH problem for $Y$ with $\lambda\geq1+n^{-2/5}$}
\label{sec:AARHY2}
In this section, we study the large $n$ behavior of the RH problem for $Y$ with $\lambda\geq1+n^{-2/5}$ and $n^{\frac{2k}{3}+1} t_k$, $k=1,2,\ldots,2m$, bounded. At the end, the asymptotics of the partition function is presented in this case. This result, together with Lemma \ref{thm:Log-Partition function-asy}, will finally lead us to the proof of Theorem \ref{thm:partition asymptotics}.

\subsection{The transformations $Y \to  \widetilde{T} \to  \widetilde{S}$}
In the present case, the first transformation is defined by
\begin{equation}\label{Y to T-2}
 \widetilde{T}(z)= e^{-\frac 1 2 nl \sigma_3} Y(z) e^ {n \left (\frac 1 2 l
-\hat{g}(z)\right )\sigma_3} \quad z\in
\mathbb{C} \setminus \mathbb{R},
\end{equation}
where the $\hat{g}$-function is defined in \eqref{g}. One may compare the above function $\widetilde{T}(z)$ with $T(z)$ defined in \eqref{Y to T}. It is easily seen that $ \widetilde{T}$ satisfies the following RH problem.

\subsubsection*{RH problem for $\widetilde{T}$}
\begin{enumerate}
  \item[\rm(a)]  $\widetilde{T}(z)$ is analytic in
  $\mathbb{C} \setminus \mathbb{R}$.

  \item[\rm(b)]  $\widetilde{T}(z)$  satisfies the jump condition
 \begin{equation}\label{Jump-T-1}\widetilde{T}_+(x)=\widetilde{T}_-(x)
 \left\{
\begin{array}{ll}
 \begin{pmatrix}
                                 1 & e^{-2n \phi(x)}e^{-2\widetilde{V}(x)} \\
                                 0 & 1
 \end{pmatrix},&   x\in(1,+\infty), \\
  [.4cm]
  \begin{pmatrix}
                                 e^{2n\phi_+(x)} & e^{-2\widetilde{V}(x)} \\
                                 0 & e^{2n \phi_-(x)}
  \end{pmatrix},&   x\in(-1, 1), \\
 [.4cm]
  \begin{pmatrix}
                                 1 &  e^{-2n \phi(x)}e^{-2\widetilde{V}(x)} \\
                                 0 & 1
  \end{pmatrix}, &  x\in(-\infty,-1),
\end{array}
\right .
\end{equation}
where
\begin{equation}\label{def: V-tile}
\widetilde{V}(x)=\frac{n}{2}\sum_{k=1}^{2m}t_k/(x-\lambda)^k.
\end{equation}

\item[\rm(c)]  As $z\to \infty$, we have
  \begin{equation}\label{T-infinity-1}
  \widetilde{T}(z)=I+O(1/z).
  \end{equation}

\item[\rm(d)] $\widetilde{T}(z)$ is bounded at $z = \lambda$.
\end{enumerate}

We next open lens around $[-1,1]$ as shown in Figure \ref{contour-for-tildeS} and introduce the following transformation:
\begin{equation}\label{T-S-1}
\widetilde{S}(z)=\left \{
\begin{array}{ll}
  \widetilde{T}(z), & \mbox{for $z$ outside the lens,}
  \\
  [.4cm]
  \widetilde{T}(z) \begin{pmatrix}
                                 1 & 0 \\
                                  - e^{2n\phi(z)} e^{2\widetilde{V}(z)} & 1
                                  \end{pmatrix} , & \mbox{for $z$ in the upper lens,}\\
                                  [.4cm]
\widetilde{T}(z) \begin{pmatrix}
                                 1 & 0 \\
                                  e^{2n\phi(z)} e^{2\widetilde{V}(z)} & 1
                                 \end{pmatrix} , & \mbox{for $z$ in the lower lens. }
\end{array}\right.
\end{equation}
Then, $\widetilde{S}$ satisfies the following RH problem.

\subsubsection*{RH problem for $\widetilde{S}$ }
\begin{enumerate}
  \item[\rm(a)]  $\widetilde{S}(z)$ is analytic in
  $\mathbb{C}\setminus \Sigma_{\widetilde{S}}$, where the contour $ \Sigma_{\widetilde{S}}$ is illustrated in Figure \ref{contour-for-tildeS}.

  \item[\rm(b)]  $\widetilde{S}(z)$  satisfies the jump condition
 \begin{equation}
 \widetilde{S}_+(z)=\widetilde{S}_-(z)\left\{
\begin{array}{ll}
 \begin{pmatrix}
                                 1 & e^{-2n \phi(z)}e^{-2\widetilde{V}(z)} \\
                                 0 & 1
 \end{pmatrix},&   z \in(1,+\infty)\cup (-\infty,-1),
 \\
 [.4cm]
  \begin{pmatrix}
                                 1 & 0 \\
                                 e^{2\widetilde{V}(z)} e^{2n \phi(z)}& 1
  \end{pmatrix},&   \mbox {$z\in \Sigma_{\widetilde S}^+ \cup \Sigma_{\widetilde S}^- $, }
  \\
  [.4cm]
   \begin{pmatrix}
                                 0 & e^{-2\widetilde{V}(z)} \\
                                 - e^{2\widetilde{V}(z)}& 0
   \end{pmatrix}, &  z \in(-1, 1).
\end{array}
\right .
\end{equation}

\item[\rm(c)] As $z\to \infty$, we have
  \begin{equation}
  \widetilde{S}(z)=I+O(1/z).
  \end{equation}

\item[\rm(d)] $\widetilde{S}(z)$ is bounded at $z = \lambda$.
\end{enumerate}

 \begin{figure}[h]
 \begin{center}
   \includegraphics[width=8cm]{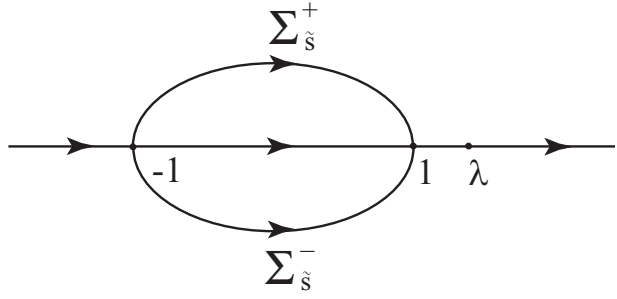}
 \end{center}
  \caption{Contour $\Sigma_{\widetilde{S}}$ for the RH problem for $\widetilde{S}$.}
 \label{contour-for-tildeS}
\end{figure}

\subsection{Outer and local parametrices}
Outside a small disk centered at $\lambda$ with radius $r>0$, the solution to the RH problem for $\widetilde{S}(z)$ can be approximated by
$S_0(z) e^{\widetilde{V}(z)\sigma_3}$, where $S_0(z)$ is the solution to the above RH problem for $S$ with all the parameters $t_k$ in \eqref{def: V-tile} vanish.
However, since $S_0(z) e^{\widetilde{V}(z)\sigma_3}$ possesses essential singularity at $z = \lambda$, it violates item (d) in the above RH problem. We therefore simply use $S_0(z)$ as the local parametrix.

The RH problem for $S_0$ is actually the one in the asymptotic analysis of the classical Hermite polynomials via the RH approach. Indeed, we have
\begin{equation}\label{def: S-0}
S_0(z)=\left\{ \begin{array}{ll}
             R_0(z)E_0(z)\Phi_{\Ai}(n^{2/3}f(z))e^{n\phi(z)\sigma_3}, & \Re z >0 \quad \mbox{and} \quad z\in   U(1,\delta),  \\
              R_0(z)N_0(z), & \Re z >0 \quad \mbox{and }\quad  z\not\in   U(1,\delta),
                           \end{array}\right .
\end{equation}
where $\delta>0$ is a small positive constant, $E_0(z)$ is analytic in $U(1,\delta)$,  $\Phi_{\Ai}$ is the Airy parametrix given by \eqref{Airy-model-solution},  $\phi(z)$ and $f(z)$ are defined in \eqref{phi} and \eqref{def:f}. In \eqref{def: S-0}, the  function $N_0(z)$ is defined by
\begin{equation}\label{N-0}
N_0(z)=\frac{I-i\sigma_1}{\sqrt{2}}\beta(z)^{-\sigma_3}\frac{I+i\sigma_1}{\sqrt{2}}=\left(
  \begin{array}{cc}
    \frac {\beta(z) + \beta^{-1}(z)} {2}&\frac {\beta(z) - \beta^{-1}(z)} {2i} \\
    -\frac {\beta(z) - \beta^{-1}(z)} {2i} &\frac {\beta(z) + \beta^{-1}(z)} {2} \\
  \end{array}
\right),
\end{equation}
where $ \beta(z)=\left ( \frac {z-1}{z+1} \right )^{1/ 4}$ is analytic in $\mathbb{C}\setminus [-1, 1]$ and $\beta(z)\sim 1$ as $z\to\infty$. The function $R_0(z)$ is analytic in $U(1,\delta)$ and
\begin{equation}\label{eq:R-0 est}
R_0(z)=I+O(1/n)\end{equation}
for $n$ large.

For later use, we also note that
\begin{align}
  S_0(z) \sim R_0(z)  N_0(z) \left(I+\sum_{k=1}^{\infty}\frac {A_k}{(n\phi(z))^{k}}\right), \label{eq:S-0-expan}
\end{align}
where $A_k $ are some constant matrices; see \cite{dkmv2}. The above expansion holds in a possibly shrinking neighborhood of $1$ as long as the condition $\lim_{n\to +\infty}n\phi(z) = \infty$ is satisfied.

\subsection{Final transformation}
With outer and local parametrices given by $S_0(z) e^{\widetilde{V}(z)\sigma_3}$ and $S_0(z)$, respectively,
we define the final transformation as follows:
\begin{equation}\label{def: R}
\widetilde{R}(z)=\left\{ \begin{array}{ll}
               \widetilde{S}(z)e^{-\widetilde{V}(z)\sigma_3} S_0(z)^{-1}, & z\not\in   U(\lambda,r),  \\
               \widetilde{S}(z) S_0(z)^{-1}, & z\in    U(\lambda,r).
             \end{array}\right .
\end{equation}

As usual, we intend to show that the function $\widetilde R$ tends to $I$ as $n\to \infty$ by considering the large $n$ behavior of its jump. We start with the case that $\lambda\in (1+n^{-2/5}, 1+\epsilon_0)$, where $\epsilon_0>0$ is a fixed and small constant. In this case, we choose  the constant  $\delta> 1+ \epsilon_0$ in \eqref{def: S-0} and $r=\frac{\lambda-1}{2}$ so that the neighborhood $U(\lambda,r)$ is contained in $U(1,\delta)$. Note that, as $\lambda$ may tend to the endpoint $z=1$ with a rate $n^{-2/5}$, $U(\lambda,r)$ could be a shrinking neighborhood, which requires some careful estimates.

Since $U(\lambda,r)$ does not intersect the upper and lower lens in the contour $\Sigma_{\widetilde{S}}$, we have that $\widetilde{R}$ satisfies the following RH problem.
\subsubsection*{RH problem for $\widetilde{R}$}
\begin{enumerate}
  \item[\rm(a)] $\widetilde{R}(z)$ is analytic in $\mathbb{C}\backslash \{ (\lambda-r,\lambda+r) \cup \partial U(\lambda,r) \}$; see Figure \ref{contour-for-tilde R} for an illustration of the contour.
  \item[\rm(b)] $\widetilde{R}(z)$  satisfies the jump condition
 \begin{equation} \label{eq: J-tildeR}
 \widetilde{R}_+(z) =\widetilde{R}_-(z)J_{\widetilde {R}}(z),
 \end{equation}
 where
 \begin{equation}\label{def:JtildeR}
  J_{\widetilde {R}}(z) = \left\{
\begin{array}{ll}
 S_{0,-}(z) \begin{pmatrix}
   1 & e^{-2n \phi(z)} (e^{-\widetilde{V}(z)} - 1) \\ 0 & 1
 \end{pmatrix} S_{0,-}(z)^{-1}, &   z\in(\lambda-r,\lambda+r), \\
  S_0(z)e^{-\widetilde{V}(z)\sigma_3}S_0(z)^{-1} , &   \mbox {$z \in \partial U(\lambda,r)$}.
\end{array}
\right .
\end{equation}

\item[\rm(c)] As $z \to \infty$,
 \begin{equation}
  \widetilde{R}(z)=I+O(1/z).
  \end{equation}
\end{enumerate}

\begin{figure}[h]
 \begin{center}
   \includegraphics[width=3cm]{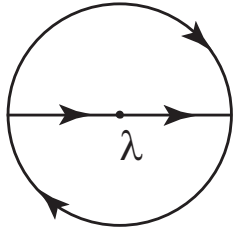}
 \end{center}
  \caption{Contour for the RH problem of $\widetilde{R}$.}
 \label{contour-for-tilde R}
\end{figure}

Because $e^{-n \phi(z)}$ is exponentially small as $n \to \infty$  and $e^{-\widetilde{V}(z)} - 1$ is bounded for $z \in (\lambda-r,\lambda+r)$ , it is easily seen that, if $z\in(\lambda-r,\lambda+r)$,
\begin{equation*}
  J_{\widetilde{R}}(z) = I + S_{0,-}(z) \begin{pmatrix}
   0 & e^{-2n \phi(z)} (e^{-\widetilde{V}(z)} - 1) \\ 0 & 0
 \end{pmatrix} S_{0,-}(z)^{-1}
\end{equation*}
tends to the identity matrix exponentially fast as $n\to \infty$.
For  $z\in \partial U(\lambda,r)$, we have $|n\phi(z)|\to \infty$, and by \eqref{eq:S-0-expan} and \eqref{def:JtildeR}, it follows that
\begin{align}
  J_{\widetilde{R}}(z) & =I+S_0(z)(e^{-\widetilde{V}(z)\sigma_3}-I)S_0(z)^{-1} \nonumber \\
  & = I+R_0(z)N_0(z)(e^{-\widetilde{V}(z)\sigma_3}-I)N_0(z)^{-1}R_0(z)^{-1}+O\left(\frac{1}{n^{5/3}(\lambda-1)^3} \right),
  \label{eq: J-R -exp}
\end{align}
where use have also been  made of the face that  $n\phi(z)=O(n(\lambda-1)^{3/2}), N_0(z)=O(1/(\lambda-1)^{1/4})$ as $z\to \lambda$.
Using \eqref{N-0}, we further have
\begin{multline}\label{eq: J-0-exp}
N_0(z)(e^{-\widetilde{V}(z)\sigma_3}-I)N_0(z)^{-1}
\\=(\cosh(\widetilde{V}(z))-1)I+\sinh (\widetilde{V}(z))\frac{I-i\sigma_1}{\sqrt{2}}\left(
                                                                  \begin{array}{cc}
                                                                    0 & i\sqrt{\frac{z+1}{z-1}} \\
                                                                   - i\sqrt{\frac{z-1}{z+1}} & 0 \\
                                                                  \end{array}
                                                                \right)\frac{I+i\sigma_1}{\sqrt{2}}.
\end{multline}
On account of  \eqref{def: V-tile}, we have, for $z \in \partial U(\lambda,r)$
\begin{align}
\cosh(\widetilde{V}(z))-1 & =O\left(\frac{1}{n^{4/3}(\lambda-1)^2}\right), \qquad \textrm{as } n \to \infty,
\\
\sinh(\widetilde{V}(z))&=\frac{nt_1}{2}\frac{1}{z-\lambda} +O\left(\frac{1}{n^{4/3}(\lambda-1)^2}\right), \qquad \textrm{as } n \to \infty, \label{eq:sinhest}
\end{align}
where we also use the condition that all $n^{\frac{2k}{3}+1} t_k$, $k=1,2,\ldots,2m$, are bounded.
Inserting \eqref{eq: J-0-exp}--\eqref{eq:sinhest} into \eqref{eq: J-R -exp} yields
\begin{equation}\label{eq: J-0-est}
N_0(z)(e^{-\widetilde{V}(z)\sigma_3}-I)N_0(z)^{-1}=\widetilde{J}_1(z)+O\left(\frac{1}{n^{4/3}(\lambda-1)^{5/2}}\right),
\end{equation}
 where
\begin{equation}\label{eq: J-1}
\widetilde{J}_1(z)=\frac{nt_1}{2}\frac{1}{z-\lambda}\frac{I-i\sigma_1}{\sqrt{2}} \left(
                                                                  \begin{array}{cc}
                                                                    0 & i\sqrt{\frac{z+1}{z-1}} \\
                                                                   - i\sqrt{\frac{z-1}{z+1}} & 0 \\
                                                                  \end{array}
                                                                \right)\frac{I+i\sigma_1}{\sqrt{2}}.
\end{equation}
Since  $\lambda\in (1+n^{-2/5}, 1+\epsilon_0)$ and $n^\frac{5}{3}t_1$ is bounded, it is readily seen that
\begin{equation}
\widetilde{J}_1(z)=
O \left(\frac{nt_1}{(\lambda-1)^{3/2}} \right)=O \left(\frac{1}{n^{2/3}(\lambda-1)^{3/2}} \right)=O(n^{-1/15}),
\end{equation}
uniformly for $z\in\partial U(\lambda,r)$. Substituting the estimates \eqref{eq:R-0 est} and \eqref{eq: J-0-est} into \eqref{eq: J-R -exp}, we obtain
\begin{equation}\label{eq: J-R-expand}
J_{\widetilde{R}}(z) = I + \widetilde{J}_1(z)+O\left(\frac{1}{n^{4/3}(\lambda-1)^{5/2}}\right),
\end{equation}
where $\widetilde{J}_1(z)$ is defined in \eqref{eq: J-1}. Note that,  the radius $r=\frac{\lambda-1}{2}$ may tend to $0$ as $n \to \infty$, it is then
more convenient to introduce  the centering and scaling of variable $$z\to \lambda +rz ,$$ and rewrite \eqref{eq: J-R-expand} as
\begin{equation}\label{eq: J-R-expand-cent}
J_{\widetilde{R}}( \lambda +rz) = I+J_1( \lambda +rz)+O\left(\frac{1}{n^{4/3}(\lambda-1)^{5/2}}\right),
\end{equation}
where the error bounded is uniform for $|z|=1$.
By a standard analysis of the small norm RH problems \cite{dkmv2},  it follows from \eqref{eq: J-R-expand-cent} that
\begin{equation}\label{eq: R-expand}
\widetilde{R}( \lambda +rz) = I+\widetilde{R}^{(1)}( \lambda +rz)+O\left(\frac{1}{n^{4/3}(\lambda-1)^{5/2}}\right),
\end{equation}
 where the error bounded is uniform for $z$ bounded away from $|z|=1$ and
\begin{align}\label{def:R-hat}
&\widetilde{R}^{(1)}(z)
\nonumber \\
&=\frac{1}{2\pi i} \oint_{\partial U(\lambda,r)}
  \frac{\widetilde {J}_1(w)}{w-z} dw
  \nonumber \\
  &=
\begin{cases}
   \frac{nt_1}{2(z-\lambda)}\frac{I-i\sigma_1}{\sqrt{2}}\left(
                                                                  \begin{array}{cc}
                                                                    0 & i\sqrt{\frac{\lambda+1}{\lambda-1}} \\
                                                                   - i\sqrt{\frac{\lambda-1}{\lambda+1}} & 0 \\
                                                                  \end{array}
                                                                \right)
\frac{I+i\sigma_1}{\sqrt{2}}, &  z\not\in U(\lambda,r), \\
\quad \\
 \frac{nt_1}{2(z-\lambda)}\frac{I-i\sigma_1}{\sqrt{2}}\left(
                                                                  \begin{array}{cc}
                                                                    0 & -i\left(\sqrt{\frac{z+1}{z-1}}-\sqrt{\frac{\lambda+1}{\lambda-1}}\right) \\
                                                                   i\left(\sqrt{\frac{z-1}{z+1}}-\sqrt{\frac{\lambda-1}{\lambda+1}} \right)& 0 \\
                                                                  \end{array}
                                                                \right)
\frac{I+i\sigma_1}{\sqrt{2}}, & z\in U(\lambda,r).
\end{cases}
\end{align}
Moreover, if $z\to \infty$, it follows from \eqref{def:R-hat} that
\begin{equation}\label{eq: R-(1)-expand}
\widetilde{R}^{(1)}(z)= \frac{\widetilde{R}^{(1)}_1}{z} +O(1/z^2),
\end{equation}
where
\begin{equation}\label{def:R11}
\widetilde{R}^{(1)}_1=\frac{nt_1}{2}\frac{I-i\sigma_1}{\sqrt{2}}\left(
                                                                  \begin{array}{cc}
                                                                    0 & i\sqrt{\frac{\lambda+1}{\lambda-1}} \\
                                                                   - i\sqrt{\frac{\lambda-1}{\lambda+1}} & 0 \\
                                                                  \end{array}
                                                                \right)
\frac{I+i\sigma_1}{\sqrt{2}}.\end{equation}
By \eqref{eq: R-expand}, we see that as $z\to \infty$,
\begin{equation}\label{eq: R-expand-infinity}
\widetilde{R}(z) = I+\frac{\widetilde{R}_1}{z}+O(1/z^2).
\end{equation}
Comparing \eqref{eq: R-(1)-expand}, \eqref{eq: R-expand-infinity}, \eqref{def:R11}  with \eqref{eq: R-expand} gives us
\begin{equation}\label{eq: R-1}
\widetilde{R}_1 =\widetilde{R}^{(1)}_1+O\left(\frac{1}{n^{4/3}(\lambda-1)^{3/2}}\right).
\end{equation}
This, together with \eqref{eq: R-(1)-expand}, implies that
\begin{equation}\label{eq:R-1}
(\widetilde{R}_1)_{11}=-\frac{nt_1}{2}\frac{\lambda}{\sqrt{\lambda^2-1}}+O\left(\frac{1}{n^{4/3}(\lambda-1)^{3/2}}\right),  \qquad n\to \infty,
\end{equation}
for $\lambda\in (1+n^{-2/5}, 1+\epsilon_0)$.

The case when $\lambda \geq 1+\epsilon_0$ can be treated in a similar manner.  In this case, the pole is bounded away from the right endpoint of the limiting spectrum. Then, we choose the radius $\delta<\lambda/2$  and $r=(\lambda - 1)/2$ in \eqref{def: S-0} and \eqref{def: R}, respectively, such that $U(\lambda,r) \cap U(1,\delta) = \phi$. The approximation solution is simply given by  $S_0(z)=R_0(z)N_0(z)$ for $z\in U(\lambda,r)$.  After some  direct computations, we conclude that
\begin{equation}\label{eq:R-1-case-2}
(\widetilde{R}_1)_{11}=-\frac{nt_1}{2}\frac{\lambda}{\sqrt{\lambda^2-1}}+O\left(\frac{1}{n^{4/3}\lambda^2}\right), \qquad n\to \infty.
\end{equation}

\subsection{Asymptotics of the partition function}
As a consequence of the asymptotic analysis just performed, we obtain the following asymptotics of the partition function with the aid of
differential identity \eqref{diff identity}.

\begin{lem}\label{thm:Partition function-Asy-outer}
If $\lambda\in[ 1+n^{-2/5},+\infty)$ and
$n^{\frac{2}{3}k+1}t_k$, $k=1,2,\ldots,2m$ are bounded,  we have
\begin{equation}\label{Z-n-asy-out}
Z_n(\lambda)= Z_n^{GUE}\exp\left( \frac{2n^2t_1}{\lambda+\sqrt{\lambda^2-1}}\right)\left(1+O\left(\frac{1}{n^{1/3}\sqrt{\lambda-1}}\right)\right),\qquad n\to\infty,
\end{equation}
where $ Z_n^{GUE}$ is the partition function of GUE given in \eqref{zn-gue}.
\end{lem}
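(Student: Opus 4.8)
The plan is to integrate the differential identity \eqref{diff identity}, namely $\frac{d}{d\lambda}\ln Z_n(\lambda) = 4n(Y_1)_{11}$, using the asymptotics of $(Y_1)_{11}$ extracted from the steepest-descent analysis of this section. First I would trace back the chain of transformations $Y \to \widetilde T \to \widetilde S \to \widetilde R$ to express $Y_1$ in terms of $\widetilde R_1$ and the known expansions of $\widehat g$ near infinity. Recalling that $\widehat g(z) = \log z - \tfrac{z^{-2}}{4} + O(z^{-4})$ for the semicircle density (and that the $e^{-\tfrac n2\sum t_k(z-\lambda)^{-k}\sigma_3}$ factor is absent in \eqref{Y to T-2}), one finds that $(Y_1)_{11}$ equals $(\widetilde R_1)_{11}$ plus an explicit $\lambda$-independent (or classical GUE) contribution. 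Concretely, $\frac{d}{d\lambda}\ln Z_n(\lambda) = 4n(\widetilde R_1)_{11} + \big(\text{term giving }\tfrac{d}{d\lambda}\ln Z_n^{GUE}\big)$; the second piece vanishes since $Z_n^{GUE}$ does not depend on $\lambda$, so effectively $\frac{d}{d\lambda}\ln\!\big(Z_n(\lambda)/Z_n^{GUE}\big) = 4n(\widetilde R_1)_{11}$.

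Next I would substitute the estimates \eqref{eq:R-1} and \eqref{eq:R-1-case-2} for $(\widetilde R_1)_{11}$, which give $(\widetilde R_1)_{11} = -\tfrac{nt_1}{2}\tfrac{\lambda}{\sqrt{\lambda^2-1}} + O\!\big(n^{-4/3}(\lambda-1)^{-3/2}\big)$ on $(1+n^{-2/5}, 1+\epsilon_0)$ and with error $O(n^{-4/3}\lambda^{-2})$ on $[1+\epsilon_0,\infty)$. Thus
\begin{equation}
\frac{d}{d\lambda}\ln\frac{Z_n(\lambda)}{Z_n^{GUE}} = -2n^2 t_1\,\frac{\lambda}{\sqrt{\lambda^2-1}} + 4n\cdot O\!\big(n^{-4/3}(\lambda-1)^{-3/2}\big).
\end{equation}
Since $n^{5/3}t_1$ is bounded, the leading term is $O\!\big(n^{1/3}(\lambda-1)^{-1/2}\big)$ and the remainder is $O\!\big(n^{-1/3}(\lambda-1)^{-3/2}\big)$. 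The key antiderivative is $\int \tfrac{\lambda}{\sqrt{\lambda^2-1}}\,d\lambda = \sqrt{\lambda^2-1}$, and it will be convenient to rewrite this using $\sqrt{\lambda^2-1} = \tfrac12\big(\lambda+\sqrt{\lambda^2-1}\big) - \tfrac12\big(\lambda - \sqrt{\lambda^2-1}\big) = \tfrac12\big(\lambda+\sqrt{\lambda^2-1}\big) - \tfrac{1}{2(\lambda+\sqrt{\lambda^2-1})}$ so that the exponent $-2n^2t_1\sqrt{\lambda^2-1}$ is matched against the target exponent $\tfrac{2n^2t_1}{\lambda+\sqrt{\lambda^2-1}}$ up to the term $-n^2t_1\big(\lambda+\sqrt{\lambda^2-1}\big)$, which must cancel in the integration (see below).

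Then I would integrate from $\lambda$ to $+\infty$. As $\lambda \to +\infty$ we have $Z_n(\lambda)/Z_n^{GUE} \to 1$ because the pole term in the potential \eqref{potential:V} disappears in the limit (with $n,\vec t$ fixed, $t_k(x-\lambda)^{-k}\to 0$ locally uniformly, and one can check domination to pass to the limit in \eqref{def:partiationfunction}); this fixes the constant of integration. Hence
\begin{equation}
\ln\frac{Z_n(\lambda)}{Z_n^{GUE}} = 2n^2 t_1 \int_{\lambda}^{\infty} \Bigl(\frac{t}{\sqrt{t^2-1}} - 1\Bigr)\,dt + \big(\text{error integral}\big),
\end{equation}
where the convergent piece evaluates via $\int_\lambda^\infty(\tfrac{t}{\sqrt{t^2-1}}-1)\,dt = \lim_{T\to\infty}\big(\sqrt{T^2-1}-T\big) - \big(\sqrt{\lambda^2-1}-\lambda\big) = \lambda - \sqrt{\lambda^2-1} = \tfrac{1}{\lambda+\sqrt{\lambda^2-1}}$ — which is precisely the target exponent — and $2n^2t_1\cdot\tfrac{1}{\lambda+\sqrt{\lambda^2-1}}$ appears in \eqref{Z-n-asy-out}. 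The error contribution is $\int_\lambda^\infty O\!\big(n^{-1/3}(t-1)^{-3/2}\big)\,dt = O\!\big(n^{-1/3}(\lambda-1)^{-1/2}\big)$ on the regime near $1$, and is even smaller (uniformly) once $\lambda$ is bounded away from $1$; exponentiating gives the multiplicative factor $1 + O\!\big(n^{-1/3}(\lambda-1)^{-1/2}\big)$ as claimed. The main obstacle I anticipate is twofold: first, correctly bookkeeping the constant and $z^{-1}$-order contributions coming from the $\widehat g$-function through the transformations $Y\to\widetilde T\to\widetilde S\to\widetilde R$ so that the classical GUE partition function emerges cleanly (this requires keeping track of the $l$-constant in \eqref{larg constant} and the $z^{-2}$ coefficient of $\widehat g$, exactly as in the standard Hermite RH analysis, cf. \cite{dkmv2}); and second, justifying the interchange of the integral over $\lambda$ with the asymptotic error estimates, i.e.\ confirming that the error bounds in \eqref{eq:R-1} and \eqref{eq:R-1-case-2} are uniform enough on the whole ray $[1+n^{-2/5},\infty)$ to be integrated and still yield the stated uniform error — for this one splits $\int_\lambda^\infty$ at $1+\epsilon_0$ and uses the two different estimates on the two pieces, noting that the dominant contribution to the error always comes from the endpoint near $\lambda$.
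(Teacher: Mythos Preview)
Your overall approach is the same as the paper's: use the differential identity \eqref{diff identity}, trace back the transformations to express $(Y_1)_{11}$ in terms of $(\widetilde R_1)_{11}$, substitute \eqref{eq:R-1}--\eqref{eq:R-1-case-2}, and integrate from $\lambda$ to $+\infty$ using $Z_n(\lambda)\to Z_n^{GUE}$. However, there is a genuine gap in your bookkeeping of the trace-back, and it is exactly the obstacle you flag at the end without resolving.

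You correctly observe that the factor $e^{-\frac n2\sum_k t_k(z-\lambda)^{-k}\sigma_3}$ is \emph{absent} from the first transformation \eqref{Y to T-2}. But it \emph{reappears} in the final transformation: by \eqref{def: R}, for $z$ outside $U(\lambda,r)$ one has $\widetilde S(z)=\widetilde R(z)S_0(z)e^{\widetilde V(z)\sigma_3}$, and hence tracing back (cf.\ \eqref{Y tracing back-1})
\[
Y(z)=e^{\frac12 nl\sigma_3}\,\widetilde R(z)\,S_0(z)\,e^{\widetilde V(z)\sigma_3}\,e^{n\widehat g(z)\sigma_3-\frac12 nl\sigma_3}.
\]
Expanding $e^{\widetilde V(z)\sigma_3}=I+\tfrac{nt_1}{2z}\sigma_3+O(z^{-2})$ (see \eqref{eq: expo-exp}) contributes an additional $\tfrac{nt_1}{2}$ to $(Y_1)_{11}$, so that the paper obtains
\[
(Y_1)_{11}=(\widetilde R_1)_{11}+(\mathcal S_1)_{11}+\tfrac{nt_1}{2}
         =\tfrac{nt_1}{2}\Bigl(1-\tfrac{\lambda}{\sqrt{\lambda^2-1}}\Bigr)+O\!\Bigl(\tfrac{1}{n^{4/3}(\lambda-1)^{3/2}}\Bigr),
\]
with $(\mathcal S_1)_{11}=0$ coming from the vanishing sub-leading coefficient of the monic Hermite polynomials. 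Your formula $\frac{d}{d\lambda}\ln(Z_n/Z_n^{GUE})=-2n^2t_1\,\lambda/\sqrt{\lambda^2-1}$ is therefore missing the constant term $+2n^2t_1$. This is not cosmetic: without it the integral $\int_\lambda^\infty$ diverges, and the ``$-1$'' you insert in the displayed integral has no source in your computation. The cancellation you say ``must'' happen is supplied precisely by this $e^{\widetilde V(z)\sigma_3}$ contribution; once you include it, your integration is $\int_\lambda^\infty\bigl(\tfrac{t}{\sqrt{t^2-1}}-1\bigr)dt=\lambda-\sqrt{\lambda^2-1}=(\lambda+\sqrt{\lambda^2-1})^{-1}$ and the proof goes through exactly as you outline.
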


\begin{proof}
From the differential identity \eqref{diff identity}, we need to compute the residue term $Y_1$ in \eqref{Y-infinity}. By tracing back the sequence of transformations in \eqref{Y to T-2}, \eqref{T-S-1} and \eqref{def: R}, it follows that
\begin{equation} \label{Y tracing back-1}
Y(z)=e^{\frac 12 nl\sigma_3}\widetilde{R}(z)S_0(z)e^{\widetilde{V}(z)\sigma_3}e^{n \hat{g}(z)\sigma_3-\frac 12 nl \sigma_3}, \qquad z\not \in U(\lambda,r).
\end{equation}
As $z \to \infty$, we have
\begin{equation}\label{eq: expo-exp}
e^{\widetilde{V}(z)\sigma_3}=I+\frac{nt_1}{2}\frac{1}{z} \sigma_3 +O(1/z^2).
\end{equation}
For the large $z$ behavior of $S_0(z)e^{ \hat{g}(z) \sigma_3}$, we note that, when $t_k=0$, the polynomials $\pi_n(x)$ in \eqref{eq:Y} reduce to the classical Hermite polynomials, of which the sub-leading coefficient vanishes. This gives us
\begin{equation} \label{eq: Y-1-0}
  S_0(z)e^{n \hat{g}(z) \sigma_3} = I + \frac{\mathcal{S}_1}{z} + O(1/z^2)
\end{equation}
with $(\mathcal{S}_1)_{11} = 0$.
It then follows from  \eqref{eq:R-1}--\eqref{eq: Y-1-0} that
\begin{align}
  (Y_1)_{11}& =  (\widetilde{R}_1)_{11} + (\mathcal{S}_1)_{11} +\frac{nt_1}{2} \nonumber \\
  &=\frac{nt_1}{2} \left( 1-\frac{\lambda}{\sqrt{\lambda^2-1}} \right)+O\left(\frac{1}{n^{4/3}(\lambda-1)^{3/2}}\right), \label{Y1}
\end{align}
where the error bound is uniform for $\lambda\in[1+ n^{-2/5}, \infty)$.
By the differential identity \eqref{diff identity}, we obtain
\begin{equation}\label{log d partition-asy }
\frac{d}{d\lambda} \ln Z_n(\lambda)= 2n^2t_1 \left(1-\frac{\lambda}{\sqrt{\lambda^2-1}} \right)+O\left(\frac{1}{n^{1/3}(\lambda-1)^{3/2}}\right).
\end{equation}
Integrating the above equation from $\lambda$ to $+\infty$ gives us
\begin{equation}\label{ partition-asy }
 \ln Z_n(\lambda)- \ln Z_n^{GUE}=\frac{2n^2t_1}{\lambda+\sqrt{\lambda^2-1}}+O\left(\frac{1}{n^{1/3}\sqrt{\lambda-1}}\right),
 \end{equation}
where $Z_n^{GUE}$ is the partition function for GUE as defined in \eqref{zn-gue} and we have also made use of the fact that $Z_n(\lambda) \to Z_n^{GUE}$ as $\lambda\to\infty$.

This completes the proof of Lemma \ref{thm:Partition function-Asy-outer}.
\end{proof}

We are now ready to prove Theorem \ref{thm:partition asymptotics}.
\section{Proof of  Theorem \ref{thm:partition asymptotics}} \label{sec:pfasyparti}
We integrate both sides of \eqref{log-Z-n-asy} with respect to $\lambda$ and obtain
\begin{align}
    & \frac{d}{d\lambda}\ln Z_n(\lambda) - \frac{d}{d\lambda}\ln Z_n(\lambda)\biggl|_{\lambda = \lambda_0}  \nonumber \\
    & \qquad  = -\int_{\lambda_0}^\lambda \left(n^{2/3} f'(\xi) \right)^2 \biggl( a'_1(n^{2/3}f(\xi)) -\frac{n^{2/3}f(\xi)}{2} \biggr) d \xi + O(n^{1/3}) ,
\end{align}
where both $\lambda$ and  $\lambda_0$ belong to $ (1+cn^{-2/3}, 1+dn^{-1/3} )$ such that the above $O(n^{1/3})$ term holds uniformly. Applying integration by parts in the above integral gives us
\begin{align}
&-\int_{\lambda_0}^\lambda \left(n^{2/3} f'(\xi) \right)^2 \biggl( a'_1(n^{2/3}f(\xi)) -\frac{n^{2/3}f(\xi)}{2} \biggr) d \xi\nonumber\\
 &\qquad = -n^{2/3}f'(\xi) \left(a_1(n^{2/3}f(\xi))-\frac{n^{4/3}f(\xi)^2}{4} \right) \biggr|_{\xi=\lambda_0}^\lambda \nonumber\\
 & \qquad \qquad + n^{2/3} \int_{\lambda_0}^\lambda  f''(\xi) \left(a_1(n^{2/3}f(\xi))-\frac{n^{4/3}f(\xi)^2}{4} \right) d\xi.\nonumber
\end{align}
From the definition of $f(z)$ in \eqref{def:f} and the asymptotics of $a_1(s)$ in \eqref{eq:a 1 asy}, one can see that the integrand in the above integral is uniformly bounded for $\xi \in [\lambda, \lambda_0]$. As $|\lambda - \lambda_0| = O(n^{-1/3})$, the second term in the above formula is of order $O(n^{1/3})$ uniformly for $\lambda, \lambda_0 \in  (1+cn^{-2/3}, 1+dn^{-1/3} )$. Then, the above two formulas give us
\begin{equation}
  \frac{d}{d\lambda}\ln Z_n(\lambda) = -n^{2/3}f'(\lambda) \left(a_1(n^{2/3}f(\lambda))-\frac{n^{4/3}f(\lambda)^2}{4} \right)+ d_0 +O(n^{1/3}),
\end{equation}
where $d_0$ is the following $\lambda$-independent constant
\begin{equation}
  d_0 =  \frac{d}{d\lambda}\ln Z_n(\lambda)\biggl|_{\lambda = \lambda_0} + n^{2/3}f'(\lambda_0) \left(a_1(n^{2/3}f(\lambda_0))-\frac{n^{4/3}f(\lambda_0)^2}{4} \right).
\end{equation}
We may choose $\lambda_0=1+n^{-2/5}$, then \eqref{log d partition-asy } gives us
\begin{align*}
  \frac{d}{d\lambda}\ln Z_n(\lambda)\biggl|_{\lambda = \lambda_0} & = 2n^2 t_1 \left(1 - \frac{n^{1/5} + n^{-1/5}}{\sqrt{2+n^{-2/5}}} \right) +O(n^{4/15}) \\
  & = -\sqrt{2} \, \tau_1 n^{8/15} + O(n^{1/3}),
\end{align*}
where use has been made of the condition $t_1 = \tau_1  n^{-5/3} + O(n^{-7/3})$; see \eqref{eq:tkscaling}. Recalling \eqref{def:f}, we get $f(\lambda_0) = 2 n^{-2/5} +O(n^{-4/5})$ and $f'(\lambda_0) = 2 +O(n^{-2/5})$. Then, we have from the asymptotics of $a_1(s)$ in \eqref{eq:a 1 asy}
\begin{align*}
   n^{2/3}f'(\lambda_0) \left(a_1(n^{2/3}f(\lambda_0))-\frac{n^{4/3}f(\lambda_0)^2}{4} \right) 
   = \sqrt{2} \, \tau_1 n^{8/15} + O(n^{4/15}).
\end{align*}
The above three formulas imply
\begin{equation}
  d_0=O(n^{1/3}).
\end{equation}
Hence, we have
\begin{equation}\label{eq:Hakel-deff-1}
     \frac{d}{d\lambda}\ln Z_n(\lambda)=- n^{2/3}f'(\lambda) \left(a_1(n^{2/3}f(\lambda))-\frac{n^{4/3}f(\lambda)^2}{4} \right)+O(n^{1/3}),
\end{equation}
uniformly for $1+cn^{-2/3}<\lambda<1+dn^{-1/3}$. A further integration of \eqref{eq:Hakel-deff-1} on both sides with respect to $\lambda$ then gives us
\begin{equation}\label{Z-n-asy}
Z_n(\lambda) = Z_n(\lambda_0)\exp\left[\int_{n^{2/3}f(\lambda)}^{n^{2/3}f(\lambda_0)} \left(a_1(t)-\frac{t^2}{4} \right) dt\right](1+o(1)),
\end{equation}
where the error bound is uniform for $1+cn^{-2/3}<\lambda,\lambda_0<1+d_n$ with $0<d_n = o(n^{-1/3})$.

We may choose $\lambda_0=1+n^{-2/5}$ again and estimate the two terms on the right hand side of \eqref{Z-n-asy}. For $Z_n(\lambda_0)$, it follows from \eqref{Z-n-asy-out} that
\begin{align}\label{Z-n-0-asy}
Z_n(\lambda_0) & = Z_n^{GUE} \exp \left( \frac{2n^2t_1}{\lambda_0+\sqrt{\lambda_0^2-1}}\right)\left(1+O\left( \frac{1}{n^{1/3} \sqrt{\lambda_0-1}}\right)\right),
\nonumber
\\
& = Z_n^{GUE} \exp\left(2n^2t_1(1-\sqrt{2(\lambda_0-1)})\right)(1+O(n^{-1/15})),
\end{align}
where we have made use of the fact that $2n^2t_1=O(n^{1/3})$; see \eqref{eq:tkasy}. For the integral in \eqref{Z-n-asy}, we observe from \eqref{eq:lambdascalingparti} and \eqref{eq:fnear1} that
$$n^{2/3}f(\lambda)=s+O(n^{-2/3}).$$ Thus,
\begin{align}\label{eq:Integral}
&\int_{n^{2/3}f(\lambda)}^{n^{2/3}f(\lambda_0)} \left(a_1(t)-\frac{t^2}{4} \right) dt
\nonumber
\\
&= \int^{n^{2/3}f(\lambda_0)}_{s} \left(a_1(t)-\frac{t^2}{4} \right) dt +O(n^{-2/3})
\nonumber\\
&=\int_{s}^{n^{2/3}f(\lambda_0)} \left(a_1(t)-\frac{t^2}{4} -\frac{\tau_1}{\sqrt{t-s}}\right) dt +2\tau_1\sqrt{n^{2/3}f(\lambda_0)-s}+O(n^{-2/3}).
\end{align}
Note that
 \begin{equation}\label{eq:elem}
2n^2t_1(1-\sqrt{2(\lambda_0-1)}+2\tau_1\sqrt{n^{2/3}f(\lambda_0)-s}=2n^{1/3}\tau_1+o(1).
\end{equation}
This, together with the asymptotic behavior of $a_1$ given in \eqref{eq:a 1 asy}, implies that
\begin{multline}\label{eq:Integral-total}
\int_{s}^{n^{2/3}f(\lambda_0)} \left(a_1(t)-\frac{t^2}{4} -\frac{\tau_1}{\sqrt{t-s}}\right) dt
\\
=\int_{s}^{\infty} \left(a_1(t)-\frac{t^2}{4} -\frac{\tau_1}{\sqrt{t-s}}\right) dt+O\left(\frac{1}{\sqrt{n^{2/3}f(\lambda_0)}}\right),
\end{multline}
where the error bound  $O(1/\sqrt{n^{2/3}f(\lambda_0)}=O(n^{-2/15})=o(1)$.  On account of the relation \eqref{a-1-b-1} between $a_1$ and $b_1$, we obtain from the integration by parts that
\begin{equation}\label{eq:Integration by part}
\int_{s}^{\infty} \left(a_1(t)-\frac{t^2}{4} -\frac{\tau_1}{\sqrt{t-s}}\right) dt= -\int_{s}^{\infty} \left(b_1(t)(t-s) +\frac{\tau_1}{2\sqrt{t-s}}\right) dt,
\end{equation}
Finally, substituting \eqref{Z-n-0-asy}, \eqref{eq:Integral-total} and \eqref{eq:Integration by part} into \eqref{Z-n-asy} gives us \eqref{eq:zn-asy}.

This completes the proof of Theorem \ref{thm:partition asymptotics}.
\qed


\subsection*{Acknowledgment}
The work of Dan Dai was partially supported by grants from the Research Grants Council of the Hong Kong Special Administrative Region, China (Project No. CityU 11300115, CityU 11303016), and by grants from City University of Hong Kong (Project No. 7004864, 7005032). The work of Shuai-Xia Xu was partially supported by National Natural Science Foundation of China under grant number 11571376, GuangDong Natural Science Foundation under grant number 2014A030313176.  The work of Lun Zhang was partially supported by National Natural Science Foundation of China under grant numbers 11822104 and 11501120, by The Program for Professor of Special Appointment (Eastern Scholar) at Shanghai Institutions of Higher Learning, and by Grant EZH1411513 from Fudan University.

\end{document}